\renewcommand{\paragraph}[1]{\vspace{1mm}\noindent{\bf #1}}
\newcommand{\eat}[1]{}
\newcommand{\A}{{\cal A}}
\newcommand{\C}{C}
\newcommand{\h}{h}
\newcommand{\M}{{\cal M}}
\newcommand{\Md}{m}
\newcommand{\export}{{\sc export}\xspace}
\newcommand{\Stop}{{Stop}}
\newcommand{\lr}{\mbox{{\bf l}}}
\newcommand{\nbrs}{nbrs}
\newcommand{\set}{GT}
\newcommand{\TC}{C}
\newcommand{\nnodes}{\ensuremath{|V|}}
\newcommand{\nedges}{\ensuremath{|E|}}
\newcommand{\vr}[1]{{#1}}
\newcommand{\Hma}{Hash-\-Min\xspace}
\newcommand{\Ha}{{{Hash-\-to-\-All}}\xspace}
\newcommand{\Hm}{Hash-\-to-\-Min\xspace}
\newcommand{\Hcm}{{{Hash-\-Greater-\-to-\-Min}}\xspace}
\newcommand{\ha}{{\cal H}_{all}\xspace}
\newcommand{\hm}{{\cal H}_{min}\xspace}
\newcommand{\core}[1]{\TC_{\ge #1}}
\newcommand{\red}[1]{\textcolor{red}{#1}}
\newcommand{\ashwinnote}[1]{\red{Ashwin: #1}}
\newtheorem{definition}{Definition}[section]
\newtheorem{proposition}[definition]{Proposition}
\newtheorem{lemma}[definition]{Lemma}
\newtheorem{theorem}[definition]{Theorem}
\newtheorem{example}[definition]{Example}
\newcommand{\squishlist}{
   \begin{list}{$\bullet$}
    { \setlength{\itemsep}{-1pt}      \setlength{\parsep}{3pt}
      \setlength{\topsep}{1pt}       \setlength{\partopsep}{0pt}
      \setlength{\leftmargin}{1.5em} \setlength{\labelwidth}{1em}
      \setlength{\labelsep}{0.5em} } }
\newcommand{\squishend}{
    \end{list}  }
\title{Finding Connected Components in Map-Reduce in Logarithmic Rounds}
\author{
Vibhor Rastogi \hspace{5mm} Ashwin Machanavajjhala \hspace{5mm} Laukik Chitnis
\hspace{5mm}Anish Das Sarma \\  
       {\{vibhor.rastogi, ashwin.machanavajjhala, laukik, anish.dassarma\}@gmail.com}
}
\begin{document}
\maketitle

\begin{abstract}
Given a large graph $G = (V,E)$ with millions of nodes and edges, how do we compute its connected components efficiently?  Recent work addresses this problem in map-reduce, where a fundamental trade-off exists between the number of  map-reduce rounds and the communication of each round. Denoting $d$ the diameter of the graph, and $n$ the number of nodes in the largest component, all prior techniques for map-reduce either require a linear, $\Theta(d)$, number of rounds, or a quadratic, $\Theta(n\nnodes + \nedges)$, communication per round.

We propose here two efficient map-reduce algorithms: {\em(i)} \Hcm, which is a randomized algorithm based on PRAM techniques, requiring
$O(\log{n})$ rounds and $O(\nnodes+\nedges)$ communication per round, and
{\em (ii)} \Hm, which is a novel algorithm, provably finishing in $O(\log{n})$ iterations for path graphs. The proof
technique used for \Hm is novel, but not tight, and it is actually faster than \Hcm in practice.
We conjecture that it requires $2 \log d$ rounds and $3(\nnodes+\nedges)$ communication per round, as demonstrated
in our experiments. Using secondary sorting, a standard map-reduce feature, we scale \Hm to graphs with very large connected
components. 

Our techniques for connected components can be applied to clustering as well. We propose a novel algorithm for agglomerative
single linkage clustering in map-reduce. This is the first map-reduce algorithm for clustering in at most $O(\log n)$ rounds,
where $n$ is the size of the largest cluster. We show the effectiveness of all our algorithms through detailed experiments on
large synthetic as well as real-world datasets.

\eat{Given a large graph $G = (V,E)$ with millions of nodes and edges, how do we compute its connected components efficiently?  Recent work addresses this problem in map-reduce, where a fundamental trade-off exists between the number of  map-reduce rounds and the communication of each round. Denoting $d$ the diameter of the graph, and $n$ the number of nodes in the largest component, all prior techniques either require $d$ rounds, or require $\Theta(n\nnodes + \nedges)$ communication per round. We propose two randomized map-reduce algorithms -- {\em (i)} \Hcm, which provably requires at most $3\log{n}$ rounds  with high probability, and at most $2(\nnodes + \nedges)$ communication per round, and {\em (ii)} \Hm, which has a worse theoretical complexity, but  in practice completes in at most $2 \log d$ rounds and $3(\nnodes + \nedges)$ communication per rounds.

Our techniques for connected components can be applied to clustering as well. We propose a novel algorithm for agglomerative single linkage clustering in map-reduce.   While there have been PRAM-based algorithms proposed in the past for connected components, we are the first to propose a direct Map-Reduce algorithm for clustering in at most $O(\log n)$ rounds, where $n$ is the size of the largest cluster. We show the effectiveness of all our algorithms through detailed experiments on large synthetic as well as real-world datasets.
}
\end{abstract}

\section{Introduction}
\label{sec:intro}

Given a large graph $G = (V,E)$ with millions of nodes and edges, how do we compute its connected components efficiently?
With the proliferation of large databases of linked data, it has become very important to scale to large graphs.
Examples of large datasets include the graph of webpages, where edges are hyperlinks between documents, social networks that
link entities like people, and Linked Open Data\footnote{http://linkeddata.org/} that represents a collection of linked
structured entities.
The problem of finding connected components on such graphs, and the related
problem of {\em undirected $s$-$t$ connectivity}~(USTCON~\cite{lewisP82:symmetric}) that checks whether
two nodes $s$ and $t$ are connected, are fundamental as they are basic building
blocks for more complex graph analyses, like clustering.

The number of vertices $\nnodes$ and edges $\nedges$ in these graphs is very large. Moreover, such graphs often arise as intermediate outputs of large batch-processing tasks (e.g., clustering Web pages and entity resolution), thus requiring us to design algorithms in a distributed setting. Map-reduce \cite{Dean2008:MR} has become a very popular choice for distributed data processing.
In map-reduce, there are two critical metrics to be optimized -- number of map-reduce rounds, since each additional job
incurs significant running time overhead because of synchronization and congestion issues, and communication per round,
since this determines the size of the intermediate data.

There has been prior work on finding connected components iteratively
in map-reduce, and a fundamental trade-off exists between the number
of rounds and the communication per round.  Starting from small clusters, these techniques iteratively expand existing clusters,
by either adding adjacent one-hop graph neighbors, or by merging existing overlapping
clusters. The former kind~\cite{twiddle,pegasus,mpi} require $\Theta(d)$ map-reduce rounds for a graph with diameter $d$, while the latter~\cite{foto11:recursive} require a larger,  $\Theta(n\nnodes+\nedges)$, computation per round, with $n$ being the number of nodes in the largest component.

More efficient $O(\log{n})$ time PRAM algorithms have been proposed for computing connected components. 
While theoretical results simulating $O(\log{n})$ PRAM algorithms in map-reduce using $O(\log{n})$ rounds exist~\cite{karloffSV10:mapreduce}, the PRAM algorithms for connected components have not yet been ported to a
practical and efficient map-reduce algorithm. (See Sec.~\ref{sec:PRAM} for a
more detailed description)


In this paper, we present two new map-reduce algorithms for computing connected
components. The first algorithm, called \Hcm, is an efficient map-reduce
implementation of existing PRAM algorithms~\cite{kargerNP99:cc,Shiloach82}, and provably requires at most $3\log{n}$ rounds with high probability, and a per round communication cost\footnote{Measured as total number of $c$-bit messages communicated, where $c$ is number of bits to represent a single
node in the graph.} of at most $2(\nnodes + \nedges)$.  The second algorithm, called \Hm, is novel, and provably
finishes in $O(\log{n})$ rounds for path graphs. The proof technique used
for \Hm is novel, but not tight, and our experiments show that it requires at most $2 \log
d$ rounds and $3(\nnodes + \nedges)$ communication per rounds.

Both of our map-reduce algorithms iteratively merge overlapping clusters
to compute connected components. Low communication cost is achieved by ensuring
that a single cluster is replicated exactly once, using tricks like
pointer-doubling, commonly used in the PRAM literature. The more intricate problem is
processing graphs for which connected components are so big that either
(i) they do not fit in the memory of a single machine, and hence cause failures, or (ii) they result
in heavy data skew with some clusters being small, while others being large.

The above problems mean that we need to merge overlapping clusters, i.e. remove duplicate nodes occurring in multiple clusters, without materializing entire clusters in memory.
Using \Hm, we solve this problem by maintaining each cluster as key-value pairs, where the key is a common cluster id and values are nodes. Moreover, the values are kept sorted (in lexicographic order), using a map-reduce capability called secondary-sorting, which incurs no extra computation cost. Intuitively, when clusters are merged by the algorithm, mappers individually get values~(i.e, nodes) for a key, sort them, and send them to the reducer for that key. Then the reducer  gets the `merge-sorted' list of all values, corresponding to nodes from all clusters needing to be merged. In a single scan, the reducer then removes any duplicates from the sorted list, without materializing entire clusters in memory.


\eat{
Since, in the worst-case, diameter $d$ can be as large as $n$ for a path
graph, a reduction of rounds from $d$ to $3\log{n}$ (or $2 \log d$) is a significant improvement.
While diameters are not as large as $n$ for real-world large-scale graphs,  in our experiments we observe
a significant improvement for the runtime compared with existing algorithms due to reduction in number of rounds.
}

We also present two novel map-reduce algorithms for single-linkage agglomerative
clustering using similar ideas. One using \Ha that provably completes in $O(\log
n)$ map-reduce rounds, and at most $O(n\nnodes + \nedges)$ communication per round, and other using \Hm that we conjecture completes in $O(\log{d})$ map-reduce rounds, and at
most $O(\nnodes + \nedges)$ communication per round. We believe that these are the first Map-Reduce algorithm for
single linkage clustering that finish in $o(n)$ rounds.


All our algorithms can be easily adapted to the Bulk Synchronous Parallel paradigm
used by recent distributed graph processing systems like Pregel \cite{Malewicz2010:Pregel} and
Giraph \cite{Ching2010:Giraph}. We choose to focus on the
map-reduce setting, \vr{since it is more impacted by a reduction in number of iterations, thus more readily showing the
gains brought by our algorithms}  (see Sections~\ref{sec:relatedwork} and \ref{sec:bspexp} for a more detailed discussion).

\eat{Many prior techniques can be shown to be instantiations of this framework for specific hashing and aggregation functions. Techniques like~\cite{twiddle,pegasus,mpi} can be thought of sending small messages using the hashing function to expand connected components by the one-hop neighbors of nodes in the connected components. Thus they have low complexity per round of $O(\nnodes + \nedges)$, as small messages are used, but require exactly $d$ map-reduce rounds, as expansions occurs by one-hop neighbors. An alternative strategy is to send entire connected components through the hashing function, and expand them by merging one or more of them using the aggregation function. This has a higher communication complexity of $O(n\nnodes + \nedges)$), as larger messages are used, but takes only $\log{d}$ number of rounds, as expansion occur by merging with other connected components.

Our algorithms for connected components (\Hm and \Hcm) and single linkage clustering combine the hashing strategies mentioned above by sending the clusters to only one reducer, but also sending some suitably chosen representatives to other clusters. This strategy gives us best of both worlds, i.e., resulting in low communication complexity along with a small number of map-reduce rounds.

\begin{table}[t]
\centering
\begin{tabular}{ |c|c|c| }
\hline
Symbol & Semantics \\
\hline
$\nnodes$ & number of vertices\\
$\nedges$ & number of edges \\
$n$ & size of the largest connected component \\
$d$ & diameter of the graph\\
\hline
\end{tabular}
\caption{Notation}
\end{table}
}

\paragraph{Contributions and Outline:}

\squishlist
\item We propose two novel algorithms for connected components -- {\em (i)} \Hcm,
which provably requires at most $3\log{n}$ rounds with high probability, and at most $2(\nnodes + \nedges)$ communication per round, and {\em (ii)} \Hm, which we prove requires at most $4\log{n}$ rounds on path graphs, and requires $2 \log d$ rounds and $3(\nnodes+\nedges)$ communication per round in practice. (Section~\ref{sec:algos})
\item While \Hcm requires connected components to fit in a single machine's memory, we propose a robust implementation of \Hm that scales with arbitrarily large connected components. We also describe extensions to \Hm for load balancing, and show that on large social network graphs, for which \Hcm runs out of memory, \Hm still works efficiently. (Section~\ref{sec:implementation})
\item We also present two algorithms for single linkage agglomerative clustering using our framework: one using \Ha that provably finishes in $O(\log n)$ map-reduce rounds, and at most $O(n\nnodes + \nedges)$ communication per round, and the other using \Hm that we again conjecture finishes in $O(\log d)$ map-reduce rounds, and at most $O(\nnodes + \nedges)$ communication per round. (Section~\ref{sec:single_linkage}) 
\item We present detailed experimental results evaluating our algorithms for connected components and clustering  and compare them with previously proposed algorithms on multiple real-world datasets. (Section~\ref{sec:experiments})
\squishend

We present related work  in Sec.~\ref{sec:relatedwork}, followed by algorithm and experiment sections, and then conclude in Sec.~\ref{sec:conclusions}.

\eat{
Clustering~\cite{} is one of the most studied problems in Computer Science, with wide-ranging applications in entity de-duplication,  social networking, inference in graphical models, topical or structural  clustering of web-pages, spam detection, and areas outside Computer Science~\cite{}. While each application may use a different clustering algorithm, a common feature across many of these applications is that they  deal with an extremely large amount of data, rendering centralized (single-machine) clustering impractical. 
This paper addresses the challenge of performing {\em distributed clustering}, in a cloud environment.

In contrast, we present the first {\em generic} distributed clustering framework that applies to a large class of clustering algorithms, namely, transitive closure clustering, single-linkage and non-single-linkage agglomerative clustering, correlation clustering, and clustering for distributed inference. We develop techniques for distributed clustering using map-reduce. Intuitively.
We broadly consider the following four types of clustering algorithms:

\begin{enumerate}
\item {\bf Agglomerative clustering:} Agglomerative clustering algorithms build clusters in a {\em bottom-up} fashion, starting with individual elements in singleton clusters. Thereafter, based on a predefined distance metric, the closest pairs of clusters are merged, until a stopping criterion is reached. We consider {\em single-linkage} agglomerative cluster, where the distance between two clusters is the edge distance of the closest pair of elements, as well as {\em non-single linkage} clustering, which uses any arbitrary distance metric.

\eat{Our mapreduce will include a map phase which uses LSH to hash nodes to preserve the distance metric. It then divides the space of LSH into intervals and maps all hashes in the interval together. Then the reduce step will merge the closest pair. We might have to unmerge things if we later find a closer solution. Exact details need to be worked out.}

\item {\bf Connected-components clustering:} Given a graph with un-weghted edges, we are simply interested in the connected components of the graph.

\item {\bf Correlation clustering:} There are positively/negatively weighted edges. Goal is to find clusterings that have large positive edges within clusters and small negative edges that cross clusters. Depending on the weights allowed, there are centralized algorithms that achieve good approximation ratios. Note that correlation clustering is a generalization of connected components clustering.
\eat{This is a strict generalization of transitive closure � for which we have two algos that I call: hash-to-all and hash-to-min. Can we obtain a mapreduce implementation for the general problem?}

\item {\bf Clustering for inference in graphical models:} Finally, we also consider performing inference in large graphical networks, which requires a localized version of clustering: the graph is constructed with random variables as nodes, and edges depicting correlations between variables. Inference involves traversing connected components in the graphical network in a fashion such that all intermediate clusters are connected~\cite{}.

\end{enumerate}

The two important metrics in comparing distributed algorithms for clustering are: {\em number of map-reduce steps} and {\em communication cost}. The number of map-reduce steps is a measure of the latency of the algorithm, and the communication cost determines the memory requirement on mappers and reducers. We shall see that our general framework for clustering improves upon the best known previous algorithms, both in terms of number of map-reduce steps and in terms of the communication cost. A particularly salient feature of our framework is that it works very efficiently on graphs with large diameter, while most previous approaches have been tailored towards graphs with low diameter (since the web graph has a low diameter). We note that several applications such as traffic networks~\cite{}, social graphs~\cite{}, and graphical models~\cite{}  involved clustering in graphs with high diameters.

\begin{table*}[t]
\centering
\begin{tabular}{ |c|c|c|c|c| }
  \hline
Technique &  Hashing & Merging &  \multicolumn{2}{|c|}{Complexity}  \\
  \cline{4-5}
    &&& of MR Steps & Communication per MR step \\
  \hline
\multirow{2}{*}{Transitive Closure} & $\ha$ & $\M$ & $O(\log{d})$ &  $  O(n^2N  + M)$\\
\cline{2-5}
  & $\hm$ & $\M$ & $O(\log{d})$  &  $  O(N  + M)$\\
  \hline
\multirow{2}{*}{\parbox[c]{4cm}{\centering Single-linkage \\ Agglomerative Clustering}} & $\ha$ & $\M$ & $O(\log{n})$ &  $  O(n^2N  + M)$\\
\cline{2-5}
  & $\hm$ & $\M$ & $O(\log{n})$  &  $  O(N  + M)$\\
  \hline
  \parbox[c]{4cm}{\centering Other-linkage \\ Agglomerative Clustering} & $\ha$ & $\M$ & $O(\log{n})$ &  $  O(n^2N  + M)$\\
  \hline
\multirow{2}{*}{Correlation Clustering} & $\ha$ & $\M$ & $O(n)$ &  $  O(n^2N  + M)$\\
\cline{2-5}
  & $\hm$ & $\M$ & $O(n)$  &  $  O(N  + M)$\\
  \hline
Distributed Inference & $\ha$ & $\M$ & $O(\log{d})$ &  $  O(n^2N  + M)$\\
  \hline
\end{tabular}

\caption{Techniques supported by our distributed framework and their complexity. }
\end{table*}

\begin{table}[t]
\begin{tabular}{ |c|c| }
  \hline
 Symbol & Semantics \\
 \hline
 N & \# of nodes in the graph \\
 M & \# of edges in the graph \\
 n & \# of nodes in the largest cluster \\
 d & longest diameter of any cluster \\
 \hline
\end{tabular}
\caption{Notation}
\end{table}

\subsection{Contributions and Outline}

The main contributions of this paper are as follows:

\begin{enumerate}

\item We propose a generic map-reduce framework for graph clustering algorithms, with a map phase that corresponds to hashing clusters to appropriate machines, and reduce phase performing local clustering. (Section~\ref{sec:generic})
\item Our generic clustering approach is instantiated with specific hashing, local computation, and stopping criterion. We enumerate these three steps for a variety of clustering algorithms in Sections~\ref{sec:hashing},~\ref{sec:local}, and~\ref{sec:stopping} respectively.
\item We theoretically analyze our distributed clustering algorithms, and present complexity results in Section~\ref{sec:results}. We show that our algorithms significantly beat previous distributed clustering algorithms in terms of number of map-reduce steps and communication cost.
\item We present detailed experimental results evaluating our clustering algorithms and compare them with previously proposed algorithms on multiple real-world datasets. (Section~\ref{sec:experiments})
\end{enumerate}

Related work is presented in Section~\ref{sec:relatedwork}, and we conclude with future work in Section~\ref{sec:conclusions}.
}

\section{Related Work}
\label{sec:relatedwork}

\begin{table}[t]
\centering
\begin{tabular}{ |c|c|c| }
  \hline
Name & \# of steps & Communication \\ \hline
   Pegasus~\cite{pegasus} & $O(d)$  &  ${O(\nnodes+\nedges)}$  \\ 
  Zones~\cite{twiddle} & $O(d)$ & ${O(\nnodes+\nedges)}$  \\ 
  L Datalog~\cite{foto11:recursive} & $O(d)$ & $ O(n\nnodes+\nedges)$ \\ 
  NL Datalog~\cite{foto11:recursive} & $O(\log{d})$ & $O(n\nnodes+\nedges)$  \\ 
  PRAM~\cite{Shiloach82,Reif85,gazit91:cc,kargerNP99:cc,karloffSV10:mapreduce}& $O(\log{n})$ &
 shared memory\footnote{\cite{karloffSV10:mapreduce} simulates shared memory by having a reducer per each element} \\ \hline
  {\bf \Hcm} & $3\log{n}$ & $2(\nnodes+\nedges)$ \\  \hline
\end{tabular}
\caption{Complexity comparison with related work: $n= \#$ of nodes in largest component, and $d =$ graph diameter}
\label{table:related}
\vspace{-0.4in}
\end{table}

\eat{
\begin{table*}[t]
\hspace{-0.2in}
\setlength{\tabcolsep}{2pt}
{\footnotesize
\begin{tabular}{ |c|c|c|c|c|c|c|c|c|c|c|  }
  \hline
Name & \multicolumn{2}{|c|}{Transitive Closure} & \multicolumn{4}{|c|}{Agglomerative Clustering}  &\multicolumn{2}{|c|}{Correlation Clustering} &  \multicolumn{2}{|c|}{Inference}
\\ \cline{2-11}
&&&\multicolumn{2}{|c|}{Single-linkage}&\multicolumn{2}{|c|}{Other-linkage}&&&&
\\  \cline{4-7}
& \# of steps & Communication &\# of steps & Communication & \# of steps & Communication & \# of steps & Communication & \# of steps & Communication
\\ \hline
    Pegasus~\cite{pegasus} & $O(d)$  &  $\mathbf{O(N+M)}$ & $O(n)$ & $\mathbf{O(N+M)}$ & -- & --  & -- & -- & O(d) &  $\mathbf{O(N+M)}$  \\ \hline
  Zones~\cite{twiddle} & $O(d)$ & $\mathbf{O(N+M)}$ & $O(n)$ & $\mathbf{O(N+M)}$ & -- & -- & --  & -- & -- & --\\ \hline
  L Datalog~\cite{foto11:recursive} & $O(d)$ & $ O(nN+M)$  & -- & -- & -- & -- &  -- & -- & -- & --\\ \hline
  NL Datalog~\cite{foto11:recursive} & $\mathbf{O(\boldsymbol\log{d})}$ & $ O(nN+M)$  & -- & -- & -- & -- & -- & -- & -- & --\\ \hline
   $\ha$& $\mathbf{O(\boldsymbol\log{d})}$ & $O(n^2N +M)$ & $\mathbf{O(\boldsymbol\log{n})}$ &  $O(n^2N+M)$ & $\mathbf{O(\boldsymbol\log{n})}$ &  $\mathbf{O(n^2N +M)}$ & $\mathbf{O(n)}$ &  $O(n^2N+M)$ & $\mathbf{O(\boldsymbol\log{d})}$ &   $O(n^2N +M)$\\ \hline
  $\hm$ & $\mathbf{O(\boldsymbol\log{d})}$ & $\mathbf{O(N+M)}$ & $\mathbf{O(\boldsymbol\log{n})}$ &  $\mathbf{O(N+M)}$ & -- & -- & $\mathbf{O(n)}$ & $\mathbf{O(N+M)}$ & -- &  --\\
  \hline
\end{tabular}
}
\caption{Complexity comparison of our results with related work. Best bounds in each column are shown in bold. -- appears whenever a technique is not applicable for a problem. $\hm$ has the best bounds among all techniques, except for other-linkage agglomerative and inference problems, where it is not applicable. For those two problems, $\ha$ gives best bounds. }
\end{table*}
}

\eat{
\begin{table*}[t]
\hspace{-0.35in}
\begin{tabular}{ |p{2.2cm}|p{1.2cm}|p{2.25cm}|p{1.2cm}|p{2.25cm}|p{1.2cm}|p{2.25cm}|p{1.2cm}|p{2.25cm}|p{1.2cm}|p{2.25cm}|  }
  \hline
 \multirow{2}{*}{Name} & \multicolumn{2}{|c|}{Transitive Closure} & \multicolumn{2}{|c|}{Single-linkage} & \multicolumn{2}{|c|}{General-linkage}  &\multicolumn{2}{|c|}{Correlation Clustering} &  \multicolumn{2}{|c|}{Inference} \\ \cline{2-11}
   & Iterations & Comm. & Iterations & Comm. &  Iter. & Comm. & Iterations & Comm.\\
   \hline
  Pegasus~\cite{pegasus} & $O(d)$  &  $\mathbf{O(N+M)}$ & $O(n)$ & $\mathbf{O(N+M)}$ & -- & -- & -- & -- & O(d) &  $\mathbf{O(N+M)}$  \\ \hline
  Zones~\cite{twiddle} & $O(d)$ & $\mathbf{O(N+M)}$ & $O(n)$ & $\mathbf{O(N+M)}$ &  -- & -- & -- & -- & -- & --\\ \hline
  L Datalog~\cite{foto11:recursive} & $O(d)$ & $ O(\sum_i N_i^2+M)$ & -- & -- & -- & -- &  -- & -- & -- & --\\
  NL Datalog~\cite{foto11:recursive} & $\mathbf{O(\boldsymbol\log{d})}$ & $ O(\sum_i N_i^2+M)$ & -- & -- & -- & -- -- & -- & -- & --\\ \hline
    $\ha$~(Sec.~\ref{hashing}) & $\mathbf{O(\boldsymbol\log{d})}$ & $O(\sum_i N_i^3 +M)$ & $\mathbf{O(\boldsymbol\log{n})}$ &  $O(\sum_i N_i^3 +M)$ & $\mathbf{O(\boldsymbol\log{n})}$ &  $\mathbf{O(\sum_i N_i^3 +M)}$ & $\mathbf{O(n)}$ &  $O(\sum_i N_i^3 +M)$ & $\mathbf{O(\boldsymbol\log{d})}$ &   $O(\sum_i N_i^3 +M)$\\ \hline
  $\hm$~(Sec.~\ref{hashing}) & $\mathbf{O(\boldsymbol\log{d})}$ & $\mathbf{O(N+M)}$ & $\mathbf{O(\boldsymbol\log{n})}$ &  $\mathbf{O(N+M)}$ & -- & -- & $\mathbf{O(n)}$ & $\mathbf{O(N+M)}$ & $\mathbf{O(\boldsymbol\log{d})}$ &  $\mathbf{O(N+M)}$\\

  \hline
\end{tabular}
\caption{Complexity comparison of our results with related work. Best bounds in each column are shown in bold.}
\end{table*}
}



The problems of finding connected components and undirected $s$-$t$ connectivity (USTCON) are fundamental and very well studied in many distributed settings including PRAM, MapReduce, and BSP. We discuss each of them below.

\subsection{Parallel Random Access Machine (PRAM)}
\label{sec:PRAM}

The PRAM computation model allows several processors to compute in parallel using a common shared memory. PRAM can be classified as CRCW PRAM if concurrent writes to shared memory are permitted, and CREW PRAM if not. 
Although, map-reduce does not have a shared memory, PRAM algorithms are still relevant, due to two reasons: (i) some PRAM algorithms can been ported to map-reduce by case-to-case analyses, and (ii), a  general theoretical result~\cite{karloffSV10:mapreduce} shows that any $O(t)$ CREW PRAM algorithm can be simulated in $O(t)$ map-reduce steps.

For the CRCW PRAM model, Shiloach and Vishkin~\cite{Shiloach82} proposed a  deterministic $O(\log{n})$ algorithm to compute connected components, with $n$ being the size of the largest component. Since then, several other $O(\log{n})$  CRCW algorithms have been proposed in~\cite{gazit91:cc,krishnamurthy94,Reif85}. However, since they require concurrent writes, it is not obvious how to translate them to map-reduce efficiently, as the simulation result of~\cite{karloffSV10:mapreduce} applies only to CREW PRAM.

For the CREW PRAM model, Johnson et. al.~\cite{Johnson97} provided a deterministic $O(\log^{3/2}{n})$ time algorithm, which was subsequently improved to $O(\log{n})$ by Karger et. al.~\cite{kargerNP99:cc}. These algorithms can be simulated in map-reduce using the result of~\cite{karloffSV10:mapreduce}. However, they require computing all nodes at a distance $2$ of each node, which would require $O(n^2)$ communication per map-reduce iteration on a star graph.

Conceptually, our algorithms are most similar to the CRCW PRAM algorithm of Shiloach and Vishkin~\cite{Shiloach82}. That algorithm maintains a connected component as a forest of trees, and repeatedly applies either the operation of pointer doubling (pointing a node to its grand-parent in the tree), or of hooking a tree to another tree.
Krishnamurthy et al \cite{krishnamurthy94} propose a more efficient implementation, similar to map-reduce, by interleaving local computation on local memory, and parallel computation on shared memory. However, pointer doubling and hooking require concurrent writes, which are hard to implement in map-reduce. Our \Hm algorithm does conceptually similar but, slightly different, operations in a single map-reduce step.

\subsection{Map-reduce Model}
Google's map-reduce lecture series describes an iterative approach for computing connected components. In each iteration a series of map-reduce steps are used to find and include all nodes adjacent to current connected components. The number of iterations required for this method, and many of its improvements~\cite{twiddle,pegasus,mpi}, is $O(d)$ where $d$ is the diameter of the largest connected component. 
These techniques do not scale well for large diameter graphs (such as graphical models where edges represent correlations between variables). Even for moderate diameter graphs~(with $d=20$), our techniques outperform the $O(d)$ techniques, as shown in the experiments.

Afrati et al \cite{foto11:recursive} propose map-reduce algorithms for computing transitive closure of a graph --  a relation containing tuples of pairs of nodes that are in the same connected component. These techniques have a larger communication per iteration as the transitive closure relation itself is quadratic in the size of largest component. Recently, Seidl et al \cite{conf/pkdd/SeidlBF12} have independently proposed map-reduce algorithms similar to ours, including the use of secondary sorting. However, they do not show the $O(\log{n})$ bound on the number of map-reduce rounds.

Table~\ref{table:related} summarizes the related work comparison and shows that our \Hcm algorithm is the first map-reduce technique with logarithmic number of iterations and linear communication per iteration.



\subsection{Bulk Synchronous Parallel (BSP)}
In the BSP paradigm, computation is done in parallel by processors in between a series of synchronized point-to-point communication steps. The BSP paradigm is used by recent distributed graph processing systems like Pregel \cite{Malewicz2010:Pregel} and Giraph \cite{Ching2010:Giraph}. BSP is generally considered more efficient for graph processing than map-reduce as it has less setup and overhead costs for each new iteration. While the algorithmic improvements of reducing number of iterations presented in this paper are applicable to BSP as well, these improvements are of less significance in BSP due to lower overhead of additional iterations.

However, we show that BSP does not necessarily dominate map-reduce for large-scale graph processing~(and thus our algorithmic improvements for map-reduce are still relevant and important). We show this by running an interesting experiment in shared grids having congested environments in Sec.~\ref{sec:bspexp}.
%

The experiment shows that in congested clusters, map-reduce can have a better latency than BSP, since in the latter one needs to acquire and hold machines with a combined memory larger than the graph size. For instance, consider a graph with a billion nodes and ten billion edges. Suppose each node is associated with a state of 256 bytes  (e.g., the contents of a web page, or recent updates by a user in a social network, etc.). Then the total memory required would be about 256 GB, say, 256 machines with 1G RAM. In a congested grid waiting for 256 machines could take much longer than running a map-reduce job, since the map-reduce jobs can work with a smaller number of mappers and reducers (say 50-100), and switch in between different MR jobs in the congested environment.

\eat{
Copied from \cite{twiddling}: ''GoogleÕs MapReduce lecture series describes a simple MapReduce approach to component finding.2 This method does the obvious: it starts from a specified seed vertex s, uses a MapReduce job to find those vertices adjacent to s, compiles the updated vertex information in another MapReduce job, then repeats the process, each time using two MapReduce jobs to advance the frontier another hop. If the graph consists of a single component, this approach will take $2\epsilon(s)$ MapReduce jobs, where $\epsilon(v)$ is the eccentricity of $v$. (The eccentricity of a vertex is the maximum distance from that vertex to another vertex in the graph.)"

Zone-based approach of \cite{twiddling,mpi}: ''The algorithm begins (before the iteration loop) by assigning each vertex to its own component or Òzone,Ó  Each iteration grows the zones, one layer of neighbors at a time. As zones collide due to shared edges, a winner is chosen (the smaller zone ID), and vertices in the losing zone are reassigned to the winning zone. When the iterations complete, each zone has become a fully connected component. The algorithm thus finds all connected components in the graph simultaneously. The number of iterations is linear in the largest diameter of any component in the graph."

Pegasus approach \cite{pegasus}: ''The main idea is as follows. For every node $v_i$ in the graph, we maintain a component id $c^h_i$ which is the minimum node id within $h$ hops from $v_i$. Initially, $c^h_i$ of $v_i$ is set to its own node id: that is, $c^0_i$ = $i$. For each iteration, each node sends its current $c^h$ to its neighbors. Then $c^{h+1}$, component id of $v_i$ at the next step, is set to the minimum value among its current component id and the received component ids of its neighbors. Again the number of iterations is linear in the largest diameter of any component in the graph. communication complexity per iteration is O(E)"

\subsection{Distributed Clustering}

\subsection{Other Distributed Algorithms}

Papers about mapreduce for coclustering and k-means clustering. But they are quite trivial and different. They are:
\begin{itemize}

\item1. DisCo: Distributed Co-clustering with Map-Reduce

\item 2. MapReduce for Sparse Graph Algorithms

\item http://alex.smola.org/papers/2010/SmoNar10.pdf

\end{itemize}
} 

\section{Connected Components on Map-Reduce}
\label{sec:algos}

\begin{algorithm}[t]
\caption{General Map Reduce Algorithm}
\begin{algorithmic}[1]
\STATE {\bf Input:} A graph $G = (V,E)$, \newline
{\em hashing} function $\h$ \newline
{\em merging} function $\Md$, and \newline
{\em export} function \export
\STATE {\bf Output:} A set of connected components $\C \subset 2^V$
\STATE Either Initialize $\TC_v = \{v\}$ Or $\TC_v = \{v\} \cup \nbrs(v)$ depending on the algorithm.
 \REPEAT
  \STATE {\bf mapper for node $v$:}
  \STATE  Compute $\h(\TC_v)$, which is a collection of key-value pairs $(u, C_u)$ for $u \in \TC_v$.
  \STATE  Emit all $(u, C_u) \in \h(\TC_v)$.
  \STATE {\bf reducer for node $v$:}
  \STATE Let $\{C_v^{(1)}, \ldots, C_v^{(K)}\}$ denote the set of values received from different mappers.
  \STATE Set $\TC_v \leftarrow \Md(\{C_v^{(1)}, \ldots, C_v^{(K)}\})$
 \UNTIL {$\TC_v$ does not change for all $v$}
 \STATE Return $\C = \mbox{\export}(\cup_v \{ \TC_v \})$
\end{algorithmic}
\label{algo:distributed}
\end{algorithm}

In this section, we present map-reduce algorithms for computing connected components. All our algorithms are instantiations of a general map-reduce framework (Algorithm \ref{algo:distributed}), which is parameterized by two functions -- a {\em hashing} function  $\h$, and a  {\em merging} function $\Md$ (see line~1 of Algorithm~\ref{algo:distributed}). Different choices for $\h$ and $\Md$~(listed in Table~\ref{hashlist}) result in algorithms having very different complexity.

Our algorithm framework maintains a tuple (key, value) for each node $v$ of the graph -- key is the node identifier $v$, and the value is a cluster of nodes, denoted $\TC_v$. The value $\TC_v$ is initialized as either containing only the node $v$, or containing $v$ and all its neighbors $\nbrs(v)$ in $G$, depending on the algorithm (see line~3 of Algorithm~\ref{algo:distributed}). The framework updates $\TC_v$ through multiple mapreduce iterations.

In the map stage of each iteration, the mapper for a key $v$ applies the hashing function $\h$ on the value $\TC_v$ to emit a set of key-value pairs $(u ,C_u)$, one for every node $u$ appearing in $\TC_v$ (see lines~6-7). The choice of hashing function governs the behavior of the algorithm, and we will discuss different instantiations shortly. In the reduce stage, each reducer for a key $v$ aggregates tuples  $(v,C_v^{(1)}), \ldots, (v,C_v^{(K)})$ emitted by different mappers. The reducer applies the merging function $\Md$ over $C_v^{(i)}$ to compute a new value $\TC_v$ (see lines 9-10). This process is repeated until there is no change to any of the clusters $\TC_v$ (see line 11).  Finally, an appropriate \export function computes the connected components $\C$ from the final clusters $\TC_v$ using one map-reduce round.

\begin{table}[t]
\centering
\begin{tabular}{|l|}
\hline
\begin{minipage}[t]{\linewidth}
\noindent {\bf \Hma} emits $(v, \TC_v)$, and $(u, \{v_{min}\})$ for all nodes $u \in \nbrs(v)$.\\
\noindent {\bf \Ha} emits $(u,\TC_v)$ for all nodes $u \in \TC_v$. \\
\noindent {\bf \Hm} emits $(v_{\min}, \TC_v)$, and $(u,\{v_{min}\})$ for all nodes $u \in \TC_v$.\\
\noindent {\bf \Hcm} computes $\core{v}$, the set of nodes in $\TC_v$ not less than $v$. It emits $(v_{\min}, \core{v})$, and $(u,\{v_{min}\})$ for all nodes $u \in \core{v}$
\end{minipage}
\\ \hline
\end{tabular}
\caption{Hashing Functions:  Each strategy describes the key-value pairs emitted by mapper with input key $v$ and value $\TC_v$~($v_{\min}$ denotes smallest node in $\TC_v$)}
\label{hashlist}
\vspace{-0.3in}
\end{table}

\begin{table}[t]
\centering
{\small
\begin{tabular}{ |c|c|c| }
  \hline
{\bf Algorithm} & {\bf MR Rounds} & {\bf Communication} \\
& & {\bf (per MR step)} \\
  \hline
  \Hma~\cite{pegasus} & $d$ & $O(\nnodes + \nedges)$   \\
  \Ha & $\log{d}$ & $O(n\nnodes + \nedges)$ \\
\Hm & $O(\log{n})^\star$ & $O(\log{n}\nnodes + \nedges)^\star$ \\
  \Hcm & $3\log{n}$ & $2(\nnodes+\nedges)$ \\	
 \hline
\end{tabular}}
\caption{Complexity of Different Algorithms~($^\star$ denotes results hold for only path graphs)}
\label{hashlistcomplexity}
\vspace{-0.3in}
\end{table}

\paragraph{Hash Functions} We describe four hashing strategies in Table~\ref{hashlist} and their complexities in Table~\ref{hashlistcomplexity}. The first one, denoted \Hma, was used in~\cite{pegasus}. In the mapper for key $v$, \Hma emits key-value pairs $(v,\TC_v)$ and $(u, \{v_{min}\})$ for all nodes $u \in \nbrs(v)$. In other words, it sends the entire cluster $\TC_v$ to reducer $v$ again, and sends only the minimum node $v_{\min}$ of the cluster $\TC_v$ to all reducers for nodes $u \in \nbrs(v)$. Thus communication is low, but so is rate of convergence, as information spreads only by propagating the minimum node.

On the other hand, \Ha emits key-value pairs $(u,\TC_v)$ for all nodes $u \in \TC_v$. In other words, it sends the cluster $\TC_v$ to all reducers $u \in \TC_v$.  Hence if clusters $\TC_v$ and $\TC_v'$ overlap on some node $u$, they will both be sent to reducer of $u$, where they can be merged, resulting in a faster convergence. But, sending the entire cluster $\TC_v$ to all reducers $u \in \TC_v$ results in large quadratic communication cost. To overcome this, \Hm sends the entire cluster $\TC_v$ to only one reducer $v_{\min}$, while other reducers are just sent $\{ v_{\min} \}$.  This decreases the communication cost drastically, while still achieving fast convergence. Finally, the best theoretical complexity bounds can be shown for \Hcm, which sends out a smaller subset $\core{v}$ of $\TC_v$.  We look at how these functions are used in specific algorithms next.

\subsection{\Hma Algorithm}
The \Hma algorithm is a version of the Pegasus algorithm~\cite{pegasus}.\footnote{\cite{pegasus} has additional optimizations that do not change the asymptotic complexity. We do not describe them here.} In this algorithm each node $v$ is associated with a label $v_{min}$ (i.e., $\TC_v$ is a singleton set $\{v_{min}\}$) which corresponds to the smallest id amongst nodes that $v$ knows are in its connected component. Initially $v_{\min} = v$ and so $\TC_v =  \{v \}$. It then uses \Hma hashing function to propagate its label $v_{min}$ in $\TC_v$ to all reducers $u \in \nbrs(v)$ in every round. On receiving the messages, the merging function $\Md$ computes the smallest node $v^{new}_{\min}$ amongst the incoming messages and sets $\TC_v = \{ v^{new}_{\min} \}$. Thus a node adopts the minimum label found in its neighborhood as its own label. On convergence, nodes that have the same label are in the same connected component. Finally, the connected components are computed by the following \export function: return sets of nodes grouped by their label.

 \begin{theorem}[\Hma~\cite{pegasus}] \label{thm:hma}
Algorithm \Hma correctly computes the connected components of  $G = (V,E)$ using $O(\nnodes + \nedges)$ communication and $O(d)$ map-reduce rounds.
 \end{theorem}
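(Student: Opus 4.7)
The plan is to establish three things separately: (i) the per-round communication bound, (ii) that the algorithm terminates in at most $d$ rounds, and (iii) that upon termination the labels correctly identify connected components. Since $\TC_v$ is maintained as a singleton $\{v_{\min}\}$, all three parts will hinge on a simple invariant about how $v_{\min}$ evolves.

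For the communication bound, I would directly count the key-value pairs emitted by the mapper at each node $v$ in a round. The definition of \Hma produces exactly one pair $(v, \TC_v) = (v, \{v_{\min}\})$ plus one pair $(u, \{v_{\min}\})$ for every $u \in \nbrs(v)$. Summing over all $v$ gives $\nnodes + \sum_v |\nbrs(v)| = \nnodes + 2\nedges$ pairs, each of constant size, which is $O(\nnodes + \nedges)$ communication per round.

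The main step is the round-complexity argument, which I would prove by induction on the round number $k$ with the invariant: after $k$ rounds, $v_{\min}^{(k)} = \min\{u : u \text{ is reachable from } v \text{ by a path of length} \le k\}$. The base case $k=0$ is immediate from initialization. For the inductive step, note that the reducer at $v$ computes $v_{\min}^{(k+1)}$ as the minimum of $v_{\min}^{(k)}$ and $\{u_{\min}^{(k)} : u \in \nbrs(v)\}$. By the inductive hypothesis, each $u_{\min}^{(k)}$ equals the minimum label in the $k$-ball around $u$, and taking the min over $u \in \nbrs(v) \cup \{v\}$ yields the minimum label in the $(k{+}1)$-ball around $v$. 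After $k \ge d$ rounds, every $k$-ball around $v$ contains the full connected component $\desc(v)$ of $v$, so $v_{\min}^{(k)}$ stabilizes to $\min \desc(v)$, and the algorithm halts.

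Correctness is then a corollary: at convergence two nodes $v, w$ carry the same label iff $\min \desc(v) = \min \desc(w)$, which holds exactly when $v$ and $w$ lie in the same connected component (since each component has a unique minimum element). Thus the \export step, which groups nodes by label, returns the correct partition into connected components. The only subtle point I anticipate is being careful that the invariant uses the previous round's values uniformly when taking the min at the reducer; this is ensured by the synchronous semantics of a map-reduce round, so no further obstacle arises.
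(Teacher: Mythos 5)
Your proposal is correct and follows essentially the same route as the paper's (sketched) argument: the minimum label propagates one hop per round, so after $d$ rounds every node's label is the minimum of its component, and the per-round communication is linear because each mapper emits one pair per incident edge plus one for itself. Your explicit $k$-ball induction is just a more formal rendering of the paper's observation that a node at shortest-path distance $p$ from the component minimum receives that minimum in exactly $p$ steps.
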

 \eat{
 \begin{proof}(sketch)
 First, \Hma correctly computes the connected components since any node that is connected to the smallest node $u_{\min}$ via a shortest path of length $p$ gets $u_{\min}$ in exactly $p$ steps. This also shows that the algorithm requires $O(d)$ map-reduce steps, since in each step the smallest node travels one edge. Finally, the communication cost is $O(\nnodes+\nedges)$ since a breadth first search is simulated.
 \end{proof}
}

 \subsection{\Ha Algorithm}
 The \Ha algorithm initializes each cluster $\TC_v = \{v\} \cup \{ \nbrs(v) \}$. Then it uses \Ha hashing function to send the entire cluster $\TC_v$ to all reducers $u \in \TC_v$.  On receiving the messages, merge function $\Md$ updates the cluster by taking the union of all the clusters received by the node. More formally, if the reducer at $v$ receives clusters $C_v^{(1)}, \ldots, C_v^{(K)}$, then $\TC_v$ is updated to $\cup_{i = 1}^K C_v^{(i)}$.

We can show that after $\log d$ map-reduce rounds, for every $v$, $\TC_v$ contains all the nodes in the connected component containing $v$. Hence, the \export function just returns the distinct sets in $\TC$ (using one map-reduce step).

 \begin{theorem}[\Ha]
Algorithm \Ha correctly computes the connected components of  $G = (V,E)$ using $O(n\nnodes + \nedges)$ communication per round and $\log d$ map-reduce rounds, where $n$ is the size of the largest component and $d$ the diameter of $G$.
 \end{theorem}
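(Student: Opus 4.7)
The plan is to pin down the invariant that, after $k$ map-reduce rounds, the cluster held at vertex $v$ equals the ball of radius $2^k$ around $v$ in $G$, i.e.\
\[ \TC_v^{(k)} \;=\; B_{2^k}(v) \;:=\; \{u \in V : \mathrm{dist}_G(v,u) \leq 2^k\}. \]
Once this ``radius-doubling'' invariant is established, the round count is immediate: taking $k = \lceil \log d \rceil$ forces $2^k \geq d$, so each $\TC_v^{(k)}$ equals the entire connected component of $v$, after which one additional map-reduce round (grouping vertices by, say, the smallest element of their cluster) implements \export and produces the connected components $\C$.

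I would prove the invariant by induction on $k$. The base case $k=0$ holds because line~3 of Algorithm~\ref{algo:distributed} initializes $\TC_v^{(0)} = \{v\} \cup \nbrs(v) = B_1(v)$. For the inductive step, the key observation is that the \Ha hashing rule is symmetric: mapper $u$ emits $(v, \TC_u^{(k)})$ precisely when $v \in \TC_u^{(k)}$. Using the inductive hypothesis and the symmetry of graph distance,
\[ v \in \TC_u^{(k)} \iff \mathrm{dist}_G(u,v) \leq 2^k \iff u \in \TC_v^{(k)}. \]
Hence reducer $v$ receives exactly the clusters $\{\TC_u^{(k)} : u \in B_{2^k}(v)\}$, and the union-based merge rule sets $\TC_v^{(k+1)} = \bigcup_{u \in B_{2^k}(v)} B_{2^k}(u)$. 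The triangle inequality gives $\TC_v^{(k+1)} \subseteq B_{2^{k+1}}(v)$; conversely, any $w \in B_{2^{k+1}}(v)$ lies on a $v$--$w$ path in $G$ of length at most $2^{k+1}$, whose midpoint $u$ satisfies $\mathrm{dist}_G(v,u) \leq 2^k$ and $\mathrm{dist}_G(u,w) \leq 2^k$, so $u \in B_{2^k}(v)$ and $w \in B_{2^k}(u)$, giving the reverse inclusion.

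For the per-round communication, the mapper at $v$ emits $|\TC_v^{(k)}|$ key--value pairs, each carrying the cluster $\TC_v^{(k)}$; since $\TC_v^{(k)}$ is always contained in $v$'s connected component, $|\TC_v^{(k)}| \leq n$. Summing over $v$ and charging the contribution of the initial closed neighborhoods against the $\nedges$ edges of $G$ yields the stated $O(n\nnodes + \nedges)$ bound. I expect the main obstacle to be the symmetry observation in the inductive step: the doubling is clean only because \Ha broadcasts $\TC_v$ to \emph{every} $u \in \TC_v$, which is exactly what guarantees that each $u \in B_{2^k}(v)$ contributes its own ball $B_{2^k}(u)$ to reducer $v$ in the next round. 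Without this symmetry one would obtain only the weaker growth $\TC_v^{(k+1)} \supseteq B_{2^k+1}(v)$, i.e. a $+1$ rather than a doubling, which would not suffice for the $\log d$ bound.
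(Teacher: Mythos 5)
Your proposal is correct and follows essentially the same route as the paper: an induction showing that after $k$ rounds the cluster at $v$ covers the radius-$2^k$ ball, with the midpoint/triangle-inequality argument for the doubling step and the ``each node is replicated at most $n$ times'' accounting for the communication bound. The only difference is cosmetic --- you prove the invariant as an exact equality $\TC_v^{(k)} = B_{2^k}(v)$ rather than the one-sided containment $B_{2^k}(v) \subseteq \TC_v$ that the paper uses, which is a harmless strengthening.
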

 \begin{proof}
 We can show using induction that after $k$ map-reduce steps, every node $u$ that is at a distance $\leq 2^k$ from $v$ is contained in $\TC_v$. Initially this is true, since all neighbors are part of $\TC_v$. Again, for the $k+1^{st}$ step,  $u \in \TC_w$ for some $w$ and $w \in \TC_v$ such that distance between $u, w$ and $w, v$ is at most $2^k$. Hence, for every node $u$ at a distance at most $2^{k+1}$ from $v$, $u \in \TC_v$ after $k+1$ steps. Proof for communication complexity follows from the fact that each node is replicated at most $n$ times.
 \end{proof}


\subsection{\Hm Algorithm}
While the \Ha algorithm computes the connected components in a smaller number of map-reduce steps than \Hma, the size of the intermediate data (and hence the communication) can become prohibitively large for even sparse graphs having large connected components. We now present \Hm, a variation on \Ha, that we show finishes in at most $4\log{n}$ steps for path graphs. We also show that in practice it takes at most $2\log{d}$ rounds and linear communication cost per round (see Section~\ref{sec:experiments}),where $d$ is the diameter of the graph.

The \Hm  algorithm initializes each cluster $\TC_v = \{v\} \cup \{ \nbrs(v) \}$. Then it uses \Hm hash function to send the entire cluster $\TC_v$ to reducer $v_{min}$, where $v_{min}$ is the smallest node in the cluster $\TC_v$, and $\{ v_{min} \}$ to all reducers $u \in \TC_v$. The merging function $\Md$ works exactly like in \Ha: $\TC_v$ is the union of all the nodes appearing in the received messages. We explain how this algorithm works by an example.\\

\begin{example} Consider an intermediate step where clusters  $C_1 = \{1,2,4\}$ and $C_5 = \{3,4,5\}$ have been associated with keys $1$ and $5$. We will show how these clusters are merged in both \Ha and \Hm algorithms.

 In the \Ha scheme, the mapper at $1$ sends the entire cluster $C_1$ to reducers $1$, $2$, and $4$, while mapper at $5$ sends $C_5$ to reducers $3$, $4$, and $5$. Therefore, on reducer $4$, the entire cluster $C_4 = \{1,2,3,4,5\}$ is computed by the merge function. In the next step, this cluster $C_4$ is sent to all the five reducers.

In the \Hm\ scheme, the mapper at $1$ sends $C_1$ to reducer $1$, and $\{1\}$ to reducer $2$ and $4$. Similarly, the mapper at $5$ sends $C_5$  to reducer $3$, and $\{3\}$ to reducer $4$ and $5$. So reducer $4$ gets $\{1\}$ and $\{3\}$, and therefore computes the cluster $C_4 = \{1,3\}$ using the merge function.

Now, in the second round, the mapper at $4$, has $1$ as the minimum node of the cluster $C_4 = \{1,3\}$. Thus, it sends $\{1\}$ to reducer $3$, which already has the cluster $C_2 = \{3,4,5\}$. Thus after the second round, the cluster $C_3 = \{1,3,4,5\}$ is formed on reducer $3$. Since $1$ is the minimum for $C_3$, the mapper at $3$ sends $C_3$  to reducer $1$ in the third round. Hence after the end of third round, reducer $1$ gets the entire cluster $\{1,2,3,4,5\}$.

Note in this example that \Hm\ required three map-reduce steps; however, the intermediate data transmitted is lower since entire clusters $C_1$ and $C_2$ were only sent to their minimum element's reducer ($1$ and $3$, resp).\\
\end{example}

As the example above shows, unlike \Ha, at the end of \Hm, all reducers $v$ are not guaranteed to contain in $\TC_v$ the connected component they are part of. In fact, we can show that the reducer at $v_{min}$ contains all the nodes in that component, where $v_{min}$ is the smallest node in a connected component. For other nodes $v$, $\TC_v = \{v_{\min}\}$. Hence, \export outputs only those $\TC_v$ such that $v$ is the smallest node in $\TC_v$.

\begin{theorem}[\Hm Correctness]\label{lemma:hm}
  At the end of algorithm \Hm, $\TC_v$ satisfies the following property: If $v_{\min}$ is the smallest node of a connected component $C$, then $\TC_{v_{\min}} = C$. For all other nodes $v$, $\TC_v = \{v_{\min}\}$.
 \end{theorem}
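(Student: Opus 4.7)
The plan is to isolate a few simple invariants, reduce the theorem to a fixed-point analysis of the set $V_t^* := \{v : v_{\min} \in \TC_v^{(t)}\}$, and finish by substituting into the reducer's equation. First I would establish by induction on rounds two invariants. \emph{Soundness}: $\TC_v^{(t)}$ is contained in the connected component of $v$, so $m_v^{(t)} := \min \TC_v^{(t)} \geq v_{\min}$ whenever $v$ lies in the component $C$ whose smallest node is $v_{\min}$. \emph{Persistence}: for every $v$ and $t$, some $w$ has $v \in \TC_w^{(t)}$, because if $v \in \TC_w^{(t)}$ then mapper $w$ forwards $\TC_w^{(t)}$ to reducer $m_w^{(t)}$ in the next round, so $v \in \TC_{m_w^{(t)}}^{(t+1)}$. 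In particular, $m_v^{(t)}$ is always defined.

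Second, specialising to the global minimum $v_{\min}$ of $C$, another short induction yields $v_{\min} \in \TC_{v_{\min}}^{(t)}$ for all $t$: once this holds, soundness forces $m_{v_{\min}}^{(t)} = v_{\min}$, so mapper $v_{\min}$ ships $\TC_{v_{\min}}^{(t)}$ straight back to reducer $v_{\min}$. Combined with soundness, this yields the clean identity $V_t^* = \TC_{v_{\min}}^{(t)}$ at every round: $v_{\min} \in \TC_v$ forces $m_v = v_{\min}$, so mapper $v$ deposits $\TC_v$ into reducer $v_{\min}$'s input; conversely, mapper $v_{\min}$ broadcasts $\{v_{\min}\}$ to each $u \in \TC_{v_{\min}}$. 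A direct corollary is monotonicity $V_{t+1}^* \supseteq V_t^*$, since reducer $v_{\min}$ always receives its own previous cluster.

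The hardest step is to show that at a fixed point $V_\infty^* = C$. I would argue by contradiction: if $V_\infty^* \subsetneq C$ then $G$-connectivity of $C$ produces an edge $(a,b) \in E$ with $a \in V_\infty^*$ and $b \in C \setminus V_\infty^*$. The fixed-point form $V_\infty^* = \bigcup_{w \in V_\infty^*} \TC_w^\infty$ forces $\TC_a^\infty \subseteq V_\infty^*$ and hence $b \notin \TC_a^\infty$, even though $b \in \TC_a^{(0)} = \{a\} \cup \nbrs(a)$ by the initialisation. Reconciling this shrinkage with stability is the crux; the cleanest route I see is an induction on the graph distance $d_G(v, v_{\min})$, with base case $V_0^* \supseteq \{v_{\min}\} \cup \nbrs(v_{\min})$ and an inductive step that, once a distance-$(d{-}1)$ neighbour of $v$ has entered $V^*$, uses persistence to locate $v$ in some cluster and then tracks one or two more rounds to sweep $v$ in. The subtlety is that individual $\TC_w$ can shrink over time, so the argument must be driven by the monotonicity of $V_t^*$, not of any single cluster.

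Once $V_\infty^* = C$, the theorem falls out of the fixed-point equation $\TC_v^\infty = \bigcup_{w : m_w^\infty = v} \TC_w^\infty \cup \{m_w^\infty : v \in \TC_w^\infty\}$. For $v \in C$ with $v \neq v_{\min}$, every label inside $C$ equals $v_{\min}$, so the first union is empty; persistence then places $v$ in some $\TC_w^\infty$ with $m_w^\infty = v_{\min}$, making the second union $\{v_{\min}\}$, so $\TC_v^\infty = \{v_{\min}\}$. For $v = v_{\min}$, fix any $u \in C$; persistence supplies some $w$ with $u \in \TC_w^\infty$ and soundness gives $w \in C = V_\infty^*$. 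If $w \neq v_{\min}$, the just-established $\TC_w^\infty = \{v_{\min}\}$ forces $u = v_{\min}$, which already lies in $\TC_{v_{\min}}^\infty$ by step two; if $w = v_{\min}$, $u \in \TC_{v_{\min}}^\infty$ directly. Thus $\TC_{v_{\min}}^\infty \supseteq C$, and soundness yields equality.
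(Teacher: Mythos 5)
Your overall architecture is sound and, modulo packaging, matches the paper's own (sketchy) proof: both arguments rest on (i)~once $v_{\min}\in \TC_v$, the mapper at $v$ forever ships $\TC_v$ to reducer $v_{\min}$ and $v$ retains $\{v_{\min}\}$, and (ii)~$v_{\min}$ eventually reaches every node of $C$. Your soundness and persistence invariants, the identity $V_t^*=\TC_{v_{\min}}^{(t)}$ (which does hold, though it needs a small induction across rounds rather than the one-round justification you give), its monotonicity, and the final substitution into the reducer equation are all correct. The genuine gap is exactly the step you yourself flag as the crux: you never prove $V_\infty^*=C$, and the tool you propose --- single-node persistence plus ``tracking one or two more rounds'' --- cannot close it. Knowing that $v$ lies in \emph{some} cluster $\TC_w^{(t)}$ says nothing about whether that cluster's minimum ever becomes $v_{\min}$; after round $0$ the edge $(a,v)$ need no longer be reflected in $\TC_a$ or $\TC_v$ (both may have shrunk to singletons), so there is no visible channel from the grown set $V_t^*$ to $v$, and the distance induction stalls.

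The missing idea is persistence of \emph{pairs}: for every edge $(u,v)\in E$ and every round $t$ there is some $w$ with $\{u,v\}\subseteq \TC_w^{(t)}$, because whenever $u$ and $v$ co-reside in $\TC_w^{(t)}$ the entire cluster is forwarded to $\min\TC_w^{(t)}$, so they co-reside again at round $t+1$ (base case: $v\in\TC_u^{(0)}=\{u\}\cup\nbrs(u)$). With this, your contradiction closes directly at the fixed point, with no distance induction at all: take the edge $(a,b)$ with $a\in V_\infty^*$, $b\notin V_\infty^*$, and a cluster $\TC_w^\infty\ni a,b$ with $m=\min\TC_w^\infty$. Mapper $w$ sends $\{m\}$ to reducer $a$, so $m\in\TC_a^\infty\subseteq\TC_{v_{\min}}^\infty=V_\infty^*$ (the inclusion because $\min\TC_a^\infty=v_{\min}$ forces $a$ to ship its cluster to $v_{\min}$); hence $\min\TC_m^\infty=v_{\min}$. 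Mapper $w$ also ships $\TC_w^\infty\ni b$ to reducer $m$, so $b\in\TC_m^\infty$, and mapper $m$ then sends $\{v_{\min}\}$ to reducer $b$, putting $b\in V_\infty^*$ --- a contradiction. For what it is worth, the paper's published proof glosses over the same point: it asserts that the minimum ``propagates one hop per iteration'' by analogy with the \Hma analysis, but in \Hm messages go to current cluster members rather than to the fixed neighbor sets $\nbrs(v)$, so that analogy also needs the pair-persistence observation to be made rigorous. Your write-up correctly isolates where the difficulty lies but does not supply the ingredient that resolves it.
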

 \begin{proof}
 Consider any node $v$ such that $\TC_v$ contains $v_{\min}$. Then in the next step, mapper at $v$ sends $\TC_v$ to $v_{\min}$, and only $\{ v_{\min} \}$ to $v$. After this iteration, $\TC_v$ will always have $v_{min}$ as the minimum node, and the mapper at $v$ will always send its cluster $\TC_v$ to $v_{min}$. Now at some point of time, all nodes $v$ in the connected component $C$ will have $v_{\min} \in \TC_{v}$ (this follows from the fact that min will propagate at least one hop in every iteration just like in Theorem~\ref{thm:hma}). Thus, every mapper for node $v$ sends its final cluster to $v_{\min}$, and only retains $v_{\min}$. Thus at convergence $\TC_{v_{min}} = C$ and $\TC_{v} = \{ v_{min} \}$.
 \end{proof}

 \begin{theorem}[\Hm Communication]
\ \\ Algorithm takes $O(k \cdot (\nnodes + \nedges))$ expected communication per round, where $k$ is the total number of rounds. Here expectation is over the random choices of the node ordering.
 \end{theorem}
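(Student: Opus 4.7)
The plan is to first observe that the total communication in round $t$ equals $\sum_v(2|C_v(t-1)|-1)$, since the mapper at $v$ emits one copy of its cluster $C_v(t-1)$ to reducer $v_{\min}$ and one singleton $\{v_{\min}\}$ to each of the other $|C_v(t-1)|-1$ members of $C_v(t-1)$. So the task reduces to bounding the expected total cluster size $E[\sum_v |C_v(t)|]$ at each round $t$, over the randomness in the ID assignment.

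I would dualize via $\sum_v|C_v(t)| = \sum_u X_u(t)$, where $X_u(t) := |\{v:u\in C_v(t)\}|$ counts in how many clusters node $u$ currently appears. The base case is exact: $X_u(0) = 1 + d(u)$, giving $\sum_u X_u(0) = \nnodes + 2\nedges$. For the inductive step, I enumerate the two ways $u$ can enter $C_v(t)$: (a) a \emph{full-cluster delivery}, where some mapper $w$ with $u\in C_w(t-1)$ satisfies $v = \min(C_w(t-1))$ and sends its whole cluster to $v$; or (b) a \emph{singleton delivery}, where $u = \min(C_w(t-1))$ for some $w$ with $v\in C_w(t-1)$, so the mapper at $w$ broadcasts $\{u\}$ to every reducer in $C_w(t-1)\setminus\{u\}$. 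Mechanism (a) contributes at most $X_u(t-1)$ new entries to $X_u(t)$ (at most one per $w$ containing $u$, after dedup across reducers). For mechanism (b), the random-ordering hypothesis is the key lever: since the node IDs are a uniformly random permutation, the probability that $u$ is the minimum of a fixed set $S\ni u$ is $1/|S|$; the broadcast fanout $|C_w(t-1)|-1$ then cancels against this probability, yielding an $O(1)$ expected contribution per mapper $w$.

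Summing via linearity of expectation over $w$ and $u$ and using $\sum_w|C_w(t-1)| = \sum_u X_u(t-1)$, I would derive a recurrence of the form $E[\sum_u X_u(t)] \le E[\sum_u X_u(t-1)] + O(\nnodes+\nedges)$, which telescopes to $E[\sum_v|C_v(t)|] = O((t+1)(\nnodes+\nedges))$. Since $t\le k$, plugging back into the communication formula gives per-round communication $O(k(\nnodes+\nedges))$, as claimed.

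The principal obstacle is that $C_w(t-1)$ is itself a random set determined by the node ordering, so ``the minimum of $C_w(t-1)$ is uniform in $C_w(t-1)$'' is not immediate: one cannot just quote uniformity of the min of a fixed subset. I expect to handle this by a deferred-decisions argument, exploiting the fact that the algorithm's logic compares IDs only through the $\min$ operation and is therefore invariant under any monotone relabeling. Conditioning on the \emph{unordered} membership $C_w(t-1)=S$ fixes the relative order of IDs outside $S$ against those inside $S$ but leaves the internal ranking within $S$ uniform, which justifies $\Pr[\min(C_w(t-1)) = u \mid C_w(t-1)\ni u] = 1/|C_w(t-1)|$ and lets the charging argument above go through.
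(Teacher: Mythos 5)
Your setup is the same as the paper's: communication per round reduces to bounding $\sum_v |C_v|$, you count replicas via $X_u(t)$, the base case $\nnodes+2\nedges$ is the paper's $n_0\le 2(\nnodes+\nedges)$, and the fanout-times-probability cancellation for the broadcast case is exactly the paper's central probabilistic step. The gap is in the inductive step: the recurrence $E[\sum_u X_u(t)] \le E[\sum_u X_u(t-1)] + O(\nnodes+\nedges)$ does not follow from the two mechanisms as you analyze them. The cancellation $(|C_w(t-1)|-1)\cdot\frac{1}{|C_w(t-1)|}=O(1)$ holds per pair $(w,u)$ for a \emph{fixed} target node $u$; but each mapper $w$ has $|C_w(t-1)|$ candidate minima and deterministically broadcasts exactly one of them to $|C_w(t-1)|-1$ reducers, so summed over all nodes mechanism (b) contributes $\sum_w (|C_w(t-1)|-1)=\Theta\bigl(\sum_u X_u(t-1)\bigr)$, not $O(\nnodes+\nedges)$. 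Together with the up-to-$\sum_u X_u(t-1)$ you already grant mechanism (a), your analysis yields only $E[\sum_u X_u(t)]\le 2\,E[\sum_u X_u(t-1)]$ --- a doubling recurrence and an exponential bound, not the additive one you claim.

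The ingredient you are missing is the paper's second bullet: once $u$ is not the smallest node of a cluster it belongs to, it can never again become a smallest node, and hence is never broadcast again. Concretely, a full-cluster delivery from $w$ lands $u$ at reducer $v=\min C_w(t-1)\le u$, and $C_v(t)\supseteq C_w(t-1)\ni v$, so $u$ arrives as a non-minimum; non-minimum copies are forwarded to exactly one reducer per round and do not multiply, while the expected number of copies for which $u$ \emph{is} the minimum stays bounded because each spawns only $O(1)$ expected broadcast copies. Splitting $X_u(t)$ into min-copies $s_t$ and non-min-copies $m_t$ and using this monotonicity gives $s_t=O(X_u(0))$ and $m_t\le m_{t-1}+O(X_u(0))$, which is what converts the multiplicative recurrence into the additive one and yields $X_u(t)=O(t\cdot X_u(0))$. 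Your closing worry about whether the minimum of the random set $C_w(t-1)$ is uniform within it is legitimate, and your deferred-decisions sketch does not obviously resolve it (membership in $C_w(t-1)$ is itself decided by comparisons internal to that set), but the paper silently makes the same assumption; the substantive missing idea is the min/non-min monotonicity.
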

 \begin{proof}
Note that the total communication in any step equals the total size of all $\TC_v$ in the next round. Let $n_k$ denote the size of this intermediate after $k$ rounds. That is, $n_k = \sum_v \TC_v$. We show by induction that $n_k = O( k \cdot (\nnodes + \nedges))$.

First, $n_0 = \sum_v \TC^{0}_v \leq 2(\nnodes + \nedges)$, since each node contains itself and all its neighbors. In each subsequent round, a node $v$ is present in  $\TC_u$, for all $u \in \TC_v$. Then $v$ is sent to a different cluster in one of two ways:
\squishlist
\item If $v$ is the smallest node in $\TC_u$, then $v$ is sent to all nodes in $\TC_u$. Due to this, $v$ gets replicated to $|\TC_u|$ different clusters. However, this happens with probability $1/|\TC_u|$.
\item If $v$ is not the smallest node, then $v$ is sent to the smallest node of $\TC_u$. This happens with probability $1 - 1/|\TC_u|$. Moreover, once $v$ is not the smallest for a cluster, it will never become the smallest node; hence it will never be replicated more that once.
\squishend
From the above two facts, on expectation after one round, the node $v$ is sent to  $s_1 = |\TC^{0}_v|$ clusters as the smallest node and to $m_1 = |\TC^{0}_v|$ clusters as not the smallest node. After two rounds, the node $v$ is additionally sent to $s_2 = |\TC^{0}_v|$, $m_2 = |\TC^{0}_v|$, in addition to the $m_1$ clusters. Therefore, after $k$ rounds,  $n_k = O(k \cdot (\nnodes + \nedges))$.
 \end{proof}

Next we show that on a path graph, \Hm finishes in $4 \log{n}$. The proof is rather long, and due to space constraints appears in Sec.~\ref{sec:Hm} of the Appendix.~

\begin{theorem}[\Hm Rounds]\label{thm:Hm:rounds}
 Let $G=(V,E)$ be a path graph (i.e. a tree with only nodes of degree 2 or 1). Then, \Hm correctly computes the connected component of  $G = (V,E)$ in $4\log{n}$ map-reduce rounds.
\end{theorem}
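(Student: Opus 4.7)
The plan is to analyze \Hm on a path graph in two phases: a \emph{growth} phase in which the cluster $\TC_{v^*}$ of the globally minimum-labeled node $v^*$ expands outward along the path until it contains every node, and a \emph{cleanup} phase in which every remaining $\TC_v$ collapses to $\{v^*\}$. Correctness is already handled by Theorem~\ref{lemma:hm}, so it suffices to bound the number of rounds.

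I would first establish a symmetry invariant: for every round $k$ and every pair of labels $u, v$, $u \in \TC_v^{(k)}$ iff $v \in \TC_u^{(k)}$. The base case holds since $\TC_v^{(0)} = \{v\} \cup \nbrs(v)$ on an undirected graph. The inductive step uses that both emissions of \Hm---namely $(v_{\min}, \TC_v)$ and $(u, \{v_{\min}\})$ for each $u \in \TC_v$---are symmetric under swapping any two node labels. A useful corollary: since $v^*$ is the global minimum, mapper $w$ delivers its full cluster $\TC_w^{(k)}$ to reducer $v^*$ precisely when $v^* \in \TC_w^{(k)}$, i.e.\ when $w \in \TC_{v^*}^{(k)}$ by symmetry. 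Thus
\[
\TC_{v^*}^{(k+1)} \;=\; \bigcup_{w \in \TC_{v^*}^{(k)}} \TC_w^{(k)}.
\]

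The second building block is a path-structural invariant: writing $\pi(x)$ for the position of label $x$ along the path, the set $\pi(\TC_v^{(k)})$ is a contiguous sub-path for every $v$ and $k$. Initially $\TC_v^{(0)}$ occupies at most three adjacent positions. In the inductive step, by symmetry the set of mappers whose messages reach reducer $v$ equals the interval $\TC_v^{(k)}$, and the union of their message-intervals remains an interval because consecutive mappers correspond to adjacent positions and share cluster elements, so their intervals overlap.

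The heart of the proof is a doubling lemma. Let $p^* := \pi(v^*)$, and let $r(k)$ and $\ell(k)$ denote the rightward and leftward extents of $\pi(\TC_{v^*}^{(k)})$ from $p^*$. I would show that $r(k+1) \geq \min(n - p^*,\, 2r(k))$, and symmetrically for $\ell$. Let $u_R$ be the label at position $p^* + r(k)$; then $u_R \in \TC_{v^*}^{(k)}$, and the update rule above yields $\TC_{v^*}^{(k+1)} \supseteq \TC_{u_R}^{(k)}$, so it suffices to show that $\pi(\TC_{u_R}^{(k)})$ extends rightward from $u_R$ by at least $\min(n - p^* - r(k),\, r(k))$. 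Intuitively, $u_R$'s leftward reach to $v^*$ of length $r(k)$ was produced by the same propagation mechanism that simultaneously grows $u_R$'s rightward reach, provided the path is long enough on that side. Iterating the doubling, both $r$ and $\ell$ saturate in at most $\lceil \log_2 n \rceil$ rounds, ending Phase~1. For Phase~2, once $\TC_{v^*}$ covers the entire component, symmetry gives $v^* \in \TC_v^{(k)}$ for every $v$, so in the next round every mapper has $v^*$ as the minimum of its cluster and therefore sends only singletons $\{v^*\}$ to any reducer $v \neq v^*$; each such $\TC_v$ collapses to $\{v^*\}$ in one round, and one additional round detects no change and triggers termination. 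Altogether this gives $\log_2 n + O(1)$ rounds, well inside the claimed $4 \log n$.

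The hard part is the doubling lemma: proving that the boundary cluster $\TC_{u_R}^{(k)}$ extends at least $r(k)$ positions to the right of $u_R$. This requires strengthening the inductive hypothesis to track the interval $\pi(\TC_u^{(k)})$ for \emph{every} label $u$ on the path, not only for $v^*$, and carefully exploiting the mirror-symmetry of the path geometry; corner cases where $v^*$ sits near an endpoint (so one side saturates immediately while the other doubles) and the final cleanup round are exactly what the generous factor of $4$ in $4\log n$ is designed to absorb above the tighter $\log_2 n + O(1)$ bound the argument naturally yields.
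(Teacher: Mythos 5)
Your symmetry invariant ($u \in \TC_v^{(k)}$ iff $v \in \TC_u^{(k)}$) is correct, and so is the resulting update rule $\TC_{v^*}^{(k+1)} = \bigcup_{w \in \TC_{v^*}^{(k)}} \TC_w^{(k)}$ for the global minimum $v^*$; the Phase~2 cleanup argument is also fine. But the two structural claims that carry the whole proof do not survive contact with arbitrary node orderings, which is precisely the hard case. First, the contiguity invariant is false. Take the path with positions $1,\dots,5$ carrying labels $3,1,4,2,5$: after one round the reducer for label $4$ (position $3$) receives only the singletons $\{1\}$ and $\{2\}$, so $\TC_4 = \{1,2\}$ occupies positions $\{2,4\}$, which is not an interval. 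Even $\TC_{v^*}$ itself can develop holes (e.g.\ on the $9$-node path with labels $1,9,5,8,3,7,4,6,2$, after three rounds $\TC_1$ contains the labels at positions $1$--$7$ and $9$ but not position $8$). Second, and more importantly, the doubling lemma is asserted, not proved, and the intuition you offer for it is wrong: a boundary node $u_R$ learns of $v^*$ by receiving the \emph{singleton} $\{v^*\}$ from some mapper whose cluster happens to contain both of them. That event carries no information about, and does not require, any rightward knowledge at $u_R$; full clusters are hashed only to the \emph{minimum label} of a cluster, so propagation is governed by the label ordering, not by path distance, and local minima in the labeling act as barriers that decouple $u_R$'s leftward reach from its rightward reach. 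If your doubling held for every ordering you would obtain a $\log_2 n + O(1)$ bound, strictly stronger than the theorem; the factor $4$ is not slack for corner cases.

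The paper's proof reflects exactly this difficulty. It first proves your clean doubling statement, but only for the path whose labels increase monotonically from left to right (Lemma~\ref{lemma:path}), where every full cluster is hashed toward the left end and the argument you sketch genuinely works. For arbitrary orderings it introduces a hierarchy of \emph{levels}: level $i$ consists of the local minima among the level-$(i-1)$ nodes in path order, so there are at most $\log n$ nontrivial levels (Proposition~\ref{prop:levels:number}). It then sets up coupled recurrences for distance quantities $T(k,\ell)$ and $M(k,\ell)$ measuring how far information about a level-$\ell$ node has travelled after $k$ rounds (Lemma~\ref{lemma:path:levels}), and solves them to get $T(k,\ell) \geq 2^{k/2-\ell}$ (Lemma~\ref{lemma:path:cost}); combining $\ell \leq \log n$ with $T(k,\ell) < n$ gives $k \leq 4\log n$. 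In other words, the guaranteed growth rate per round is only $\sqrt{2}$, degraded further by the level index --- the $4$ in $4\log n$ is the product of these two losses, not a cushion above a $\log_2 n$ bound. To repair your argument you would need to either prove the doubling inequality against adversarial labelings (which the structure of the paper's proof suggests is not available) or adopt something like the paper's level decomposition to quantify how local minima slow the spread of $v^*$.
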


Although, Theorem~\ref{thm:Hm:rounds} works only for path graphs, we conjecture that \Hm  finishes in $2 \log d$ rounds on all inputs, with  $O(\nnodes + \nedges)$ communication per round. Our experiments (Sec.~\ref{sec:experiments}) seem to validate this conjecture.



\subsection{\Hcm Algorithm}
Now we describe the \Hcm algorithm that has the best theoretical bounds: $3\log{n}$ map-reduce rounds with high probability and 2(\nnodes+\nedges) communication complexity per round in the worst-case. In \Hcm algorithm, the clusters $\TC_v$ are again initialized as $\{v\}$. Then \Hcm algorithm runs two rounds using \Hma hash function, followed by a round using \Hcm hash function, and keeps on repeating these three rounds until convergence.

In a round using \Hma hash function, the entire cluster $\TC_v$ is sent to reducer $v$ and $v_{\min}$ to all reducers $u \in \nbrs(v)$. For, the merging function $\Md$ on machine $m(v)$, the algorithm first computes the minimum node among all incoming messages, and then adds it to the message $\TC(v)$ received from $m(v)$ itself. More formally, say $v^{new}_{\min}$ is the smallest nodes among all the messages received by $u$, then $\TC_{new}(v)$ is updated to $\{v^{new}_{\min}\} \cup \{\TC(v)\}$.

In a round using \Hcm hash function, the set $\core{v}$ is computed as all nodes in $\TC_v$ not less than $v$. This set is sent to reducer $v_{\min}$, where $v_{\min}$ is the smallest node in $\TC(v)$, and $\{ v_{\min} \}$ is sent to all reducers $u \in \core{v}$. The merging function $\Md$ works exactly like in \Ha: $\TC(v)$ is the union of all the nodes appearing in the received messages. We explain this process by the following example.\\

\begin{example} Consider a path graph with $n$ edges $(1,2)$, $(2,3)$, $(3,4)$, and so on. We will now show three rounds of \Hcm.

In \Hcm algorithm, the clusters are initialized as $\TC_i = \{ i \}$ for $i \in [1,n]$. In the first round, the \Hma function will send $\{ i \}$ to reducers $i-1$, $i$, and $i+1$. So each reducer $i$ will receive messages $\{i-1\}$,  $\{i\}$ and $\{i+1\}$, and aggregation function will add the incoming minimum, $i-1$, to the previous $\TC_i = \{ i \}$.

Thus in the second round, the clusters are $\TC_1 = \{1\}$ and $\TC_i = \{ i-1, i\}$ for $i \in [2,n]$. Again \Hma will send the minimum node $\{ i-1 \}$ of $\TC_i$ to reducers $i-1$, $i$, and $i+1$.  Again merging function would be used. At the end of second step, the clusters are $\TC_1 = \{1\}$, $\TC_2 = \{1,2\}$, $\TC_i = \{i-2,i-1,i,\}$ for $i \in [3,n]$.

In the third round, \Hcm will be used. This is where interesting behavior is seen. Mapper at $2$ will send its $\core(2) = \{2\}$ to reducer $1$. Mapper at $3$ will send its $\core(3) = \{3\}$ to reducer $1$. Note that $\core(3)$ does not include $2$ even though it appears in $\TC_3$ as $2 < 3$. Thus we save on sending redundant messages from mapper $3$ to reducer $1$ as $2$ has been sent to reducer $1$ from mapper $2$. Similarly, mapper at $4$ sends $\core(4) = \{4\}$ to reducer $2$, and mapper $5$ sends $\core(5) = \{5\}$ to reducer $3$, etc. Thus we get the sets, $\TC_1 = \{1,2,3\}$, $\TC_2 = \{1,2,4\}$, $\TC_3 = \{1,3,6\}$, $\TC_4 = \{4,5,6\}$, and so on.\\
\end{example}

The analysis of the \Hcm algorithm relies on the following lemma.
\begin{lemma} \label{lemma:hgm} Let $v_{\min}$ be any node. Denote $\set(v_{\min})$ the set of all nodes $v$ for which $v_{\min}$ is the smallest node in $\TC_v$ after \Hcm algorithm converges. Then $\set(v_{\min})$ is precisely the set $\core(v_{\min})$.
\end{lemma}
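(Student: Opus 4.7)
The plan is to compute $\TC_v$ at convergence explicitly on each connected component, and then read off the set equality $\set(v_{\min}) = \core(v_{\min})$ by cases on whether $v_{\min}$ is the smallest node of its component. Throughout I use the invariant that $\TC_v$ stays inside the connected component $C(v)$ containing $v$, proved by induction on rounds (every message originates from a cluster $\TC_u \subseteq C(u) = C(v)$).

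First I would exploit the two \Hma sub-rounds. Stability at $v$ forces the smallest incoming message to already lie in $\TC_v$; since the incoming messages include $\min(\TC_w)$ for every neighbor $w$, this forces $\min(\TC_u) = \min(\TC_w)$ across every edge. Hence $\min(\TC_v)$ is constant on each component, and because the actual minimum $m^* := \min C$ propagates by one hop per \Hma round while $\min(\TC_v) \in \TC_v \subseteq C$, the common value must equal $m^*$. This already pins down $\set(m^*) = C$, and $\set(v_{\min}) = \emptyset$ whenever $v_{\min}$ is not a component minimum.

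Next I would analyze the converged \Hcm sub-round using its \Ha-style merge rule (union of received messages). At reducer $m^*$, every mapper $u \in C$ contributes $\core(u)$ (since $\min(\TC_u) = m^*$), so the merge gives $\TC_{m^*} \supseteq \bigcup_{u \in C} \core(u) \supseteq C$ using $u \in \core(u)$; combined with $\TC_{m^*} \subseteq C$ this yields $\TC_{m^*} = C$, and therefore $\core(m^*) = \TC_{m^*} = C$. At a reducer for $v \in C$ with $v \neq m^*$, no mapper $u$ can have $\min(\TC_u) = v$ (every such minimum is $m^*$), so the only incoming messages are singletons $\{m^*\}$; the merge then yields $\TC_v = \{m^*\}$, whence $\core(v) = \emptyset$ because $m^* < v$.

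Combining the two cases finishes the proof: if $v_{\min} = m^*$ for some component $C$, then both $\set(v_{\min})$ and $\core(v_{\min})$ equal $C$; otherwise both are empty. The main obstacle I expect is the non-minimum case, which hinges on two subtle points: verifying that at the fixed point no mapper sends a $\core$-payload to reducer $v$ (only $\{m^*\}$-singletons arrive), and committing to the interpretation that the \Hcm merge rule replaces $\TC_v$ by the union of incoming messages rather than augmenting the previous value, so that no stray element $\geq v$ lingers in $\TC_v$.
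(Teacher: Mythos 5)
The paper never actually supplies a proof of this lemma --- it is stated, illustrated by an example, and then invoked in the appendix proof of Theorem~\ref{thm:hgm} --- so there is no official argument to compare against line by line. Judged on its own terms, however, your proof establishes a degenerate version of the statement that is not the one the paper needs. You read ``after \Hcm{} converges'' as the global fixed point of the entire algorithm, and you conclude that $\set(v_{\min})$ and $\core(v_{\min})$ are both equal to the full connected component when $v_{\min}$ is the component minimum $m^*$, and both empty otherwise. That dichotomy is at odds with the paper's own illustration immediately following the lemma: on the path graph, after three rounds, $\set(2)=\{2,4\}=\core(2)$ even though $2$ is not the minimum of its component. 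More decisively, the appendix proof of Theorem~\ref{thm:hgm} applies the lemma ``after $3k$ rounds'' for every $k$, exactly in order to argue that the number of distinct active minima $|M_k|$ shrinks by a constant factor per three-round cycle; under your reading the lemma is vacuous at every intermediate stage (all the interesting $\set_k(m)$ with $m$ a mere local minimum would be empty), and the round-complexity argument it is meant to support collapses. So the intended statement is a per-cycle invariant, and your argument does not address it.

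The missing idea is a reflexivity (forest-consistency) invariant maintained across each $(\Hma,\Hma,\Hcm)$ cycle: the pointer structure $v\mapsto\min(\TC_v)$ must agree with the cluster contents, i.e.\ $\min(\TC_v)=m$ exactly when $v$ lies in the upper part $\core(m)$ of $m$'s own cluster, even when $m$ is only a local minimum whose cluster does not yet cover its component. Proving this requires an induction over cycles that tracks precisely what each reducer receives in the \Hcm{} round --- the payload $\core(u)$ from every $u$ currently pointing to it, and a singleton $\{\min(\TC_u)\}$ from every $u$ whose core contains it --- and shows that the two bookkeeping structures stay synchronized (this is the ``clusters are reflexive'' property the authors allude to). Your fixed-point analysis of the \Hma{} rounds (minima equalize across every edge, hence are constant on components and equal $m^*$) is sound as far as it goes, but it only applies at termination, where the lemma reduces to the correctness statement and carries none of the structural content needed for the $3\log n$ bound.
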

Note that in the above example, after 3 rounds of \Hcm, $\set(2)$ is $\{2,4\}$ and $\core(2)$ is also $\{2,4\}$.

We now analyze the performance of this algorithm. The proof is based on techniques introduced in~\cite{gazit91:cc,kargerNP99:cc}, and omitted here due to lack of space. The proof appears in the Appendix.

 \begin{theorem}[Complexity]\label{thm:hgm}
 Algorithm \Hcm correctly computes the connected components of  $G = (V,E)$ in expected $3\log{n}$ map-reduce rounds (expectation is over the random choices of the node ordering) with  $2(\nnodes + \nedges)$ communication per round in the worst case.
 \end{theorem}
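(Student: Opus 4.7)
The plan is to prove three claims separately: correctness, the worst-case per-round communication bound of $2(\nnodes+\nedges)$, and the expected round bound of $3\log n$. The first two are short bookkeeping arguments; the third is the main technical content and will follow the random-ranking pointer-jumping analysis of Gazit~\cite{gazit91:cc} and Karger-Nisan-Paraschina~\cite{kargerNP99:cc}.

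First I would set up a structural invariant maintained throughout execution: the map $\pi(v) := \min \TC_v$ defines a forest on each connected component of $G$, and the cluster sizes satisfy $\sum_v |\TC_v| \le \nnodes + \nedges$. The invariant is trivial at initialization and is preserved by each of the three hash/merge operations by a direct counting argument (the \Hma merge replaces $\TC_v$ by $\{v^{\text{new}}_{\min}\}\cup \TC_v$, and the \Hcm merge redistributes the existing cores without increasing the total count). Given the invariant, correctness follows from two facts: the two \Hma rounds propagate the component minimum one hop per round (as in Theorem~\ref{thm:hma}), and the \Hcm round collapses overlapping clusters via a hooking step; together with Lemma~\ref{lemma:hgm} this guarantees that at convergence the nonempty $\TC_{v_{\min}}$ precisely enumerate the connected components, which the \export function can output. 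The communication bound also follows from the invariant: an \Hma round emits $\sum_v|\TC_v|$ self-tuples plus $2\nedges$ neighbor-min tuples, and an \Hcm round emits $2\sum_v|\core{v}| \le 2\sum_v|\TC_v|$ tuples; both are at most $2(\nnodes+\nedges)$.

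The main obstacle is the $3\log n$ round bound. I would view the state of the algorithm as a forest of pointer-trees (where $v$ is a child of $\pi(v)$) and define a potential function of the form $\Phi = \sum_T \log(\mathrm{height}(T))$ summed over the trees of the current forest. Two \Hma rounds act as a shortcut/relabel phase that strictly reduces the height of every non-trivial tree by pushing each node's pointer toward the root via minimum propagation, while one \Hcm round acts as a hooking phase in which, under a uniformly random permutation of node IDs, each current root is hooked to a neighboring root of smaller rank with constant probability, merging away a $\Theta(1)$ fraction of roots in expectation. This is precisely the randomized hooking argument of~\cite{gazit91:cc,kargerNP99:cc}, and composing the two phases shows that each three-round meta-round drops $\Phi$ by a constant multiplicative factor in expectation, so after $\log n$ meta-rounds the forest collapses to stars rooted at the component minima, giving $3\log n$ total rounds in expectation.

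The delicate point in this last step is to ensure that the shortcut and hooking phases compose cleanly: an \Hcm hooking can create intermediate roots with height up to that of the old merged trees, so $\Phi$ must be chosen to dominate the height growth across an entire hook-shortcut cycle rather than analyzing each round in isolation. With the standard logarithmic potential used in~\cite{kargerNP99:cc} this reduces to the known analysis, completing the theorem.
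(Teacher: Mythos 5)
Your three-way decomposition (correctness, communication, rounds) matches the paper's, and your hooking intuition for the \Hcm round is essentially the paper's randomized argument. But the engine you propose for the $3\log n$ bound is not the one the paper uses, and as stated it has a gap. The paper does not run a Shiloach--Vishkin height-potential analysis. Its key structural fact is Lemma~\ref{lemma:hgm}: after each three-round meta-round the pointer structure is already flat --- the set $\set_k(m)$ of nodes whose current minimum is $m$ equals $\core{m}$, a star hanging directly off $m$. So there is never any nontrivial tree height to shortcut; the two \Hma rounds are not pointer-doubling steps along tree edges but BFS-style propagation along graph edges, whose only job is to let a supernode $m$ discover some smaller adjacent supernode $m'$, after which the single \Hcm round hooks $\set_k(m)$ onto $m'$ and re-flattens in the same step. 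The progress measure is the number $l$ of distinct minima in the quotient graph $G_{M_k}$: a set keeps a minimum in the top half of the random ranking only if its minimum and all its quotient-neighbors' minima lie in the top half, which happens with probability at most $1/4$, so the expected number of distinct minima drops to $3l/4$ per meta-round.

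The concrete gap in your route is that $\Phi=\sum_T\log(\mathrm{height}(T))$ cannot certify termination: once every tree is a star, $\Phi=0$ regardless of how many stars a connected component still contains, so a multiplicative decrease of $\Phi$ says nothing about the number of further hooking rounds needed to merge those stars down to one per component --- and by Lemma~\ref{lemma:hgm} that all-stars regime is exactly where this algorithm lives from the first meta-round onward. To repair it you would have to replace (or augment) $\Phi$ with the count of distinct minima per component, at which point you have reproduced the paper's argument and the height bookkeeping (and the ``delicate composition'' issue you flag, which the paper dissolves via Lemma~\ref{lemma:hgm} rather than a potential) becomes unnecessary. A smaller point: your \Hma-round accounting gives $\sum_v|\TC_v|+2\nedges$, which under your invariant is $\nnodes+3\nedges$ and exceeds $2(\nnodes+\nedges)$ on dense graphs; the bound actually needed, and the one the paper invokes, is $\sum_v|\TC_v|\le 2\nnodes$, which follows because the cores $\core{m}$ partition $V$.
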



 \eat{We first show that after every round of \Hcm clusters are reflexive, i.e. if a node $v' \in TC_v$, then the node $v \in TC_{v'}$. This follows trivially since either $v$ is the minimum and $v' in \core{v}$ or the other way around. In either case, $v' \in TC_v$ and $v \in TC_{v'}$.
 }

\eat{
 \begin{proof}(sketch) After $3k$ rounds, denote $M_k =  \{ min(\TC_v) : v \in V \}$ the set of nodes that appear as minimum on some node. For a minimum node $m \in M_k$, denote $\set_k(m)$ the set of all nodes $v$ for which $m = min(\TC_v)$. Then by above lemma we know that $\set_k(m) = \core(m)$ after $3k$ rounds. Obviously $\cup_{m \in M_k} \set_k(m) = V$ and for any $m,m' \in M_k$, $\set_k(m) \cap \set_k(m') = \emptyset$.

Consider the graph $G_{M_k}$ with nodes as $M_k$ and an edge between $m \in M_k$ to $m' \in M_k$ if there exists $v \in \set_k(m)$ and $v' \in  \set_k(m')$ such that $v,v'$ are neighbors in the input graph $G$.  If a node $m$ has no outgoing edges in $G_{M_k}$, then $\set_k(m)$ forms a connected component in $G$ disconnected from  other components, this is because, then for all $v' \notin \set_k(m)$,  there exists no edge to $v \in \set_k(m)$.

\eat{ If a node $m$ has no outgoing edges in $G_{M_k}$, then $core_k{m}$ forms a connected component in $G$ disconnected from  other components. This is because, since $m$ has not outgoing edge in $G_{M_k}$, for all $v' \notin core_k(m)$, we know that $v' \notin \TC_v$ for all $v \in core_k(m)$. Since if $v' \notin \TC_v$, then by reflexivity shown above, we know that $v \notin \TC_{v'}$, hence for all $v in Cl_k(m)$ and $v' \notin \TC_v$, $v \notin TC_{v'}$. This means that no $v \in Cl_k(m)$ can get any $v' \notin Cl_k(m)$ even in round $k+1$. So even in the next round $Cl_k(m)$ remains the same, and hence $Cl_k(m)$ is a component in $G$ disconnected from other components.
}

We can safely ignore such sets $\set_k(m)$. Let $MC_k$ be the set of nodes in $G_{M_k}$ that have at least one outgoing edge. Also if $m \in MC_k$ has an edge to $m' < m$ in $G_{M_k}$, then $m$ will no longer be the minimum of nodes $v \in \set _k(m)$ after $3$ additional rounds. This is because there exist nodes $v \in \set_k(m)$ and $v' \in \set_k(m')$, such that $v$ and $v'$ are neighbors in $G$. Hence in the first round of \Hma, $v'$ will send $m'$ to $v$. In the second round of \Hma, $v$ will send $m'$ to $m$. Hence finally $m$ will get $m'$, and in the round of \Hcm, $m$ will send $\set_k(m)$ to $m'$.

If $|MC_k| = l$, W.L.O.G, we can assume that they are labeled $1$, $2$, \ldots, $l$ (since only relative ordering between them matters anyway).  For any set, $\set_k(m)$, the probability that it its min $m' \in (l/2,l]$ after 3 more rounds is $1/4$. This is because  that happens only when $m \in [(l/2,l)$ and all its neighbors $m' \in G_{M_k}$ are also in $(l/2,l]$. Since there exist at least one neighbor $m'$, the probability of $m' \in (l/2,l]$ is at most $1/2$. Hence the probability of any node $v$ having a min $m' \in (l/2,l]$ after 3 more rounds is 1/4.

Now since no set, $\set_k(m)$, ever get splits in subsequent rounds, the expected number of cores is 3l/4 after 3 more rounds. Hence in three rounds of \Hcm, the expected number of cores reduces  from $l$ to $3l/4$, and therefore it will terminate in expected $3\log{n}$ time.

The communication complexity is  $2(\nnodes + \nedges)$ per round in the worst-case since the total size of clusters is $\sum_{v} \core{v} = 2 (\nnodes)$.
\end{proof}
}

\section{Scaling the \Hm Algorithm}\label{sec:implementation}

\eat{
\begin{algorithm}[t]
\caption{Hybrid Algorithm}
\begin{algorithmic}[1]
\STATE {\bf Input:} A graph $G = (V,E)$, Threshold $\tau$
\STATE {\bf Output:} Connected components of $G$
\STATE Initialize all nodes to mode 1.
\STATE Run \Hm~(\Hma) on mode 1~(2) nodes..
\IF{$|\TC(v)| > \tau$ or $m(v)$ gets mode 2 signal}
	\STATE Change $v$ to mode 2.
        \STATE Send mode 2 signal to all nodes $u \in \TC(v)$.
\ENDIF
\STATE Repeat from Step 4 until connect components do not change.
\end{algorithmic}
\label{algo:hybrid}
\end{algorithm}
}

\Hm and \Hcm complete in less number of rounds than \Hma, but as currently described, they require that every connected component of the graph fit in memory of a single reducer. We now describe a more robust implementation for \Hm, which allows handling arbitrarily large connected components. We also describe an extension to do load balancing. Using this implementation, we show in Section~\ref{sec:experiments} examples of social network graphs that have small diameter and extremely large connected components, for which \Hcm runs out of memory, but \Hm still works efficiently. 

\subsection{Large Connected Components}
We address the problem of larger than memory connected components, by using secondary sorting in map-reduce, which allows a reducer to receive values for each key in a sorted order. Note this sorting is generally done in map-reduce to keep the keys in a reducer in sorted order, and can be extended to sort values as well, at no extra cost, using composite keys and custom partitioning~\cite{Lin2010}.

To use secondary sorting, we represent a connected component as follows: if $\C_v$ is the cluster at node $v$, then we represent $\C_v$ as a graph with an edge from $v$ to each of the node in $\C_v$. Recall that each iteration of \Hm is as follows: for hashing, denoting $v_{min}$ as the min node in $\C_v$, the mapper at $v$ sends $\C_v$ to reducer $v_{min}$, and $\{v_{min}\}$ to all reducers $u$ in $\C_v$. For merging, we take the union of all incoming clusters at reducer $v$. 

\Hm can be implemented in a single map-reduce step. The hashing step is implemented by emitting in the mapper, key-value pairs, with key as $v_{min}$, and values as each of the nodes in $\C_v$, and conversely, with key as each node in $\C_v$, and $v_{min}$ as the value. The merging step is implemented by collecting all the values for a key $v$ and removing duplicates.

To remove duplicates without loading all values in the memory, we use secondary sorting to guarantee that the values for a key are provided to the reducer in a sorted order. Then the reducer can just make a single pass through the values to remove duplicates, without ever loading all of them in memory, since duplicates are guaranteed to occur adjacent to each other. Furthermore, computing the minimum node $v_{min}$ is also trivial as it is simply the first value in the sorted order. 

\subsection{Load Balancing Problem}
Even though \Hm can handle arbitrarily large graphs without failure (unlike \Hcm), it can still suffer from data skew problems if some connected components are large, while others are small. We handle this problem by tweaking the algorithm as follows. If a cluster $\C_v$ at machine $v$ is larger than a predefined threshold, we send all nodes $u \leq v$ to reducer $v_{min}$ and $\{v_{min}\}$ to all reducers $u \leq v$, as done in \Hm. However, for nodes $u>v$,  we send them to reducer $v$ and $\{v\}$ to reducer $u$. This ensures that reducer $v_{min}$ does not receive too many nodes, and some of the nodes go to reducer $v$ instead, ensuring balanced load.

This modified \Hm is guaranteed to converge in at most the number of steps as the standard \Hm converges. However, at convergence, all nodes in a connected component are not guaranteed to have the minimum node $v_{min}$ of the connected component. In fact, they can have as their minimum, a node $v$ if the cluster at $v$ was bigger than the specified threshold. We can then run standard \Hm, on the modified graph over nodes that correspond to cluster ids, and get the final output. Note that this increases the number of rounds by at most 2, as after load-balanced \Hm converges, we use the standard \Hm. 

\begin{example} If the specified threshold is $1$, then the modified algorithm converges in exactly one step, returning clusters equal to one-hop neighbors. If the specified threshold is $\infty$, then the modified algorithm converges to the same output as the standard one, i.e. it returns connected components. If the specified threshold is somewhere in between~(for our experiments we choose it to 100,000 nodes), then the output clusters are subsets of connected components, for which no cluster is larger than the threshold. 
\end{example}

\eat{

In the hybrid algorithm (Algorithm~\ref{algo:hybrid}), each node $v$ in the graph can be in one of two modes: mode 1 (corresponding to \Hm) or mode 2 (corresponding to \Hma). If a node $v$ is in mode 1, then machine $m(v)$ runs a \Hm iteration, i.e. sends the entire cluster $\TC(v)$  to machine $m(v_{\min})$, and $v_{\min}$ to all machines $m(u)$ for $u \in \TC(v)$. If a node $v$ is in mode 2, then machine $m(v)$ runs a \Hma iteration, i.e. sends the entire cluster $\TC(v)$ to machine $m(v)$ and $v_{\min}$  to all machines $m(u)$ for  $u \in \nbrs(v)$.

The algorithm switches a node from mode 1 to 2 if the cluster at that node becomes larger than a predefined threshold. A node $u$ also switches to mode 2 if $u \in \TC(v)$ for some $v$ that is in mode 2. This switching is done by sending a mode 2 signal from $v$ to $u$.
}


\eat{
\subsection{Comparing the different algorithms}
 \label{sec:algo:comparison}

 While \Hcm has the best asymptotic complexity, no algorithm dominates other algorithms in all respects.
 \squishlist
  \item  While \Ha has significantly higher communication than other three strategies, it computes the connected components in {\em exactly} $\log d$ steps; on the contrary, \Hm usually takes twice the number of steps on most graphs. Other techniques require even more rounds.

  \item  While \Hma takes significantly larger number of rounds than the other three techniques, note that until the export step, each node only maintains its neighbors and the smallest node it knows of. However, in the other techniques, at least one node contains the entire cluster -- which might be undesirable for load-balancing. The hybrid algorithm avoids running out of memory by switching from \Hm to \Hma when clusters become too large.
\squishend

Hence, our recommendation is that for graphs with very small components, one should use \Ha; for most graphs one should use  \Hm or \Hcm; and for graphs with very small diameter and large components that do not fit in memory, one should use \Hma or the hybrid algorithm. We illustrate these trade-offs in Section~\ref{sec:experiments}. 
}

\newcommand{\Slc}{Single Linkage Clustering\xspace}
\newcommand{\slc}{single linkage clustering\xspace}

\section{Single Linkage Agglomerative Clustering}
\label{sec:single_linkage}
To the best of our knowledge, no map-reduce implementation exists for \slc that completes in $o(n)$ map-reduce steps, where $n$ is the size of the largest cluster. We now present two map-reduce implementations for the same, one using \Ha that completes in $O(\log n)$ rounds, and another using \Hm that we conjecture to finish in $O(\log d)$ rounds.

For clustering, we take as input a weighted graph denoted as  $G =
(V,E, w)$, where $w: E \rightarrow [0,1]$ is a weight function on
edges. An output cluster $C$ is any set of nodes, and a clustering
$\cal C$ of the graph is any set of clusters such that each node
belongs to exactly one cluster in $\C$.

\begin{algorithm}[t]
\caption{Centralized \slc}
\begin{algorithmic}[1]
\STATE {\bf Input:} Weighted graph $G = (V,E, w)$, \newline
stopping criterion $\Stop$.
\STATE {\bf Output:} A clustering $\C \subseteq 2^V$.
\STATE Initialize clustering $\C \leftarrow  \{\{v\} | v\in V\}$;
 \REPEAT
 \STATE Find the closest pair of clusters $C_1,C_2$ in $\C$~(as per $d$);
 \STATE Update $\C \leftarrow  \C -\{C_1,C_2\} \cup \{ C_1 \cup C_2) \}$;
\UNTIL{$\C$ does not change or $\Stop(\C)$ is true}
\STATE Return $\C$
\end{algorithmic}
\label{algo:central}
\end{algorithm}

\begin{algorithm}[t]
\caption{Distributed \slc}
\begin{algorithmic}[1]
\STATE {\bf Input:} Weighted graph $G = (V,E, w)$, \newline
stopping criterion $\Stop$.
\STATE {\bf Output:} A clustering $\C \subseteq 2^V$.
\STATE Initialize $\C = \{ \{v\} \cup \nbrs(v) | v \in V \}$.
 \REPEAT
  \STATE {\bf Map:} Use \Ha or \Hm to hash clusters.
  \STATE {\bf Reduce:} Merge incoming clusters.
 \UNTIL{$\C$ does not change or $\Stop(\C)$ is true}
 \STATE Split clusters in $\C$ merged incorrectly in the final iteration.
 \STATE Return $\C$
\end{algorithmic}
\label{algo:distributed:clustering}
\end{algorithm}

\subsection{Centralized Algorithm:} 
Algorithm~\ref{algo:central} shows the typical bottom up centralized algorithm for \slc. Initially, each node is its own cluster. Define the distance between two clusters $C_1, C_2$ to be the minimum weight of an edge between the two clusters; i.e.,
$$d(C_1, C_2) \ = \ \min_{e = (u,v), u \in C_1, v \in C_2} w(e)$$
In each step, the algorithm picks  the two closest clusters and merges
them by taking their union. The algorithm terminates either when the
clustering does not change, or when a {\em stopping} condition,
$\Stop$, is reached.  Typical stopping conditions are {\em threshold}
stopping, where the clustering stops when the closest distance between
any pair of clusters is larger than a threshold, and {\em cluster
  size} stopping condition, where the clustering stops when the merged
cluster in the most recent step is too large. 

Next we describe a map-reduce algorithm that simulates the centralized
algorithm, i.e., outputs the same clustering. If there are two edges in
the graph having the exact same weight, then \slc might not be unique,
making it impossible to prove our claim. Thus, we assume that ties
have been broken arbitrarily, perhaps, by perturbing the weights slightly, and
thus no two edges in the graph have the same weight. Note that this step is required
only for simplifying our proofs, but not for the correctness of our algorithm.

\subsection{Map-Reduce Algorithm} 
Our Map-Reduce algorithm is shown in
Algorithm~\ref{algo:distributed:clustering}. Intuitively, we can
compute single-linkage clustering by first computing the connected
components~(since no cluster can lie across multiple connected
components), and then splitting the connected components into
appropriate clusters. Thus,
Algorithm~\ref{algo:distributed:clustering} has the same map-reduce
steps as Algorithm~\ref{algo:distributed}, and is implemented either using \Ha or \Hm.

However, in general, clusters, defined by the stopping criteria, $\Stop$, may be much smaller than the connected components. In the extreme case, the graph might be just one giant connected component, but the clusters are often small enough that they individually fit in the memory of a single machine. Thus we need a way to check and stop execution as soon as clusters have been computed.  We do this by evaluating $\Stop(\C)$ after each iteration of map-reduce. If $\Stop(\C)$ is false, a new iteration of map-reduce clustering is started. If $\Stop(\C)$ is true, then we stop iterations. 

While the central algorithm can implement any stopping
condition, checking an arbitrary predicate in a distributed setting can be
difficult. Furthermore, while the central algorithm merges one cluster at a time, and then
evaluates the stopping condition, the distributed algorithm evaluates
stopping condition only at the end of a map-reduce iteration. This
means that some reducers can merge clusters incorrectly in the last
map-reduce iteration. We describe next how to stop the map-reduce clustering
algorithm, and split incorrectly merged clusters.

\subsection{Stopping and Splitting Clusters}
\label{sec:slc_stop}

It is difficult to evaluate an arbitrary stopping predicate in a distributed fashion using map-reduce. We restrict our attention to a restricted yet frequently used class of local monotonic stopping criteria, which is defined below.

\begin{definition}[Monotonic Criterion]  $\Stop$ is monotone if for every clusterings $\cal C$, ${\cal C}'$, if ${\cal C}'$ refines $\cal C$~(i.e, $\forall C \in {\cal C} \Rightarrow \exists C' \in {\cal C}',  ~C \subseteq C')$, then $\Stop({\cal C}) = 1 \Rightarrow \Stop({\cal C}') = 1$.
\end{definition}

Thus monotonicity implies that stopping predicate continues to remain
true if some clusters are made smaller. Virtually every stopping
criterion used in practice is monotonic. Next we define the assumption
of locality, which states that stopping criterion can be evaluated
locally on each cluster individually.

\begin{definition}[Local Criterion]   $\Stop$ is local if there exists a function $\Stop_{local}: 2^{V} \rightarrow \{0,1\}$ such that $\Stop({\cal C}) = 1$ iff $\Stop_{local}(C) = 1$ for all $C \in \cal C$.
\end{definition}

Examples of local stopping criteria include distance-threshold (stop
merging clusters if their distance becomes too large) and
size-threshold (stop merging if the size of a cluster becomes too
large). Example of {\em non}-local stopping criterion is to stop when
the total number of clusters becomes too high.

If the stopping condition is local and monotonic, then we can compute it
efficiently in a single map-reduce step. To explain how, we first
define some notations. Given a cluster $C \subseteq V$, denote $G_C$ the subgraph of $G$ induced over nodes
$C$. Since $C$ is a cluster, we know $G_C$ is connected. We denote
$tree(C)$ as the\footnote{The tree is unique because of unique edge
  weights} minimum weight spanning tree of $G_C$, and $split(C)$ as
the pair of clusters $C_L,C_R$ obtained by removing the edge with the
maximum weight in $tree(C)$. Intuitively, $C_L$ and $C_R$ are the
clusters that get merged to get $C$ in the centralized \slc algorithm. Finally,
denote $\nbrs(C)$ the set of clusters closest to $C$ by the
distance metric $d$, i.e. if $C_1 \in \nbrs(C)$, then for every other cluster $C_2$, $d(C,C_2) > d(C,C_1)$. 

We also define the notion of core and minimal core decomposition as follows. 
\begin{definition}[Core] \label{def:core} A singleton cluster is always a
  core. Furthermore, any cluster $C \subseteq V$ is a core if its
  split $C_L,C_R$ are both cores and closest to each other, i.e. $C_L \in \nbrs(C_R)$ and $C_R \in  \nbrs(C_L)$.
\end{definition}
\begin{definition}[Minimal core decomposition] Given a cluster $C$ its minimal core decomposition,  $MCD(C)$, is a set of cores $\{C_1, C_2, \ldots, C_l\}$ such that $\cup_{i} C_i = C$ and for every core $C' \subseteq C$ there exists a core $C_i$ in the decomposition for which $C' \subseteq C_i$.
\end{definition}

Intuitively, a cluster $C$ is a core, if it is a valid, i.e., it is a subset of
some cluster $C'$ in the output of the centralized \slc algorithm, and
$MCD(C)$ finds the largest cores in $C$, i.e. cores that cannot be merged with any other node in $C$ and still be cores.

\begin{algorithm}[t]
\caption{Minimal Core Decomposition $MCD$}
\begin{algorithmic}[1]
\STATE {\bf Input:} Cluster $C \subseteq V$.
\STATE {\bf Output:} A set of cores $\{C_1, C_2, \ldots, C_l\}$ corresponding to $MCD(C)$.
\STATE If $C$ is a core, return $\{ C \}$.
\STATE Construct the spanning tree $T_C$ of $C$, and compute $C_L,C_R$
to be the cluster split of $C$.
\STATE Recursively compute $MCD(C_L)$ and $MCD(C_R)$.
 \STATE Return $MCD(C_L) \cup MCD(C_R)$
\end{algorithmic}
\label{algo:mcd}
\end{algorithm}

\begin{algorithm}[t]
\caption{Stopping Algorithm}
\begin{algorithmic}[1]
\STATE {\bf Input:} Stopping predicate $\Stop$, Clustering $\C$.
\STATE {\bf Output:} $\Stop(\C)$
\STATE For each cluster $C \in \C$, compute $MCD(C)$. (performed in
reduce of calling of Algo.~\ref{algo:distributed:clustering})
\STATE {\bf Map:} Run \Ha on cores, i.e, hash each core $C_i \in
MCD(C)$ to all machines $m(u)$ for $u \in C_i$   
\STATE {\bf Reducer for node v:}  Of all incoming cores, pick
the largest core, say, $C_v$, and compute $\Stop_{local} (C_v)$.
\STATE Return $\wedge_{v \in V} \Stop_{local} (C_v)$ 
\end{algorithmic}
\label{algo:stop}
\end{algorithm}

\begin{algorithm}[t]
\caption{Recursive Splitting Algorithm Split}
\begin{algorithmic}[1]
\STATE {\bf Input:} Incorrectly merged cluster $C$ w.r.t $\Stop_{local}$.
\STATE {\bf Output:} Set $S$ of correctly split clusters in $C$.
\STATE Initialize $S = \{\}$.
\FOR{$C_i$ in $MCD(C)$}
\STATE Let $C_l$ and $C_r$ be the cluster splits of $C_i$.
\IF{$\Stop_{local}(C_l)$ and $\Stop_{local}(C_r)$ are false}
\STATE $S = S \cup C_i$.
\ELSE \STATE $S = S \cup Split(C_i)$
\ENDIF
\ENDFOR
\STATE Return $S$.
\end{algorithmic}
\label{algo:split}
\end{algorithm}

\paragraph{Computing MCD} We give in Algorithm~\ref{algo:mcd}, a method to find the
minimal core decomposition of a cluster. It checks whether the input
cluster is a core. Otherwise it computes cluster splits $C_l$, and
$C_r$ and computes their MCD recursively. Note that this algorithm is
centralized and takes as input a single cluster, which we assume fits
in the memory of a single machine (unlike connected components, graph
clusters are rather small). 

\paragraph{Stopping Algorithm} Our stopping algorithm, shown in Algorithm~\ref{algo:stop}, is run
after each map-reduce iteration of
Algorithm~\ref{algo:distributed:clustering}. It takes as input the
clustering $\C$ obtained after the map-reduce iteration of
Algorithm~\ref{algo:distributed:clustering} . It starts by computing
the minimal core decomposition, $MCD(C)$, of each cluster $C$ in
$\C$.  This computation can be performed during the reduce step of the pervious map-reduce iteration of
Algorithm~\ref{algo:distributed:clustering}. Then, it runs a
map-reduce iteration of its own. In the map step, using \Ha, each core
$C_i$ is hashed to all machines $m(u)$ for $u \in C_i$. In reducer, for
machine $m(v)$, we pick the incoming core with largest size, say $C_v$. Since $\Stop$ is local,
there exists a local function $\Stop_{local}$. We compute
$\Stop_{local}(C_v)$ to determine whether to stop processing this core
further. Finally, the algorithm stops if all the cores for nodes $v$
in the graph are stopped.

\paragraph{Splitting Clusters} If the stopping algorithm~(Algorithm~\ref{algo:stop}) returns true,
then clustering is complete. However, some clusters could have merged
incorrectly in the final map-reduce iteration done before the stopping
condition was checked. Our recursive splitting algorithm, Algorithm~\ref{algo:split}, correctly splits such a cluster $C$ by first
computing the minimal core decomposition, $MCD(C)$. Then it checks
for each core $C_i \in MCD(C)$ that its cluster splits $C_l$ and $C_r$
could have been merged by ensuring that both $\Stop_{local}(C_l)$
and $\Stop_{local}(C_r)$ are false. If that is the case, then core $C_i$ is valid
and added to the output, otherwise the clusters $C_l$ and $C_r$ should
not have been merged, and $C_i$ is split further.
 
\eat{
\begin{algorithm}[t]
\caption{Stopping Algorithm}
\begin{algorithmic}[1]
\STATE {\bf Input:} Stopping predicate $\Stop$, Cluster $C \subseteq V$.
\STATE {\bf Output:} A set $V$ of machines to stop 
\IF{$\Stop_{local}(C)$ is false}
 \STATE $V = \{\}$
\ELSE
 \STATE Compute minimal core decomposition, $MCD(C)$, of $C$.
\FOR{Core $C_i$ in $MCD(C)$}
 \IF{Closest cluster $C_j$ to $C_i$ also in MCD(C)}
 \STATE $V = V \cup C_i$
 \ENDIF
\ENDFOR
\ENDIF
\STATE Send $stop$ signal to all machines in $V$
\end{algorithmic}
\label{algo:stop}
\end{algorithm}
}

\subsection{Correctness \& Complexity Results}

We first show the correctness of Algorithm~\ref{algo:distributed:clustering}. For that we first show the following lemma about the validity of cores.

\begin{lemma}[Cores are valid] \label{lemma:cores} Let $\C_{central}$ be the output of Algorithm~\ref{algo:central}, and $C$ be any core (defined according to Def.~\ref{def:core})  such that its clusters splits $C_l$, $C_r$ have both $\Stop_{local}(C_l)$ and $\Stop_{local}(C_r)$ as false. Then $C$ is valid, i.e. Algorithm~\ref{algo:central} does compute $C$ some time during its execution, and there exists a cluster $C_{central}$ in $\C_{central}$ such that $C \subseteq C_{central}$.
\end{lemma}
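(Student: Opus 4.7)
The plan is to use strong induction on $|C|$, exploiting the equivalence between Algorithm~\ref{algo:central} (with $\Stop$ removed) and Kruskal's MST algorithm: at any stage the clustering equals the connected components of $G_w = (V, \{e \in E : w(e) \leq w\})$ for a threshold $w$ advancing through the distinct edge weights. The base case $|C|=1$ is trivial, since singletons are initial clusters and are absorbed into some cluster of $\C_{central}$.

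For the inductive step, write $C = C_L \cup C_R$ with $C_L, C_R$ cores, each the other's closest cluster. A preliminary step is to combine the monotonicity and locality of $\Stop$ to derive a downward inheritance property for $\Stop_{local}$: for $A \subseteq B$, $\Stop_{local}(B)=0$ implies $\Stop_{local}(A)=0$. (This follows by placing $A$ and $B$ into appropriate clusterings and reading the locality characterization contrapositively against the monotonicity hypothesis.) Since the splits of $C_L$ are subsets of $C_L$ and $\Stop_{local}(C_L)=0$, the hypothesis of the lemma is inherited by $C_L$, so the induction hypothesis applies; symmetrically for $C_R$. Hence the algorithm forms $C_L$ at a time $w_L$ and $C_R$ at a time $w_R$, where $w_L, w_R$ are the maximum edge weights in the spanning trees of the subgraphs induced by $C_L, C_R$.

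Now let $w^* = d(C_L, C_R)$. Because the MST of $G_C$ consists of the MSTs of the subgraphs induced by $C_L$ and $C_R$ together with a bridging edge of weight $w^*$, and since edge weights are unique, $w_L, w_R < w^*$. The core property $C_R \in \nbrs(C_L)$ translates to the statement that the minimum-weight edge leaving $C_L$ in $G$ goes into $C_R$ and has weight exactly $w^*$; symmetrically for $C_R$. Consequently, once $C_L$ is produced it cannot be absorbed into any other cluster before time $w^*$, and similarly for $C_R$. So at time $w^*$ both $C_L$ and $C_R$ are simultaneously present, they form the globally closest pair, and are merged into $C$. Because $\Stop_{local}(C_L)=0$ throughout the window in which $C_L$ is a cluster, and, by the downward-inheritance property, also for every sub-cluster of $C_L$ appearing in the clustering before $C_L$ is assembled, the predicate $\Stop(\C)$ never equals $1$ in this window, so Algorithm~\ref{algo:central} does not halt prematurely. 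Once $C$ is formed it can only grow, hence the cluster in $\C_{central}$ containing $C$ is a superset.

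The main obstacle is precisely the translation step in the third paragraph: one must argue that $C_R \in \nbrs(C_L)$ and $C_L \in \nbrs(C_R)$ together rule out the awkward scenario in which $C_L$ is merged, before time $w^*$, with a strict sub-cluster of $C_R$ or with a cluster disjoint from $C$. Careful unpacking of Def.~\ref{def:core} together with the identity $d(C_L, V \setminus C_L) = w^*$ should settle it, but this is the one non-routine piece of the argument.
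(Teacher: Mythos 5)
Your proof is correct and follows the same inductive skeleton as the paper's, but it is considerably more careful: the paper's argument is four sentences ("$C_l,C_r$ are valid by induction; since $\Stop_{local}$ is false on both they get merged with something; since they are closest to each other they merge with each other"), and it silently assumes exactly the two things you make explicit — that the induction hypothesis is applicable to $C_l$ and $C_r$ at all (your downward-inheritance step), and that neither $C_l$ nor $C_r$ is absorbed by some third cluster before they meet (your survival-until-$w^*$ step). The Kruskal/threshold-graph formalization is your addition and is what lets you say precisely \emph{when} each core appears.

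The step you flag as the "main obstacle" is not a genuine gap; it closes in one line with the identity you already name. Since $C_R$ is the unique nearest cluster to $C_L$ and edge weights are distinct, every edge with exactly one endpoint in $C_L$ has weight at least $d(C_L, V\setminus C_L) = d(C_L,C_R) = w^*$, with equality only for the bridging edge into $C_R$; hence in the Kruskal view $C_L$ is an isolated component of every threshold graph $G_w$ for $w_L \le w < w^*$, and symmetrically for $C_R$, so the awkward scenarios (absorption by a cluster disjoint from $C$, or premature merging with a strict sub-cluster of $C_R$) would each require an edge leaving $C_L$ of weight below $w^*$, which does not exist. At $w^*$ the unique minimum crossing edge joins exactly $C_L$ and $C_R$, producing $C$. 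One caveat worth recording: your parenthetical derivation of downward inheritance ($A\subseteq B$ and $\Stop_{local}(B)=0$ imply $\Stop_{local}(A)=0$) does not strictly follow from the stated definitions of locality and monotonicity — the refined clustering could fail $\Stop$ on account of a cluster other than $A$ — so it is really an additional (clearly intended) property of local monotone criteria; the paper relies on precisely the same inference in its proof of Theorem~\ref{thm:clustering:correctness}, so you are no worse off than the original, but it is cleaner to state it as an assumption than to claim it as a consequence.
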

\begin{proof} The proof uses induction. For the base case, note that any singleton core is obviously valid. Now assume that $C$ has cluster splits $C_l$ and $C_r$, which  by induction hypothesis, are valid. Then we show that $C$ is also valid. Since $\Stop_{local}(C_l)$ and $\Stop_{local}(C_r)$ are false for the cluster splits of $C$, they do get merged with some clusters in Algorithm~\ref{algo:central}. Furthermore, by definition of a core, $C_l, C_r$ are closest to each other, hence they actually get merged with each other.Thus $C = C_l \cup C_r$ is constructed some during execution of Algorithm~\ref{algo:central}, and there exists a cluster $C_{central}$ in its output that  contains $C_l \cup C_r = C$, completing the proof.
\end{proof}

Next we show the correctness of Algorithm~\ref{algo:distributed:clustering}. Due to lack of space, the proof is omitted and appears in the Appendix. 

\begin{theorem}[Correctness] \label{thm:clustering:correctness} The distributed Algorithm~\ref{algo:distributed:clustering} simulates the centralized Algorithm~\ref{algo:central}, i.e., it outputs the same clustering as Algorithm~\ref{algo:central}. 
\end{theorem}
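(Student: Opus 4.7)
The plan is to establish correctness by exploiting the well-known connection between single-linkage clustering and the minimum spanning tree (MST) of each connected component, and then showing that both the centralized and distributed algorithms partition this MST along the same set of edges. First, I would observe that by the correctness of \Ha (or \Hm) for connected components, after sufficient map-reduce iterations each node $v$ ends up in a cluster whose underlying induced subgraph equals the connected component of $v$ in $G$. Since edge weights are distinct, the MST of each connected component is unique, and Algorithm~\ref{algo:central} coincides with Kruskal's algorithm: at any point in its execution, each cluster it maintains is a subtree of this MST formed by the edges of weight below some threshold.

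Second, I would prove by induction on the number of merges in Algorithm~\ref{algo:central} that every cluster $C$ it ever forms is a core in the sense of Definition~\ref{def:core}. The base case is immediate since singletons are cores. For the inductive step, when Algorithm~\ref{algo:central} merges the two closest clusters $C_L$ and $C_R$, both are cores by the inductive hypothesis, and because they are single-linkage-closest to each other (they are the pair minimizing $d$ among all remaining pairs), the union $C_L \cup C_R$ satisfies the core definition. In particular, every $C^\star \in \C_{central}$ is a core whose cluster split $(C^\star_L, C^\star_R)$ consists of two cores with $\Stop_{local}(C^\star_L)$ and $\Stop_{local}(C^\star_R)$ both false, because if either were true then by the monotonicity and locality of $\Stop$, the centralized algorithm would not have performed the merge producing $C^\star$.

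Third, I would match the output of Algorithm~\ref{algo:distributed:clustering} with $\C_{central}$ by showing two containments, using Lemma~\ref{lemma:cores} and the structure of Algorithms~\ref{algo:mcd},~\ref{algo:stop},~\ref{algo:split}. For the direction that every distributed cluster lies inside some $C^\star \in \C_{central}$: the Split routine only emits a core $C_i$ when both of its children $C_l, C_r$ fail $\Stop_{local}$, so Lemma~\ref{lemma:cores} directly certifies $C_i$ as valid, i.e., contained in some centralized cluster. For the converse direction, given $C^\star \in \C_{central}$, the inductive characterization above says $C^\star$ is itself a core with splits both failing $\Stop_{local}$; I would argue that when Split recurses starting from the top-level core of the connected component containing $C^\star$, the recursion descends along exactly the MST edges that the centralized algorithm traversed, and halts at the node of the recursion tree corresponding to $C^\star$, thereby emitting $C^\star$ verbatim. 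Since both the centralized output and the distributed output are partitions of $V$, the two containments upgrade to equality, which is the theorem.

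The main obstacle will be tying the halting of Split's recursion precisely to the halting of the centralized merge sequence; this is where the assumptions of monotonicity and locality of $\Stop$ enter crucially, and where the Stopping Algorithm (Algorithm~\ref{algo:stop}) must be shown to return true exactly when every node lies in some maximal core on which further merging would be blocked. A secondary technical issue is making sure that even when multiple merges occur in parallel within a single map-reduce round (so the distributed algorithm may transiently over-merge beyond $\C_{central}$), the subsequent Split pass recovers the correct partition; this reduces to showing that any over-merged cluster is still a union of cores from $\C_{central}$, which again follows because every intermediate cluster produced by the map-reduce merges is a core.
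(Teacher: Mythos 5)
Your first containment --- every cluster output by the distributed algorithm is contained in some cluster of $\C_{central}$ --- is exactly the paper's first step: the splitting routine only emits cores whose cluster splits both fail $\Stop_{local}$, so Lemma~\ref{lemma:cores} applies directly. The closing partition argument is also fine: two partitions of $V$, each of whose blocks is contained in a block of the other, must coincide. The genuine gap is in your second containment. Your supporting claim that ``every intermediate cluster produced by the map-reduce merges is a core'' is false: the iterations start from closed neighborhoods $\{v\}\cup\nbrs(v)$ and take unions of overlapping clusters, and such sets have no reason to satisfy the recursive mutual-nearest-neighbor condition of Definition~\ref{def:core} --- indeed the whole point of $MCD$ is that the distributed clusters are generally \emph{not} cores and must be decomposed. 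Likewise, the central step that Split's recursion ``halts at the node of the recursion tree corresponding to $C^\star$, thereby emitting $C^\star$ verbatim'' is asserted rather than proven: you would need (a) that each $C^\star\in\C_{central}$ is wholly contained in a single pre-Split distributed cluster at the moment the stopping test fires (the distributed algorithm may stop before components are fully assembled, so this needs an argument), and (b) that the maximal-core decomposition and the recursion on splits actually surface $C^\star$ without overshooting or undershooting. You flag (b) yourself as the main obstacle, but the justification you offer for it rests on the false claim above.

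The paper closes the argument differently and more economically: it proves $C_{distributed}=C_{central}$ directly by contradiction. If $C_{distributed}\subsetneq C_{central}$, the centralized algorithm merged $C_{distributed}$ with some $C'_{central}$; the distributed algorithm can have stopped at $C_{distributed}$ only because (i) it is its own connected component, (ii) $\Stop_{local}(C_{distributed})$ holds, or (iii) it was merged with some $C'_{distributed}$ for which $\Stop_{local}(C'_{distributed})$ is true. Cases (i) and (ii) contradict the centralized merge outright; in case (iii) one shows $C'_{distributed}\subseteq C'_{central}$, since both must contain the node realizing the nearest-neighbor distance to $C_{distributed}$ and centralized output clusters are disjoint, whence monotonicity of $\Stop$ forces $\Stop_{local}(C'_{distributed})$ to be false --- a contradiction. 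If you wish to keep your two-containment structure, the second direction needs to be replaced by an argument of this flavor; as written it does not go through.
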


Next we state the complexity result for \slc. We omit the proof as it is very similar to that of the complexity result for connected components.
\begin{theorem}[Single-linkage Runtime] \label{thm:single:linkage}  If \Ha is used in Algorithm~\ref{algo:distributed:clustering}, then it finishes in $O(\log{n)}$ map-reduce iterations and $O(n\nnodes + \nedges)$ communication per iteration, where $n$ denotes the size of the largest cluster.
\end{theorem}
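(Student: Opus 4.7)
The plan is to mirror the argument used for the \Ha bound on connected components, replacing the diameter $d$ by the largest cluster size $n$, and then invoke Theorem~\ref{thm:clustering:correctness} to keep the scope of each $\TC_v$ restricted to the final cluster containing $v$. Because \Ha's hashing rule is identical in Algorithm~\ref{algo:distributed:clustering} and in the connected-components setting, the only new ingredients we need are (i) a tighter bound on how far each $\TC_v$ can grow, and (ii) a termination condition coming from $\Stop$.

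First I would re-establish the doubling invariant by induction on the iteration count $k$: after $k$ map-reduce rounds of Algorithm~\ref{algo:distributed:clustering}, $\TC_v$ contains every node $u$ reachable from $v$ via a path of length at most $2^k$ whose edges are all used by the centralized Algorithm~\ref{algo:central} when it assembles $v$'s final cluster. The base case is immediate from the initialization $\TC_v = \{v\} \cup \nbrs(v)$. The inductive step copies the standard \Ha argument: if $u \in \TC_w$ and $w \in \TC_v$ after round $k$, then the map step routes both clusters to reducer $v$ (and to reducer $u$), so after the merge $u \in \TC_v$, doubling the reach.

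Second I would derive the iteration bound. By Theorem~\ref{thm:clustering:correctness}, the clusters produced by Algorithm~\ref{algo:distributed:clustering} coincide with those of Algorithm~\ref{algo:central}. Any final cluster $C$ satisfies $|C| \le n$, so the induced spanning tree $tree(C)$ defined in Sec.~\ref{sec:slc_stop} has diameter at most $|C|-1 \le n-1$. Therefore after $\lceil \log n \rceil$ rounds, every reducer $v \in C$ has $\TC_v \supseteq C$. At that point $C$ is itself a valid core (Lemma~\ref{lemma:cores}) and is the largest core presented to reducer $v$, so Algorithm~\ref{algo:stop} triggers $\Stop$ globally, giving an overall iteration count of $O(\log n)$.

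Third, for the per-round communication, the crucial observation is that throughout the execution $\TC_v$ stays contained in $v$'s eventual cluster. This follows from the doubling invariant above combined with the monotonicity of single-linkage merges: once $u$ enters $\TC_v$, the centralized algorithm will also merge them into the same cluster. Consequently each node $u$ is replicated by \Ha into at most $n$ distinct clusters, contributing at most $n \cdot \nnodes$ node copies, plus the $O(\nedges)$ adjacency data that seeds the initial clusters. This yields a per-round communication of $O(n \cdot \nnodes + \nedges)$.

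The main obstacle is formalizing the ``containment'' claim that $\TC_v$ never strays outside $v$'s final cluster; without it, the communication bound would blow up to the connected-component size rather than the cluster size $n$. Making this precise requires a careful coupling between the iterative map-reduce merges and the sequential merges of Algorithm~\ref{algo:central}, which is essentially the content of Theorem~\ref{thm:clustering:correctness} already proved.
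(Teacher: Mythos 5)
Your iteration bound is serviceable, but the communication bound has a genuine gap, and it is exactly at the point you yourself flag as ``the main obstacle.'' The claim that $\TC_v$ stays contained in $v$'s eventual cluster is false for Algorithm~\ref{algo:distributed:clustering} as defined. The algorithm initializes $\TC_v = \{v\} \cup \nbrs(v)$, which already contains every graph neighbor of $v$, whether or not it ends up in $v$'s final cluster; the reduce step is a blind union of all overlapping incoming clusters; and \Ha replicates each cluster to every one of its members. Nothing in the map or reduce steps consults the edge weights or the merge order of Algorithm~\ref{algo:central}, so $\TC_v$ grows toward the connected component of $v$, not toward its single-linkage cluster --- this is precisely why the paper needs a stopping check after every iteration and a final splitting pass (Algorithm~\ref{algo:split}) for clusters ``merged incorrectly.'' Theorem~\ref{thm:clustering:correctness} cannot rescue the containment claim: it asserts only that the \emph{output after splitting} coincides with the centralized output, and says nothing about intermediate states. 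Concretely, on a star graph whose final clusters are small, $\TC_{\mathrm{center}}$ equals the whole vertex set already at initialization, and the first \Ha round costs $\Theta(\nnodes^2)$ rather than $O(n\nnodes+\nedges)$ with $n$ the largest cluster size; so any proof of the communication bound must bring in the interaction with $\Stop$ (and the per-cluster structure of the merges), which your argument does not do. Your doubling invariant, as stated, is also only a \emph{lower} bound on $\TC_v$ (``contains every node reachable by \dots''), so it cannot be the source of an upper bound on replication.

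For comparison, the paper omits the proof entirely, remarking that it parallels the connected-components complexity argument; the authors' intended route (visible in an earlier draft) is not distance doubling but a ``golden clustering'' argument: one exhibits, at every iteration, a family of valid cores covering the final clusters and shows that the number of distinguished representatives halves per round via a nearest-neighbor-graph (2-cycle / level-parity) argument, giving $O(\log n)$ iterations as a count of merges rather than of graph distance. Your $\lceil \log n \rceil$ bound via the diameter of the induced subgraph of a size-$n$ cluster is a legitimate alternative for the round count, though your claim that $C$ is necessarily the \emph{largest} core delivered to reducer $v$ in Algorithm~\ref{algo:stop} also needs justification, since the over-merged $\TC_v$ may admit a structural core strictly containing $C$. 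To repair the proposal you would need either a genuine invariant bounding how far intermediate clusters can overshoot the final ones before $\Stop$ fires, or a restatement of the communication bound in terms of the intermediate cluster sizes.
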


We also conjecture that if \Hm is used in Algorithm~\ref{algo:distributed:clustering}, then it finishes in $O(\log{d})$ steps.

\eat{
In this section, we show that the \Ha algorithm for distributed single-linkage clustering takes at most $O(\log{n})$ iterations and $O(n\nnodes + \nedges)$ communication per iteration. 

\begin{theorem}[Single-linkage Runtime] \label{thm:single:linkage}  Denote $G  = (V,E,w)$ the input weighted graph.  Let $\A$ be a centralized single-linkage clustering algorithm with a stopping predicate $\Stop$ that is local monotonic. Then the \Ha algorithm using  Algorithm~\ref{algo:stop} for stopping correctly outputs the clustering found by  $\A$. Also the \Ha algorithm requires $O(\log{n)}$ map-reduce iterations and $O(n\nnodes + \nedges)$ communication per iteration, where $n$ denotes the size of the largest cluster.
\end{theorem}

To prove Theorem~\ref{thm:single:linkage}, we first define the notion of golden clustering.

\begin{definition}[Golden Clustering]
Let $\A$ be a centralized single-linkage clustering algorithm with a stopping predicate $\Stop$ that is local monotonic. Let $\A$ output a clustering $\C$ on input graph $G$. We say that a clustering ${\cal C}_g$ (not necessarily disjoint) is a {\em golden clustering} w.r.t. ${\cal A}$ iff
$$\forall C_g \in {\cal C}_g, \exists C \in {\cal C} \mbox{ s.t. } C_{g} \subseteq C$$
\end{definition}

The proof for Theorem~\ref{thm:single:linkage} uses the notion of golden clustering. Using Lemma~\ref{existence}, we will show that in each iteration of \Ha, a golden clustering exists. Since \Ha only merges clusters,  \Ha will converge to the right clustering $\C$ found by the centralized algorithm $\A$. Additionally, we will show that the number of clusters in the golden clustering reduces by a factor of two in each \Ha iteration, hence the number of iterations is at most $\log{n}$. The rest of this section formalizes these statements.

We begin by proving the following lemma.

\begin{lemma}[Nearest neighbor to a Golden Cluster] \label{lemma:nngc}
Given a centralized algorithm ${\cal A}$ creating ${\cal C}$, let  ${\cal C}_g$ be a golden clustering w.r.t ${\cal A}$.  Then for every $C_{g} \in {\cal C}_g$, denoting $C\in \C$ the cluster in $\C$ that is the superset of $C_g$, there exists $C_n \in NN(C_g)$ such that $C_n \subseteq C$.  (Define $NN$ to take a cluster and return all nearest-neighbor clusters and $Nn$ to return the unique nearest-neighbor element.)
\end{lemma}
\begin{proof} We will first show that $Nn(C_g) \in C$. Assume the contrary: $Nn(C_g) = d \notin C$. Therefore, there exists $b \in C_g$ such that $b,d$ is the nearest-neighbor edge between $C_g$ and cluster $V$ of all nodes. In other words, the weight of edge, $w(b,d) < \min_{a \in C_g, x \in C \setminus C_g} w(a,x)$. Now consider the iteration of the centralized algorithm when it merged a subset of $C_g$ containing $b$ to form a superset of $C_g$. Let the nearest-neighbor edge for that iteration be $(p,q)$, therefore $p \in C_g$ and $q \in C \setminus C_g$. Then $w(p,q) < w(b,d)$ which contradicts the above inequality. This shows $Nn(C_g) \in C$.

To prove the above lemma, note that since $\C_g$ is a clustering, there exists a cluster $C_n \in \C_g$ that contains $Nn(C_g)$. Since $C_n \cap C$ contains $Nn(C_g)$, $C_n$ overlaps with $C$, and hence $C_n \subseteq C$. This proves the lemma.
\end{proof}

\begin{lemma}[Existence of Golden Clustering] \label{existence}
Given a centralized algorithm ${\cal A}$ creating ${\cal C}$, consider the \Ha algorithm implementing $\A$.  Let the cluster on machine $m(v)$ in iteration $k$ of \Ha be $C_v(k)$. Then there exists a subset of nodes $V_g(k) \subseteq V$  for each $k$ such that (i) if we define ${\cal C}_g(k) = \cup_{v \in V_g(k)} MCD(C_v(k))$, then ${\cal C}_g(k)$ is a golden clustering w.r.t ${\cal A}$, and (ii) $|V_g(k+1)| \leq |V_g(k)|/2$.
\end{lemma}
\begin{proof}
We prove the lemma by induction on $k$.

\noindent {\bf $k=1$:} We have ${\cal C}_k  = \{ \{v\} | v \in V\}$. Therefore, ${\cal C}_g(1)={\cal C}_k$ is a golden clustering w.r.t $\C$.

\noindent  {\bf $k\rightarrow (k+1)$:} Suppose we have a golden clustering ${\cal C}_g(k)$ in iteration $k$. 
Denote $critical(C_v) \in MCD(C_v)$ the core that is closest to its nearest neighbor, i.e  equals $$argmin_{C_{core} \in MCD(C_v)} Dist(C_{core},NN(C_{core}))$$
Denote $NN(v) = min_{e} \{ e : \exists e'  \in Nn(critical(C_v))\}$.  Thus $NN$ forms a directed graph on node set as $V_g(k)$ and nearest neighbor as directed edges. Each node $v$ has exactly one outgoing edge $NN(v)$. Each connected component in the graph has exactly one 2-cycle~(i.e. nodes $v,v'$ such that $NN(v_2) = v_1$ and $NN(v_1)=v_2)$. For all other nodes $v$, there exists a unique path of outgoing edges using at most one edge of the 2-cycle to both $v_1$ and $v_2$. Let $V_1$~($V_2$) be the set of nodes $v$ that are closer to $v_1$~($v_2$) than $v_2$~($v_1$). Define for each node $v$ in $V_1$~($V_2$), $level(v)$ as the path length of $v$ from $v_1$~($v_2$). Thus $v_1$ and $v_2$ have level $0$ and all other nodes have level $\geq 1$. Denote $even(V_1)$~($even(V_2)$) the set of nodes with level as an even number in $V_1$~($V_2$). Similarly define $odd(V_1)$~($odd(V_2)$). Then we define $V_g(k+1)$ as the smaller of the two sets: $odd(V_1) \cup even(V_2)$ or $odd(V_1) \cup even(V_2)$.

It is easy to see that from its construction, $|V_g(k+1)| \leq |V_g(k)|/2$. All that remains to be proven is that   ${\cal C}_g(k+1) = \cup_{v \in V_g(k+1)} MCD(C_v(k+1))$ is a golden clustering. First we claim that for each $v \in V_g(k)$, either $v$ or $NN(v)$ exists in  $V_g(k+1)$. This is true simply because of and $level(NN(v)) = level(v) + 1$ and hence have different odd/even parities. Moreover, \Ha will send for each $v \in V_g(k)$ the cluster $C_v(k)$ to both machines $v$ and $NN(v)$ in the $(k+1)^{th}$ iteration. Hence $V_g(k+1)$ will always have the cluster $C_v(k)$ on some machine, either alone or along with other clusters. Thus $\C_g(k+1)$ is a clustering. To show that is golden, assume that the merge step on any machine, we merge two or more overlapping clusters to get a cluster $C_v(k+1)$. According to the stopping lemma, we know for all $C_{core} \in MCD(C_v(k+1))$, $\Stop(C_{core})=0$. In other words, each $C_{core}$ in the set $\cup_{v \in V_g(k+1)} MCD(C_v(k+1))$ is a valid core. Hence ${\cal C}_g(k+1) = \cup_{v \in V_g(k+1)} MCD(C_v(k+1))$ is a golden clustering.
\end{proof}

}

\section{Experiments}
\label{sec:experiments}
In this section, we experimentally analyze the performance of the proposed algorithms for computing connected components of a graph. We also evaluate the performance of our agglomerative clustering algorithms.

\newcommand{\synth}{{\sf {\em Synth}}\xspace}
\newcommand{\movie}{{\sf {\em Movie}}\xspace}
\newcommand{\moview}{{\sf {\em MovieW}}\xspace}
\newcommand{\biz}{{\sf {\em Biz}}\xspace}
\newcommand{\bizw}{{\sf {\em BizW}}\xspace}
\newcommand{\yahoo}{{\sf {\em Social}}\xspace}
\newcommand{\yahoos}{{\sf {\em SocialSparse}}\xspace}
\newcommand{\twitter}{{\sf {\em Twitter}}\xspace}
\newcommand{\twitters}{{\sf {\em TwitterSparse}}\xspace}

\begin{figure*}[t]
\centering
\subfigure[\# of Iterations (max over 10 runs)]{
\input{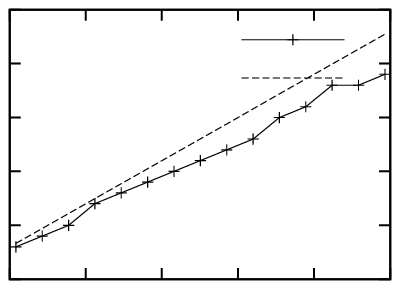}
\label{graph:pathHM_iter}
}
\subfigure[Largest intermediate data size (average 10 runs)]{
\input{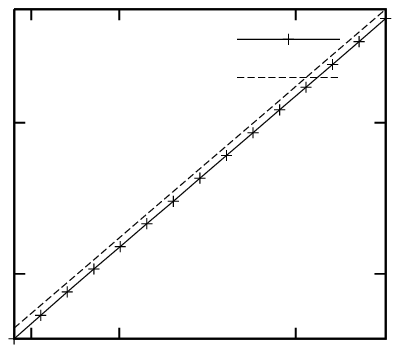}
\label{graph:pathHM_mem}
}
\caption{Analysis of \Hm on a path graph with random node orderings}
\end{figure*}

\begin{figure*}[t]
\centering
\subfigure[\# of Iterations (max over 10 runs)]{
\input{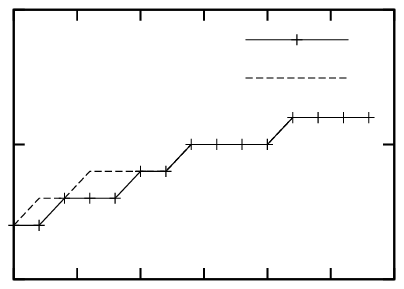}
\label{graph:treeHM_iter}
}
\subfigure[Largest intermediate data size (average 10 runs)]{
\input{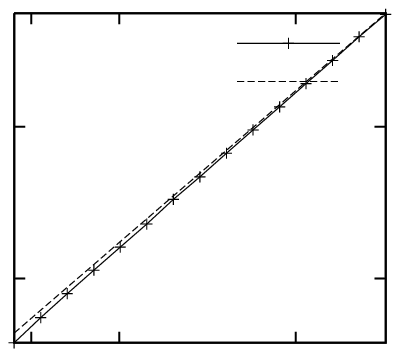}
\label{graph:treeHM_mem}
}
\caption{Analysis of \Hm on a tree graph with random node orderings}
\end{figure*}

\paragraph{Datasets:} To illustrate the properties of our algorithms we use both synthetic and real datasets.
\squishlist
\item \movie: The movie dataset has movie listings collected from Y! Movies\footnote{\url{http://movies.yahoo.com/movie/*/info}} and DBpedia\footnote{\url{http://dbpedia.org/}}. Edges between two listings  correspond to movies that are duplicates of one another; these edges are output by a pairwise matcher algorithm. Listings maybe duplicates from the same or different source, hence the sizes of connected components vary. The number of nodes $\nnodes = 431,221$~(nearly 430K) and the number of edges $\nedges = 889,205$~(nearly 890K). We also have a sample of this graph (238K nodes and 459K edges) with weighted edges, denoted \moview, which we use for agglomerative clustering experiments.

\item \biz: The biz dataset has business listings coming from two overlapping feeds that are licensed by a large internet company. Again edges between two businesses correspond to business listings that are duplicates of one another. Here, $\nnodes = 10,802,777$ (nearly 10.8M) and $\nedges = 10,197,043$ (nearly 10.2M). We also have a version of this graph with weighted edges, denoted \bizw, which we use for agglomerative clustering experiments.

\item \yahoo: The \yahoo dataset has social network edges between users of a large internet company. \yahoo has $\nnodes = 58552777$ (nearly 58M) and $\nedges = 156355406$ (nearly 156M). Since social network graphs have low diameter, we remove a random sample of edges, and generate \yahoos.  With \nedges = $15,638,853$ (nearly 15M), \yahoos graph is more sparse, but has much higher diameter than \yahoo.

\item \twitter: The \twitter dataset (collected by Cha et al \cite{mpi-twitter}) has follower relationship between twitter users. \twitter has $\nnodes = 42069704$ (nearly 42M) and $\nedges = 1423194279$ (nearly 1423M). Again we remove a random sample of edges, and generate a more sparse graph,  \twitters, with \nedges = $142308452$ (nearly 142M).

\item \synth: We also synthetically generate graphs of a varying diameter and sizes in order to better understand the properties of the algorithms.
\squishend

\subsection{Connected Components}
\paragraph{Algorithms:} We compare \Hma, \Hcm, \Ha, \Hm and its load-balanced version $\mbox{\Hm}^*$ (Section~\ref{sec:implementation}). For \Hma, we use the open-source Pegasus implementation\footnote{\url{http://www.cs.cmu.edu/~pegasus/}}, which has several optimizations over the \Hma algorithm. We implemented all other algorithms in Pig\footnote{\url{http://pig.apache.org/}} on Hadoop\footnote{\url{http://hadoop.apache.org}}. There is no native support for iterative computation on Pig or Hadoop map-reduce. 
We implement one iteration of our algorithm in Pig and drive  a loop using a python script. Implementing the algorithms on iterative map-reduce platforms like HaLoop~\cite{bu10:haloop} and Twister~\cite{ekanayake10:twister} is an interesting avenue for future work.

\begin{figure*}
\vspace{-1.5in}
\centering
\subfigure[Group I:Runtimes (in minutes)]{
\includegraphics[width=0.8\columnwidth]{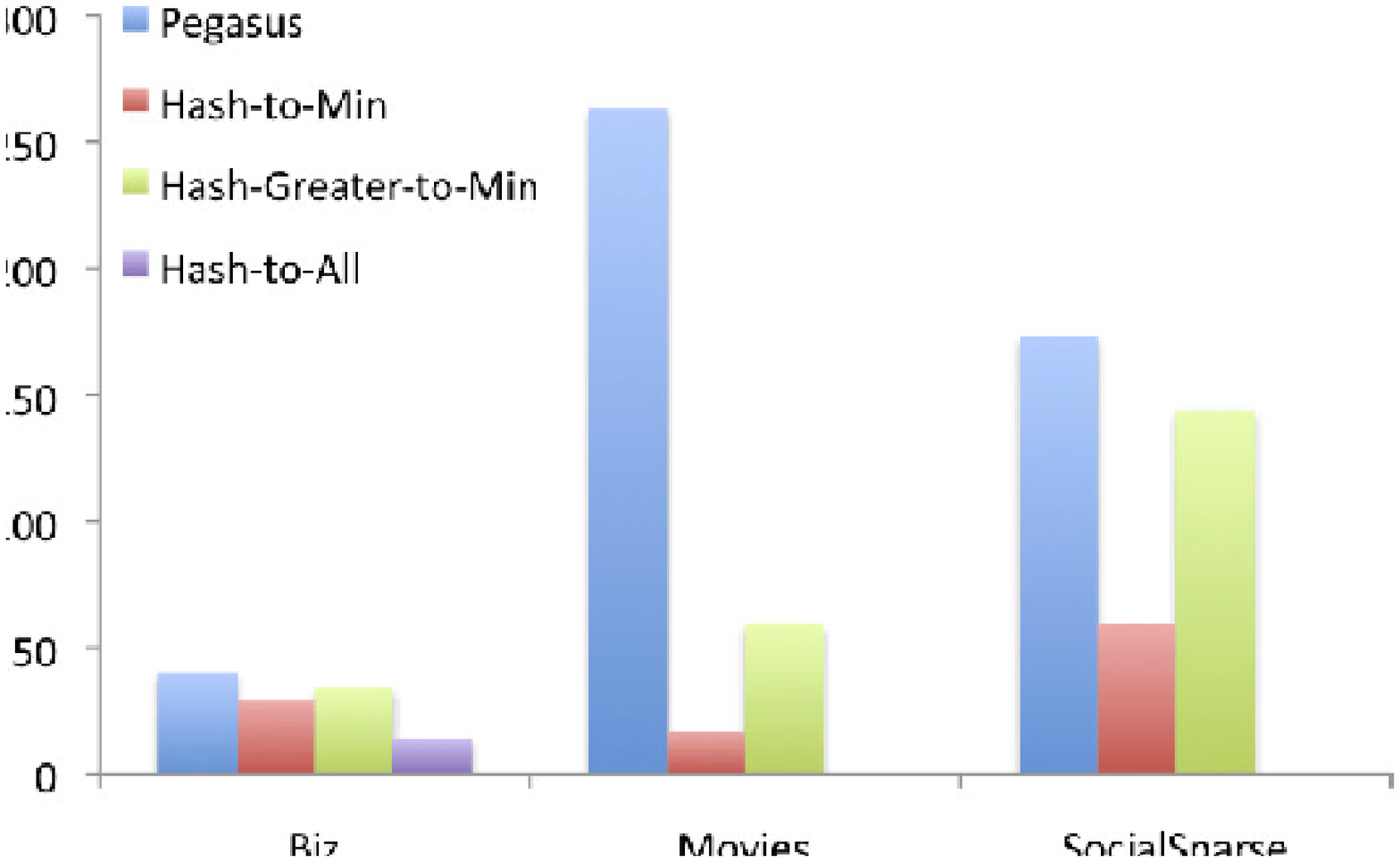}
\label{fig:cc:bars_time}
}
\subfigure[Group I: \# of Map-Reduce jobs]{
\includegraphics[width=0.8\columnwidth]{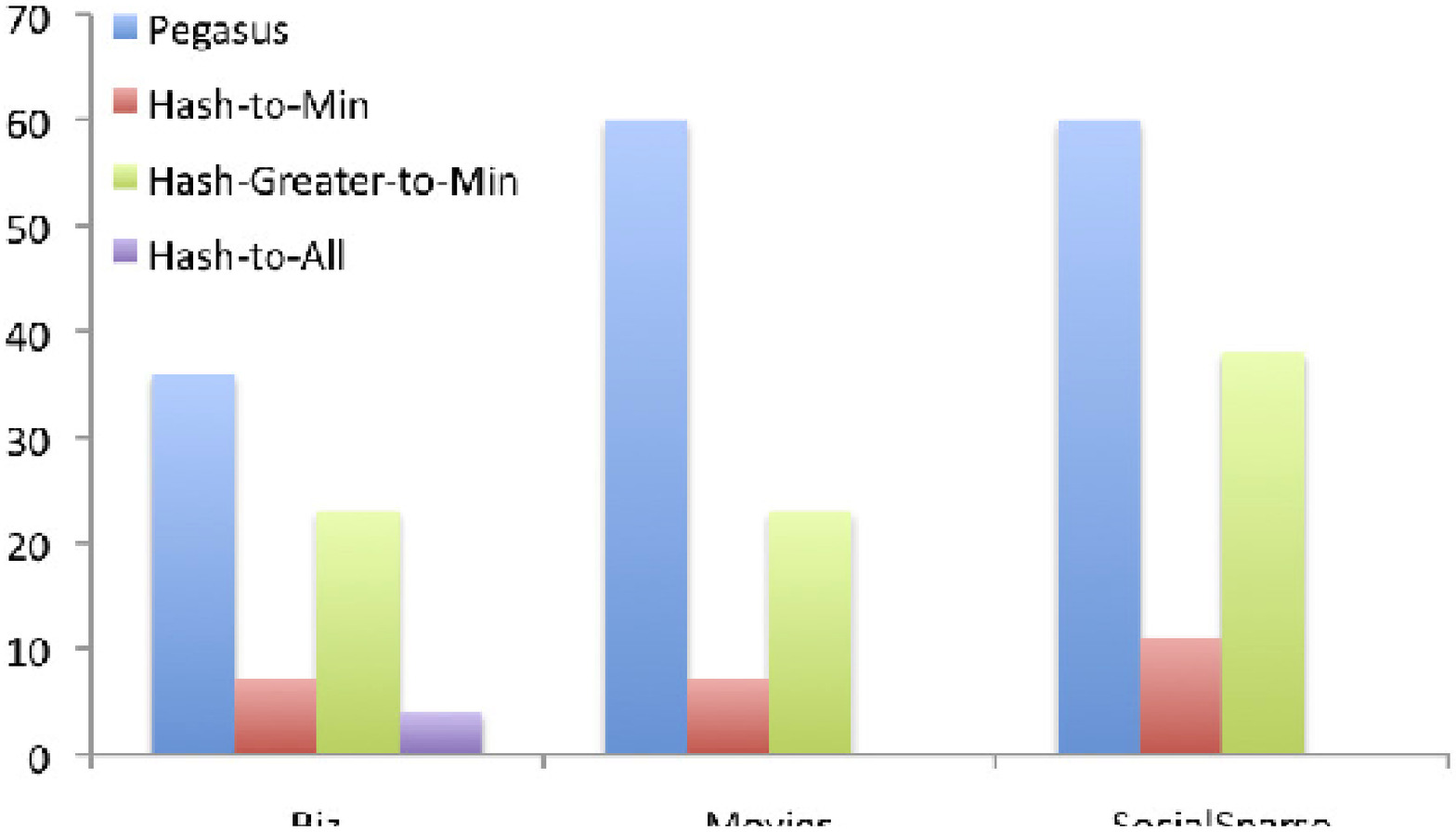}
\label{fig:cc:bars_rounds}
}
\subfigure[Group II: Runtimes (in minutes)]{
\includegraphics[width=0.8\columnwidth]{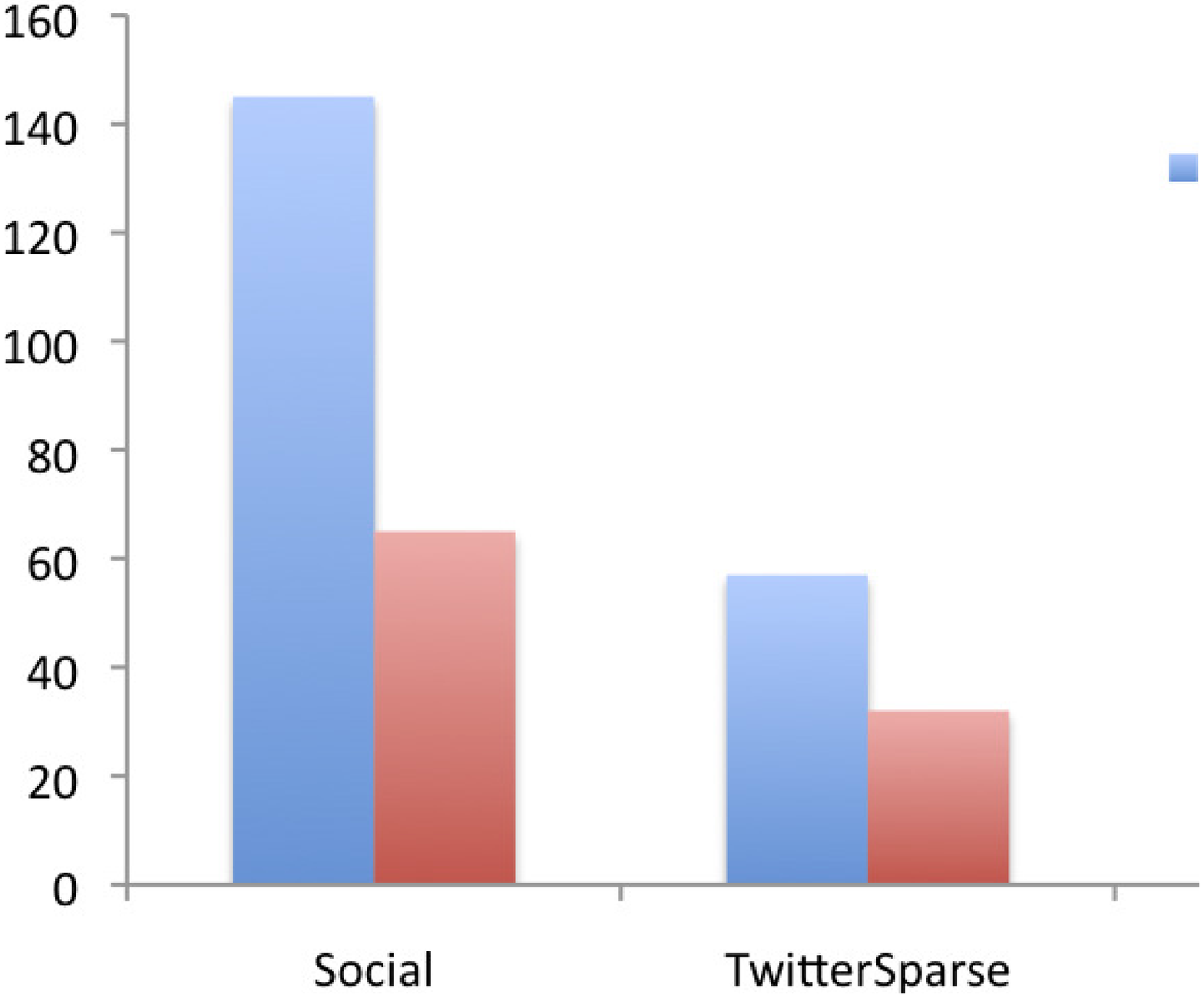}
\label{fig:cc:bars_time:group2}
}
\subfigure[Group II: \# of Map-Reduce jobs]{
\includegraphics[width=0.8\columnwidth]{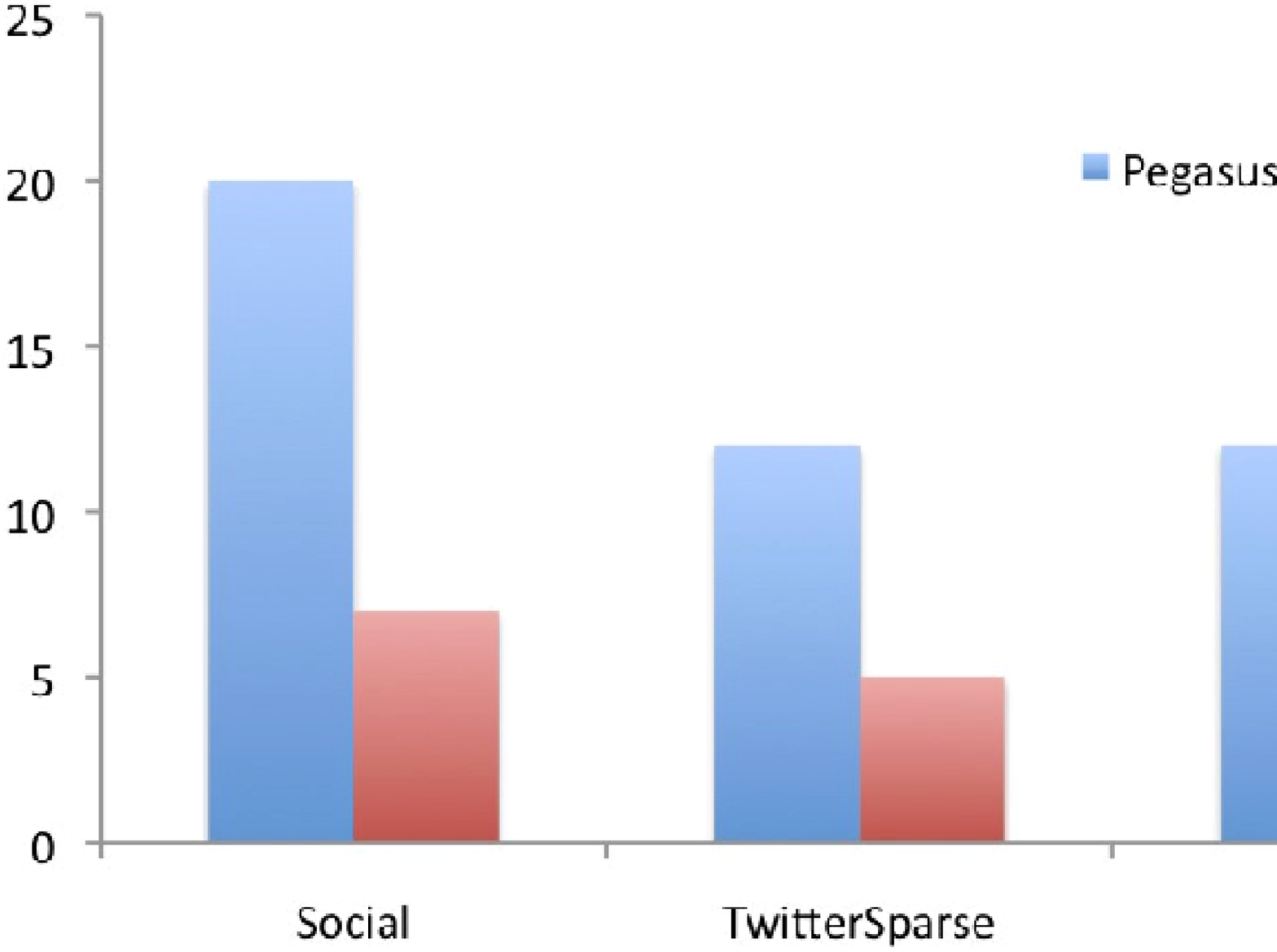}
\label{fig:cc:bars_rounds:group2}
}
\caption{Comparison of Pegasus and our algorithms on real datasets.}
\end{figure*}

\begin{table*}[t]
\centering
\begin{tabular}{ |c|c|c|c|c|c|c|c|c|c|c|c|c| }
\hline
Input & $\nnodes$  & $\nedges$ & $n$ & \multicolumn{2}{|c|}{Pegasus} &  \multicolumn{2}{|c|}{\Hm} &   \multicolumn{2}{|c|}{\Hcm}  &\multicolumn{2}{|c|}{\Ha} \\
\hline
&&&& \# MR jobs & Time &  \# MR jobs & Time  & \# MR jobs & Time & \# MR jobs & Time\\
\hline
\biz & 10.8M & 10.1M &  93 & 36 & 40 & 7 &  29 & 23 & 34 & 4 & {\bf 14} \\
\movie & 430K & 890K & 17K & 60 & 263 & 7 & {\bf 17} & 23 & 59 & DNF & DNF \\
\yahoos & 58M  & 15M & 2.9M  & 60 & 173 & 11 & {\bf 59} & 38 & 144 & DNF & DNF \\
 \hline
\end{tabular}
\caption{Comparison of Pegasus, \Hm,  \Hcm, and \Ha on the Group I datasets. Time is averaged over 4 runs and rounded to minutes. Optimal times appear in bold: in all cases either \Hm or \Ha is optimal.}
\label{table:real:all}
\end{table*}

\begin{table*}[t]
\centering
\begin{tabular}{ |c|c|c|c|c|c|c|c| }
\hline
Input & $\nnodes$  & $\nedges$ & $n$ & \multicolumn{2}{|c|}{Pegasus} &  \multicolumn{2}{|c|}{$\mbox{\Hm}^*$}  \\
\hline
&&&& \# MR jobs & Time &  \# MR jobs & Time \\
\hline
\yahoo & 58M  & 156M & 36M & 20 & 145 &  7 &  {\bf 65}  \\
\twitters & 42M  & 142M & 24M & 12 &  57 &  5 &  {\bf 32} \\
\twitter & 42M  & 1423M & 42M & 12 &  61 &  5 &  {\bf 50} \\
 \hline
\end{tabular}
\caption{Comparison of Pegasus and the $\mbox{\Hm}^*$ algorithm on Group II datasets. Time is averaged over 4 runs and rounded to minutes. Optimal times appear in bold: in all cases, $\mbox{Hm}^*$ is optimal.}
\label{table:real:group2}
\end{table*}

\begin{figure*}[t]
\vspace{-1.2in}
\centering
\subfigure[Runtimes (in minutes)]{
\includegraphics[width=0.8\columnwidth]{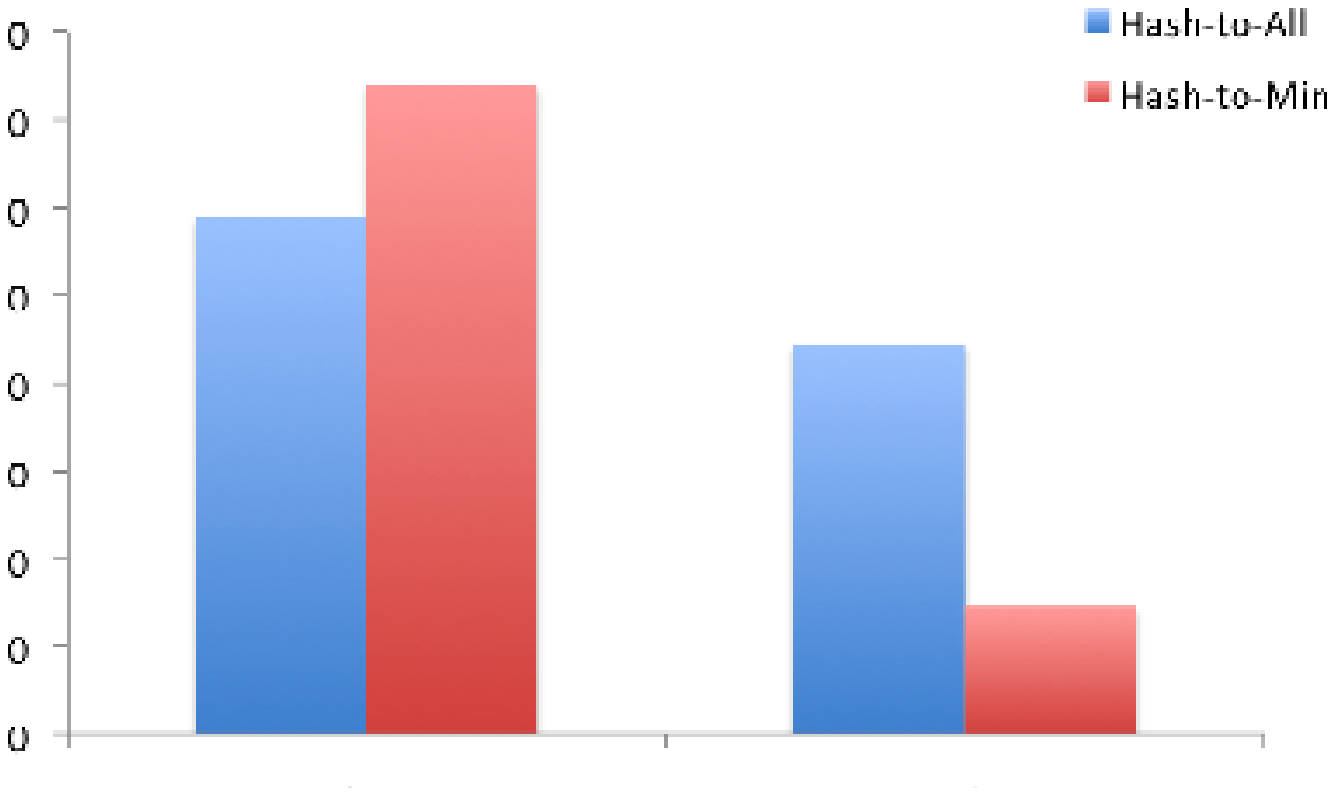}
\label{fig:slc:bars_time}
}
\subfigure[\# of Map-Reduce jobs]{
\includegraphics[width=0.8\columnwidth]{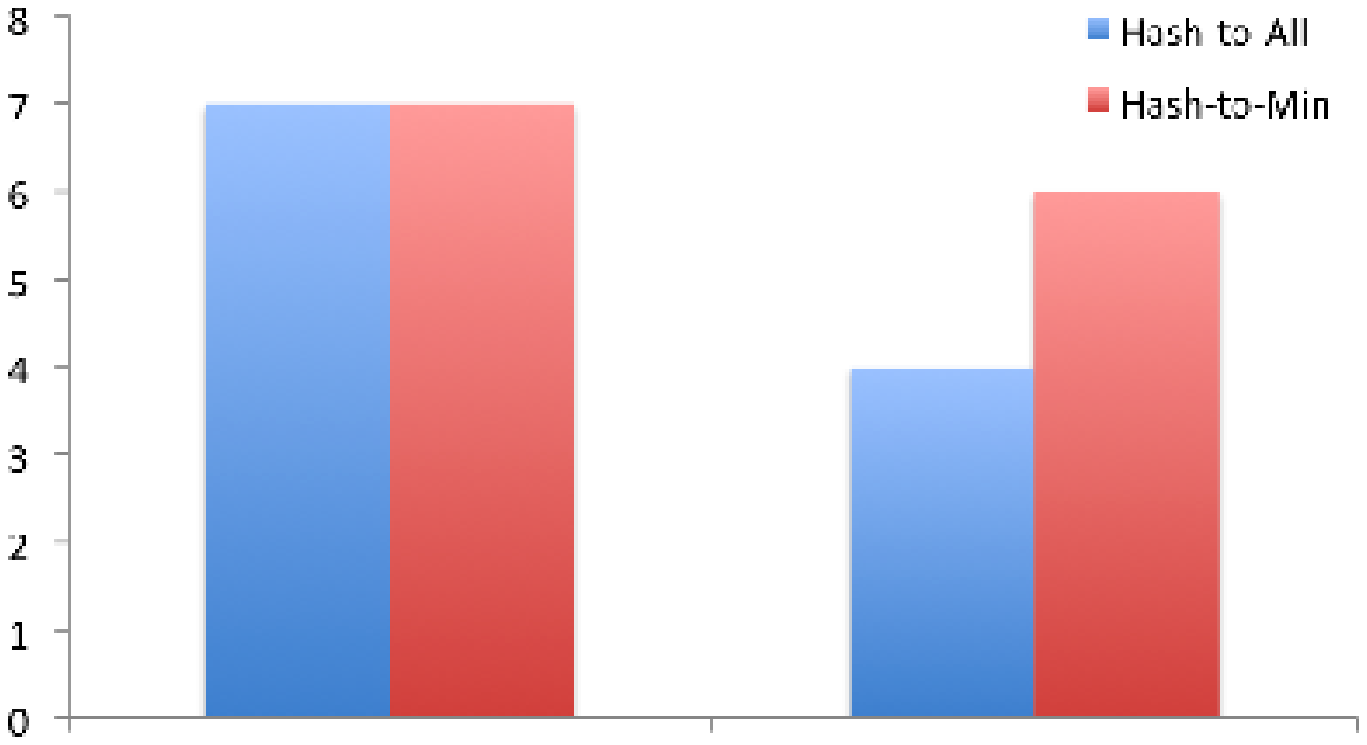}
\label{fig:slc:bars_rounds}
}
\vspace{-0.1in}
\caption{Comparison of \Ha and  \Hm for single linkage clustering on \bizw and \moview. }
\end{figure*}

 \subsubsection{Analyzing \Hm on Synthetic Data}
 We start by experimentally analyzing the rounds complexity and space requirements of \Hm. We run it on two kinds of synthetic graphs: paths and complete binary trees. We use synthetic data for this experiment so that we have explicit control over parameters  $d$, $\nnodes$, and $\nedges$. Later we report the performance of \Hm on real data as well. We use path graphs since they have largest  $d$ for a given $\nnodes$ and complete binary trees since they give a very small $d = \log{\nnodes}$. 

For measuring space requirement, we measure the largest intermediate data size in any iteration of \Hm. Since the performance of \Hm depends on the random ordering of the nodes chosen, we choose $10$ different random orderings for each input. For number of iterations, we report the worst-case among runs on all random orderings, while for intermediates data size we average the maximum intermediate data size over all runs of \Hm. This is to verify our conjecture that number of iterations is $2 \log{d}$ in the worst-case (independent of node ordering) and intermediate space complexity is $O(\nnodes + \nedges)$ in expectation (over possible node orderings).

For path graphs, we vary the number of nodes from $32$~($2^5$) to $524,288$~($2^{19})$. In Figure~\ref{graph:pathHM_iter}, we plot the number of iterations (worst-case over 10 runs on random orderings) with respect to $\log{d}$. Since the diameter of a path graph is equal to number of nodes, $d$ varies from $32$ to $524,288$ as well. As conjectured the plot is linear and always lies below the line corresponding to $2\log{d}$. In Figure~\ref{graph:pathHM_mem}, we plot the largest intermediate data size (averaged over 10 runs on random orderings) with respect to $\nnodes + \nedges$. Note that both x-axis and y-axis are in log-scale. Again as conjectured, the plot is linear and always lies below $3(\nnodes+\nedges)$.

For complete binary trees, we again vary the number of nodes from $32$~($2^5$) to $524,288$~($2^{19})$. The main difference from the path case is that for a complete binary tree, diameter is $2\log(\nnodes)$ and hence the diameter varies only from $10$ to $38$. Again in Figure~\ref{graph:treeHM_iter}, we see that the rounds complexity still lies below the curve for $2\log{d}$ supporting our conjecture even for trees. In Figure~\ref{graph:treeHM_mem}, we again see that space complexity grows linearly and is bounded by $3(\nnodes + \nedges)$.

\eat{
\begin{table*}[t]
\centering
\begin{tabular}{ |c|c|c|c|c|c|c|c|c|c|c|c|c| }
\hline
Input & $\nnodes$  & $\nedges$ & $n$ & \multicolumn{2}{|c|}{Pegasus} &  \multicolumn{2}{|c|}{\Hm} &   \multicolumn{2}{|c|}{\Hcm}  &\multicolumn{2}{|c|}{\Ha} \\
\hline
&&&& \# of MR jobs & Total Time &  \# of MR jobs & Total Time  & \# of MR jobs & Total Time & \# of MR jobs & Total Time\\
\hline
\biz & 10.8M & 10.1M &  & 36 & 0:40:07 & 8 &  00:29:20 & 4 & 0:14:26 \\
\movie & 430K & 890K &  17K & 60 & 4:23:08 & 15 & 1:14:28 & DNF & DNF \\
\yahoo & 17M  & 15M & 2.9M  & 60 & 2:53:48 & 11 & 00:59:14 & DNF & DNF \\
 \hline
\end{tabular}
\caption{Comparison of Pegasus, \Hm, and \Ha on real datasets. In all cases either \Hm or \Ha is optimal in terms of total time.}
\end{table*}
}

\subsubsection{Analysis on Real Data}
We next compared \Hm, \Hcm, and \Ha algorithms on real datasets against Pegasus~\cite{pegasus}. To the best of our knowledge, Pegasus is the fastest technique on MapReduce for computing connected components.
Although all datasets are sparse (have average degree less than 3), each dataset has very different distribution on the size $n$ of the largest connected components and graph diameter $d$. We partition our datasets into two groups --  {\em group I} with $d>=20$ and relatively small $n$, and {\em group II} with $d<20$ and very large $n$.

\paragraph{Group I: Graphs with large $d$ and small $n$:}
This group includes \biz, \movie, and \yahoos datasets that have large diameters ranging from $20$ to $80$. On account of large diameters, these graphs requires more MR jobs and hence longer time, even though they are somewhat smaller than the graphs in the other group. These graphs have small connected components that fit in memory.

Each connected component in the \biz dataset represents the number of duplicates of a real-world entity. Since there are only two feeds creating this dataset, and each of the two feeds is almost void of duplicates, the size of most connected components is 2. In some extreme cases, there are more duplicates, and the largest connected component we saw had size 93. The \movie dataset has more number of sources, and consequently significantly more number of duplicates. Hence the size of some of the connected components for it is significantly larger, with the largest containing 17,213 nodes. Finally, the \yahoos dataset has the largest connected component in this group, with the largest having 2,945,644 nodes. Table~\ref{table:real:all} summarizes the input graph parameters. It also includes the number of map-reduce jobs and the total runtime for all of the four techniques.

Differences in the connected component sizes has a very interesting effect on the run-times of the algorithm as shown in Figures~\ref{fig:cc:bars_time} and \ref{fig:cc:bars_rounds}. On account of the extremely small size of connected components, runtime for all algorithms is fastest for the \biz dataset, even though the number of nodes and edges in \biz is larger than the \movie dataset. \Ha has the best performance for this dataset, almost 3 times faster than Pegasus. This is to be expected as \Ha just takes 4 iterations~(in general it takes $\log{d}$ iterations) to converge. Since connected components are small, the replication of components, and the large intermediate data size does not affect its performance that much. We believe that \Ha is the fastest algorithm whenever the intermediate data size is not a bottleneck. \Hm takes twice as many iterations ($2\log{d}$ in general) and hence takes almost twice the time. Finally, Pegasus takes even more time because of a larger number of iterations, and a larger number of map-reduce jobs.

For the \movie and \yahoos datasets, connected components are much larger. Hence \Ha does not finish on this dataset due to large intermediate data sizes. However, \Hm beats Pegasus by a factor of nearly $3$ in the \yahoos dataset since it requires a fewer number of iterations. On movies, the difference is the most stark: \Hm has 15 times faster runtime than Pegasus again due to significant difference in the number of iterations.

\paragraph{Group II: Graphs with small $d$ and large $n$:}
This group includes \yahoo, \twitters, and \twitter dataset that have a small diameter of less than $20$, and results are shown in Figures~\ref{fig:cc:bars_time:group2} and~\ref{fig:cc:bars_rounds:group2} and Table~\ref{table:real:group2}. Unlike Group I, these datasets have very large connected components, such that even a single connected component does not fit into memory of a single mapper. Thus we apply our robust implementation of \Hm (denoted $\mbox{\Hm}^*$) described in Sec.~\ref{sec:implementation}. 

The $\mbox{\Hm}^*$ algorithm is nearly twice as fast as pegasus, owing to reduction in the number of MR rounds. Only exception is the \twitter graph, where reduction in times is only 18\%. This is because the \twitter graph has some nodes with very high degree, which makes load-balancing a problem for all algorithms. 


\subsection{\Slc}
We implemented \slc on map-reduce using both \Ha and \Hm hashing strategies. We used these algorithms to cluster the \moview and \bizw datasets. Figures~\ref{fig:slc:bars_time} and \ref{fig:slc:bars_rounds} shows the runtime and number of map-reduce iterations for both these algorithms, respectively. Analogous to our results for connected components, for the \moview dataset, we find that \Hm outperforms \Ha both in terms of total time as well as number of rounds. On the \bizw dataset, we find that both \Hm and \Ha take exactly the same number of rounds. Nevertheless, \Ha takes lesser time to complete that \Hm. This is because some clusters (with small $n$) finish much earlier in \Ha; finished clusters reduce the amount of communication required in further iterations.

\newcommand{\mr}[1]{{\sc mr}-{#1}}
\newcommand{\bsp}[1]{{\sc bsp}-{#1}}

\begin{table}[t]
{\small
\begin{tabular}{|c||c|c||c|c|}
\hline
{\bf k} & \multicolumn{2}{|c||}{{\bf \mr{k} Completion Time}} & \multicolumn{2}{|c|}{{\bf \bsp{k} Completion Time}} \\
& {\bf maximum} & {\bf median} & {\bf maximum} & {\bf median} \\
\hline
1 &  10:45  & 10:45 & 7:40 & 7:40 \\
5 &  17:03  & 11:30 & 11:20 & 8:05 \\
10 &  19:10  & 12:46 & 23:39 & 15:49 \\
15 &  28:17 & 26:07 & 64:51 & 43:37 \\
\hline
\end{tabular}
}
\caption{\label{table:bsp} Median and maximum completion times (in min:sec) for $k$ connected component jobs deployed simultaneously using map-reduce (\mr{k}) and Giraph (\bsp{k})}
\vspace{-0.3in}
\end{table}
\subsection{Comparison to Bulk Synchronous Parallel Algorithms}\label{sec:bspexp}

Bulk synchronous parallel (BSP) paradigm is generally considered more efficient for graph processing than map-reduce as it has less setup and overhead costs for each new iteration. While the algorithmic improvements of reducing number of iterations presented in this paper are important independent of the underlying system used, these improvements are of less significance in BSP due to low overhead of additional iterations.

In this section, we show that BSP does not necessarily dominate Map-Reduce for large-scale graph processing~(and thus our algorithmic improvements for Map-Reduce are still relevant and important). We show this by running an interesting experiment in shared grids having congested environments. We took the \movie graph and computed connected components using \Hm (map-reduce, with 50 reducers) and using \Hma \footnote{\Hma is used as it is easier to implement and not much different than \Hm in terms of runtime for BSP environment} on Giraph \cite{Ching2010:Giraph} (BSP, with 100 mappers), an open source implementation of Pregel \cite{Malewicz2010:Pregel} for Hadoop. We deployed $k = 1$, $5$, $10$, and $15$ copies of each algorithm (denoted by \mr{k} and \bsp{k}), and tracked the maximum and median completion times of the jobs. The jobs were deployed on a shared Hadoop cluster with 454 map slots and 151 reduce slots, and the cluster experienced normal and equal load from other unrelated tasks.

Table~\ref{table:bsp} summarizes our results. As expected, \bsp{1} outperforms \mr{1};\footnote{The time taken for \mr{1} is different in Tables~\ref{table:real:all} and \ref{table:bsp} since they were run on different clusters with different number of reducers (100 and 50 resp.).} unlike map-reduce, the BSP paradigm does not have the per-iteration overheads. However, as $k$ increases from 1 to 15, we can see that the maximum and median completion times for jobs increases at a faster rate for \bsp{k} than for \mr{k}. This is because the BSP implementation needs to hold all 100 mappers till the job completes. On the other hand, the map-reduce implementation can 
naturally parallelize the map and reduce rounds of different jobs, thus eliminating the impact of per round overheads.  So it is not surprising that while all jobs in \mr{15} completed in about 20 minutes, it took an hour for jobs in \bsp{15} to complete. Note that the cluster configurations favor BSP implementations since the reducer capacity (which limits the map-reduce implementation of \Hm) is much smaller ($< 34\%$) than the mapper capacity (which limits the BSP implementation of \Hma). We also ran the experiments on clusters with higher ratios of reducers to mappers, and we observe similar results (not included due to space constraints) showing that map-reduce handles congestion better than BSP implementations.

\eat{
\ashwinnote{
\begin{itemize}
\item {\em Number of rounds (Synth):} On synthetic data show the number of rounds versus $d$ or $\log d$.
\item {\em Number of rounds (Real):} On real graphs show the number of rounds
\item {\em Communication cost:} On real data show the total communication cost (using Map-reduce logs)
\item {\em Any other ..... ??}
\end{itemize}
}
}

\section{Conclusions and Future Work}
\label{sec:conclusions}
In this paper we considered the problem of find connected components in a large graph. We proposed the first map-reduce algorithms that can find the connected components in logarithmic number of iterations -- {\em (i)} \Hcm, which provably requires at most $3\log{n}$ iterations with high probability, and at most $2(\nnodes + \nedges)$ communication per iteration, and {\em (ii)} \Hm, which has a worse theoretical complexity, but  in practice completes in at most $2 \log d$ iterations and $3(\nnodes + \nedges)$ communication per iteration; $n$ is the size of the largest component and $d$ is the diameter of the graph. 
We showed how to extend our techniques to the problem of single linkage clustering, and proposed the first algorithm that computes a clustering in provably $O(\log n)$ iterations. 

\balance
\bibliographystyle{abbrv}
\bibliography{bib/disconnect}

\normalsize{
\clearpage
\onecolumn
\appendix
\section{Connected Components}
\subsection{Proof of Theorem~\ref{thm:Hm:rounds}}
\label{sec:Hm}

We first restate Theorem~\ref{thm:Hm:rounds} below.

\begin{theorem}[\ref{thm:Hm:rounds}]
 Let $G=(V,E)$ be a path graph (i.e. a tree with only nodes of degree 2 or 1). Then, \Hm correctly computes the connected component of  $G = (V,E)$ in $4\log{n}$ map-reduce rounds.
\end{theorem}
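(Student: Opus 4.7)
I would prove the theorem by combining a reflexivity invariant of the \Hm\ iteration with a Shiloach--Vishkin style pointer-jumping analysis on the label function $\ell_k(v) := \min \TC_v^{(k)}$.

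\textbf{Step 1 (Reflexivity).} First, by induction on $k$, I would establish the symmetry invariant that $u \in \TC_v^{(k)}$ if and only if $v \in \TC_u^{(k)}$, for every pair $u,v$ of nodes in the same connected component. The base case $k=0$ is immediate from $\TC_v^{(0)} = \{v\} \cup \nbrs(v)$ and the symmetry of the edge relation. For the inductive step, unpack the \Hm\ hash: any $w \in \TC_v^{(k+1)}$ enters either because some mapper $u$ with $\ell_k(u) = v$ forwarded its entire cluster (with $w \in \TC_u^{(k)}$) or because $w = \ell_k(u)$ for some $u$ with $v \in \TC_u^{(k)}$. In each case a short case analysis, using the induction hypothesis to flip the direction of the hypothesized membership, produces a symmetric message that places $v$ into $\TC_w^{(k+1)}$.

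\textbf{Step 2 (Closed-form update and pointer jumping).} Using reflexivity and the definition of the hash function, I would derive the clean update rule
\[
\ell_{k+1}(v) \;=\; \min\bigl\{\, \ell_k(u) : u \in \TC_v^{(k)} \,\bigr\}.
\]
Since $\ell_k(v) \in \TC_v^{(k)}$, this implies the pointer-jumping inequality $\ell_{k+1}(v) \le \ell_k\bigl(\ell_k(v)\bigr)$: one iteration of \Hm\ advances a label at least as far up the ``ancestor chain'' of $\ell_k$ as two steps of $\ell_k$ do.

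\textbf{Step 3 (Depth halves per round).} Let $D_k$ denote the maximum, over $v$ in a connected component, of the number of applications of $\ell_k$ needed to reach its fixed point (the component minimum $g$). On a path graph, the worst case --- IDs sorted monotonically along the path --- gives $D_0 \le n-1$. The pointer-jumping inequality then yields $D_{k+1} \le \lceil D_k/2 \rceil$, so after at most $\lceil \log n\rceil$ rounds every node's label equals $g$.

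\textbf{Step 4 (Cluster finalization).} Once $\ell_k(v) = g$ for every $v$ in the component, in any subsequent round every non-root mapper ships $\TC_v^{(k)}$ to reducer $g$ while only $\{g\}$ is returned to the non-root reducers, so within a constant number of additional rounds the algorithm reaches the terminal state described in Theorem~\ref{lemma:hm}. Combining, the total number of rounds is $\lceil \log n \rceil + O(1)$, which is at most $4 \log n$ for $n \ge 2$.

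\textbf{Main obstacle.} The delicate step is Step~1. Without reflexivity, the update of $\TC_v^{(k+1)}$ mixes two asymmetric streams of information --- full-cluster forwards from mappers with $\ell_k(u) = v$, and singleton $\{\ell_k(u)\}$ broadcasts from mappers with $v \in \TC_u^{(k)}$ --- and the clean $\min$-expression of Step~2, and hence the halving of Step~3, no longer goes through. The factor $4$ in the stated bound is loose slack that accommodates both the $\lceil \log n \rceil$ rounds for label convergence and the additional constant rounds for cluster finalization; the restriction to path graphs in the statement appears to be used only to bound the initial depth $D_0$ concretely rather than being forced by the dynamics.
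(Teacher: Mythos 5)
Your Steps 1 and 2 are correct, and Step 1 is in fact easier than you fear: the two emission rules of \Hm are mirror images of each other (mapper $w$ ships $\TC_w$ to $\min\TC_w$ and ships $\{\min\TC_w\}$ to every $u\in\TC_w$), so the condition ``$u$ enters $\TC_v$ via a full-cluster forward from $w$'' is literally the same condition as ``$v$ enters $\TC_u$ via a singleton broadcast from $w$''; symmetry holds after every round with essentially no induction. The genuine gap is Step 3. Your quantity $D_k$ is not well defined on a path with an arbitrary id ordering: iterating $\ell_k$ from a node converges to a \emph{fixed point of $\ell_k$}, which at $k=0$ is the nearest local minimum of the ordering along the path, not the global component minimum $g$. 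On the path with ids $2,1,4,3,6,5,\dots$ the chain $6\mapsto 3\mapsto 3\mapsto\cdots$ under $\ell_0$ never reaches $1$, so $D_0=\infty$ and the recurrence $D_{k+1}\le\lceil D_k/2\rceil$ has no content. The inequality $\ell_{k+1}(v)\le\ell_k(\ell_k(v))$ only halves depth \emph{within the basin of a single local minimum}; it says nothing about how fast the basins of distinct local minima coalesce, and that coalescence is the entire difficulty. Your closing remark that the path restriction ``appears to be used only to bound $D_0$'' misdiagnoses this.

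The paper's proof spends all of its effort on exactly that interaction: it defines a hierarchy of levels (level $i$ nodes are the local minima among the level $i-1$ nodes in path order), shows there are at most $\log n$ nontrivial levels, and proves a two-parameter recurrence whose solution is $T(k,\ell)\ge 2^{k/2-\ell}$; the bound $k\le 4\log n$ then follows from $2^{k/2-\log n}\le n$, so the factor $4$ is $2$ (for doubling) times $2$ (for the $\log n$ levels), not loose slack around $\lceil\log n\rceil+O(1)$. Indeed, your claimed $\lceil\log n\rceil+O(1)$ would beat even the authors' conjectured $2\log d$ bound, a further sign the argument cannot close as stated. To repair your approach you would need at minimum a second induction quantity tracking how quickly labels cross from one local minimum's basin into the next, which is essentially the role of the paper's $M(k,\ell)$.
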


To prove the above theorem, we first consider path graphs when node ids increase from left to right. Then we show the result for path graphs with arbitrary ordering.

\begin{lemma}\label{lemma:path}
Consider a path with node ids increasing from the left to right. Then after $k$ iterations of the Hash-to-min algorithm,
\begin{itemize}
\item For every node $j$ within a distance of $2^k$ from the minimum node $m$, $m$ knows $j$ and $j$ knows $m$.
\item For every pair of nodes $i, j$ that are a distance $2^k$ apart, $i$ knows $j$ and $j$ knows $k$.
\item Node $j$ is not known to and does not know any node $i$ that is at a distance $>2^k$.
\end{itemize}
\end{lemma}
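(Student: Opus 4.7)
Plan: Induct on $k$. For the base case $k=0$, the initialization $\TC_v = \{v\} \cup \nbrs(v)$ on a path gives $\TC_v \subseteq \{v-1, v, v+1\}$, so the min $m=1$ knows only its unique neighbor $2$, adjacent pairs lie in each other's clusters, and no node yet knows anything at distance greater than $1 = 2^0$. All three properties hold immediately.

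For the inductive step, assume (1)--(3) after $k$ rounds and prove them at $k+1$. I would handle (3) first, since it controls every message sent in round $k+1$. Mapper $v$ only writes to reducer $r$ when $r \in \TC_v^{(k)}$, so property (3) at $k$ forces $|v-r| \le 2^k$. The payload is either the full cluster $\TC_v^{(k)} \subseteq [v-2^k, v+2^k]$ (again by (3) at $k$) or the singleton $\{v_{\min}(\TC_v^{(k)})\}$, which lies in that same interval. In either case every node that reducer $r$ receives is within $2^k + 2^k = 2^{k+1}$ of $r$, giving (3) at $k+1$.

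For (2) at $k+1$, given $i < j$ with $j - i = 2^{k+1}$, I would introduce the midpoint $p = i + 2^k = j - 2^k$. By (2) at $k$, both $i, j \in \TC_p^{(k)}$; by (3) at $k$, $\TC_p^{(k)} \subseteq [i, j]$, which forces $v_{\min}(\TC_p^{(k)}) = i$. Therefore in round $k+1$, mapper $p$ ships $\TC_p^{(k)}$ to reducer $i$ (putting $j$ into $\TC_i^{(k+1)}$) and sends $\{i\}$ to every reducer in $\TC_p^{(k)}$, in particular to reducer $j$ (putting $i$ into $\TC_j^{(k+1)}$). Property (1) at $k+1$ is obtained by the same template: for each existing $j$ at distance at most $2^{k+1}$ from $m=1$, the witness $v = \max(1, j - 2^k) \in [1, 1 + 2^k]$ has $1, j \in \TC_v^{(k)}$ by (1) and (2) at $k$, hence $v_{\min}(\TC_v^{(k)}) = 1$, so mapper $v$ forwards $\TC_v^{(k)}$ (containing $j$) to reducer $1$ and $\{1\}$ to reducer $j$.

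The main obstacle will be ensuring that (3) is sharp enough to pin the midpoint cluster: without the tight containment $\TC_p^{(k)} \subseteq [i, j]$ one could not conclude $v_{\min}(\TC_p^{(k)}) = i$, and mapper $p$ could then divert its full cluster to some smaller reducer, breaking the distance-doubling step. Proving (3) before (1) and (2) in the induction makes this containment available exactly when needed. Edge effects near the endpoints of the path are harmless, because the lemma only quantifies over nodes that actually exist on the path, and the witness node $v$ used above exists whenever the target $j$ does.
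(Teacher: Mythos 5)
Your proof is correct and follows essentially the same induction as the paper's: the same three invariants, the same midpoint witness for the distance-doubling step in items (1) and (2), and the same reasoning that the minimum of the midpoint's cluster is the left endpoint. If anything, your treatment of item (3) --- bounding both the full-cluster payload and the singleton payload to show nothing travels farther than $2^{k+1}$ --- is slightly more complete than the paper's version, which only explicitly argues about how a node could acquire a \emph{smaller} node.
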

\begin{proof}
The proof is by induction. \\
{\em Base Case:} After 1 iterations, each node knows about its 1-hop and 2-hop neighbors (on either side).  \\
{\em Induction Hypothesis:} Suppose the claim holds after $k-1$ iterations. \\
{\em Induction Step:} \\
In the $k^{th}$ iteration, consider a node $j$ that is at a distance $d$ from the min node $m$, where $2^{k-1} < d \leq  2^k$. From the induction hypothesis, there is some node $i$ that is $2^{k-1}$ away from $j$ that knows $j$. Since, $m$ is known to $i$ (from induction hypothesis), the Hash-to-min algorithm would send $j$ to $m$ and $m$ to $j$ in the current iteration. Therefore, $m$ knows $j$ and $m$ is known to $j$.

Consider a node $j$ that is $>2^k$ distance from the min node. At the end of the previous iteration, $j$ knew (and was known to) $i$, and $i$ knew and was known to $i'$ -- where $i$ and $i'$ are at distance $2^{k-1}$ from $j$ and $i$ respectively. Moreover, $i$ did not know any node $i''$ smaller than $i'$. Therefore, in the current iteration, $i$ sends $i'$ to $j$ and $j$ to $i'$. Therefore, $j$ knows and is known to a node that is $2^k$ distance away.

Finally, we can show that a node does not know (and is not known to) any node that is distance $>2^k$ as follows. Node $j$ can only get a smaller node $i'$ if $i'$ is a minimum at some node $i$. Since in the previous step no one knows a node at distance $>2^{k-1}$, $j$ cannot know a node at distance $>2^k$. 
\end{proof}

Now we extend the proof for arbitrary path graphs. Denote $min_k(u)$ the minimum node after $k$ iterations that $u$ knows that also knows $u$. Also denote $\Delta(u,v)$ the distance between node $u$ and $v$.
\begin{definition}[Local Minima]
A node $v$ is local minimum if all its neighbors have id larger than $v$'s id.
\end{definition}

For a path graph, we define the notion of levels below.
\begin{definition}[Levels] Given a path, level $0$ consists of all nodes in the path. Level $i$ is then defined recursively as nodes that are local minimum nodes among the nodes at level $i-1$, if the level $i-1$ nodes are arranged in the order in which they occur in the path. Denote the set of nodes at level $i$ as $level(i)$.
\end{definition}

\begin{proposition} \label{prop:levels:number} The number of levels having more than $1$ node is at most $\log{n}$.
\end{proposition}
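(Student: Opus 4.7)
My plan is to establish the recurrence $|level(i+1)| \leq \lceil |level(i)|/2 \rceil$ for each level $i$, from which the $\log n$ bound follows by iteration. The key structural observation is that in any path, two adjacent nodes cannot both be local minima: if $u$ and $v$ are neighbors in the path and, without loss of generality, $u < v$, then $v$ has a neighbor (namely $u$) with a smaller id and therefore fails the local minimum condition. So local minima in a path form an independent set in the line graph, whence there are at most $\lceil m/2 \rceil$ of them among $m$ consecutive nodes.

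Concretely, I would first fix a level $i$ and consider the sequence of nodes $v^{(i)}_1, v^{(i)}_2, \dots, v^{(i)}_{m_i}$ in $level(i)$ listed in the order they appear along the original path; by the recursive definition, $level(i+1)$ is exactly the set of local minima of this sequence (viewed as a path where $v^{(i)}_j$ and $v^{(i)}_{j+1}$ are adjacent). Apply the observation above to this sequence: for each consecutive pair $(v^{(i)}_j, v^{(i)}_{j+1})$ at most one can be a local minimum, so pairing up positions $\{1,2\}, \{3,4\}, \dots$ yields at most $\lceil m_i/2 \rceil$ local minima. This gives $m_{i+1} \leq \lceil m_i/2 \rceil$.

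Starting from $m_0 = n$, a straightforward induction yields $m_i \leq \lceil n/2^i \rceil$. Therefore, once $i \geq \lceil \log n \rceil$ we have $m_i \leq 1$, so any level with strictly more than one node must satisfy $i < \lceil \log n \rceil$, establishing the bound of at most $\log n$ such levels.

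I do not expect a serious obstacle here; the only small subtlety is handling the endpoints of the path at each level (which have only one neighbor and are always local minima if their single neighbor is larger), but this does not affect the halving argument since endpoints still occupy only one position each in the pairing and the ceiling absorbs any off-by-one issue. The proposition is essentially a clean consequence of the non-adjacency of local minima in a totally ordered path.
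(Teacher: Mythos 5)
Your proof is correct and takes essentially the same route as the paper's: both rest on the observation that no two consecutive nodes at a level can be local minima, giving a halving recurrence and hence at most $\log n$ levels with more than one node. Your version is in fact slightly more careful, since the correct recurrence is $m_{i+1} \le \lceil m_i/2\rceil$ (a path of odd length such as $1,3,2$ has $2 > 3/2$ local minima, so the paper's stated bound $|level(i)| \le |level(i-1)|/2$ needs the ceiling you supply), and the $\log n$ conclusion survives unchanged.
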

\begin{proof} The proof follows from the fact that for each level $i$, no consecutive nodes can be local minimum. Hence $|level(i)| \leq |level(i-1)/2)$.
\end{proof}

\begin{lemma}\label{lemma:path:levels}
Consider a path $P$ with three segments $P_1, P_2$ and $P_3$, where $P_1$ and $P_3$ are arbitrary, and $P_2$ has $r+1$  level $\ell$ nodes $l_1,l_2,...l_r,m_1$  going from left to right. Assume that labels are such that $l_1 < l_2 < \ldots < l_r$.  For a node $l_i$, denote $\lr(l_i,k,\ell)$ the closest level $\ell$ node $l_j$ from $l_i$ towards the right such that $min_k(l_j) > l_i$. Denote $T(k,\ell)$ and $M(k,\ell)$ as $$T(k,\ell) = min_{P,l_i: \lr(l_i,k,\ell) \neq l_r} \Delta(l_i,\lr(l_i,k,\ell))$$ and $$M(k,\ell) = min_{P: min_k(l_1) > min_{k-1}(m_1)\mbox{ or }min_k(m_1) > min_{k-1}(l_1)} \Delta(l_1,m_1)$$.

Then the following is true:
\begin{enumerate}


\item $T$ satisfies the following recurrence realtion
            \begin{eqnarray*} 
              T(k,\ell) \geq min(T(k-2,\ell) + min(T(k-1,\ell), M(k,\ell), M(k-1,\ell-1) )
             \end{eqnarray*} 
             
\item $M$ satisfies the following recurrence relation
        \begin{eqnarray*}
           M(k,\ell) \geq  min(T(k-2,\ell),M(k-1,\ell-1))
        \end{eqnarray*}
        
\end{enumerate}
\end{lemma}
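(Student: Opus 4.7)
\textbf{Proof proposal for Lemma~\ref{lemma:path:levels}.}

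The plan is to prove both recurrences simultaneously by induction on $k$ (with a nested sub-induction on $\ell$ when needed). The key conceptual point is to interpret $T(k,\ell)$ and $M(k,\ell)$ as measuring two kinds of ``obstructions'' to the propagation of minima through the level hierarchy: $T(k,\ell)$ lower-bounds the shortest distance across which a level-$\ell$ minimum can fail to reach a level-$\ell$ neighbor after $k$ rounds, while $M(k,\ell)$ lower-bounds the shortest distance between two consecutive level-$\ell$ nodes that are still ``out of sync'' after $k$ rounds. The induction hypothesis will be that both recurrences hold for all $(k', \ell')$ with $k' < k$ (and for $(k, \ell')$ with $\ell' < \ell$ when needed).

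For the $M$ recurrence $M(k,\ell) \geq \min(T(k-2,\ell), M(k-1,\ell-1))$, I would fix a witness path $P$ and consecutive level-$\ell$ nodes $l_1, m_1$ realizing $M(k,\ell)$. Since one of $l_1, m_1$ has not received the min of the other one step ago, trace back the propagation through the level-$(\ell-1)$ subpath between them. If the level-$(\ell-1)$ nodes between $l_1$ and $m_1$ had already synchronized by round $k-1$, then the minimum would be forwarded to both $l_1$ and $m_1$ in round $k$, contradicting the definition of $M(k,\ell)$. Hence either the internal level-$(\ell-1)$ sub-problem is itself a witness giving $\Delta(l_1,m_1) \geq M(k-1, \ell-1)$, or the obstruction is actually an external level-$\ell$ node $l_0$ to the left of $l_1$ whose larger label was delivered to $m_1$, in which case the round-$(k-2)$ state gives a $T(k-2,\ell)$ lower bound on the distance.

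For the $T$ recurrence, fix a witness configuration: a level-$\ell$ node $l_i$ and $l_j = \lr(l_i, k, \ell) \neq l_r$ at distance exactly $T(k,\ell)$. The plan is to split the segment from $l_i$ to $l_j$ into two consecutive pieces. Let $l_{j'}$ be the rightmost level-$\ell$ node in $[l_i, l_j]$ satisfying $min_{k-2}(l_{j'}) > l_i$; by the inductive hypothesis, $\Delta(l_i, l_{j'}) \geq T(k-2,\ell)$. Then analyze the propagation dynamics from round $k-2$ to round $k$ on the subpath from $l_{j'}$ to $l_j$. In this final window of two rounds, the remaining segment must itself constitute an obstruction of one of three types: (i) a pure level-$\ell$ boundary that survives one more round, contributing $T(k-1,\ell)$; (ii) a pair of adjacent level-$\ell$ nodes failing to sync in $k$ rounds, contributing $M(k,\ell)$; or (iii) a lower-level obstruction surviving $k-1$ rounds, contributing $M(k-1, \ell-1)$. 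Summing the two pieces yields the recurrence.

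The main obstacle will be the bookkeeping in the final two-round case analysis for $T$: one has to carefully verify that the three sub-cases are exhaustive and that in each one the appropriate quantity lower-bounds the length of the second segment. In particular, one must use that $\lr(l_i, k, \ell) \neq l_r$ (so that the right endpoint is not pathological), and one must combine the \emph{mutual}-knowledge guarantee implicit in the definition of $min_k$ with the specific behavior of \Hm at the interface between level $\ell$ and level $\ell-1$ nodes. I expect that the bi-directional nature of mutual knowledge, together with the fact that level-$\ell$ nodes are local minima among level-$(\ell-1)$ nodes, is precisely what makes the three-way case split tight.
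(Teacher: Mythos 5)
Your proposal correctly identifies the overall shape of the argument --- a simultaneous induction on $k$ and $\ell$, with the $T$ bound obtained by splitting the witness segment $[l_i,\lr(l_i,k,\ell)]$ into a prefix controlled by the round-$(k-2)$ state and a suffix that must itself be an obstruction --- and this matches the paper's strategy for claim~1. The paper anchors the split at $l_s=\lr(l_i,k-1,\ell)$ and the level-$\ell$ node $l_t$ immediately to its left, then branches on where $min_{k-1}(l_t)$ lives ($P_3$, or $P_1\cup P_2$, and how far $l_t$ is from $l_i$); the heart of the proof is showing, via a two-hop argument ($l \to a=min_{k-1}(l) \to b=min_{k-1}(a)\le l_i$), that every level-$\ell$ node between $l_t$ and $l_u=\lr(l_t,k-1,\ell)$ acquires a minimum $\le l_i$ after one more round, which is what pushes $\lr(l_i,k,\ell)$ past $l_u$ and produces the additive term. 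You explicitly defer exactly this case analysis ("the main obstacle will be the bookkeeping \ldots one has to carefully verify that the three sub-cases are exhaustive"), so the substantive content of claim~1 is asserted rather than proved. A plan that names the three terms of the inner $\min$ and promises that some case analysis will realize them is not yet a proof of the recurrence.

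There is also a concrete misreading in your treatment of the $M$ recurrence. In the lemma, $l_1$ and $m_1$ are \emph{not} consecutive level-$\ell$ nodes: $P_2$ contains the $r+1$ level-$\ell$ nodes $l_1<l_2<\cdots<l_r$ followed by $m_1$, and $\Delta(l_1,m_1)$ spans the entire sorted run. The paper's proof of claim~2 splits at $l_r$: if $\Delta(l_1,l_r)\le T(k-2,\ell)$ then $min_{k-2}(l_r)\le l_1$ (using the $T$ bound \emph{at the same level} $\ell$ on the increasing run $l_1<\cdots<l_r$ inside $P_2$), and if additionally $\Delta(l_r,m_1)\le M(k-1,\ell-1)$ then the minimum relays from $l_r$ to $m_1$ across the single consecutive pair $(l_r,m_1)$, whose intervening level-$(\ell-1)$ nodes are ordered; together these contradict the defining condition of $M(k,\ell)$, and the recurrence follows by contraposition. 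Your version, which traces propagation through the level-$(\ell-1)$ nodes "between $l_1$ and $m_1$" and attributes the $T(k-2,\ell)$ term to an external node to the left of $l_1$, does not use the sorted structure of $l_1,\ldots,l_r$ at all, and it is precisely that structure which makes the $T(k-2,\ell)$ term available. As written, your argument for claim~2 would not go through.
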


\begin{proof}
Denote by $[l_t,l_u]$ the level  $\ell$ nodes between $l_t$ and $l_u$, and $[l_t,l_u)$ those nodes except for $l_u$.


{\bf Proof of claim 1:} Let $l_s = \lr(l_i,k-1,\ell)$ and let $l_t$ be the level $\ell$ node just to the left of $l_s$. We know by definition of $l_s=\lr(l_i,k-1,\ell)$ that  $min_{k-1}(l_s) >  l_i$, but for all $l \in [l_i,l_t], min_{k-1}(l) \leq l_i$. Now there are three cases:

\begin{enumerate}
\item $min_{k-1}(l_t) \in P_3$. Then $min_{k-1}(l_t) \leq l_i$~(from above), and any node $l \in [l_t,l_r]$ would have $min_{k-1}(l) \in P_3 \leq min_{k-1}(l_t)$. Hence $\min_{k-1}(l) \leq l_i$ for all $l \in \{l_i,\ldots,l_r\}$. Thus $\lr(l_i,k-1) = l_r$.

\item $min_{k-1}(l_t) \notin P_3$ and $\Delta(l_i,l_t) \geq T(k-1,\ell)$. Denote $l_u =  \lr(l_t,k-1,\ell)$.  Consider any $l \in [l_t,l_u)$. Let $a=min_{k-1}(l)$. Then from the definition of $l_u =  \lr(l_t,k-1)$ and since $l \in [l_t,l_u)$, we know that, $a = min_{k-1}(l) \leq l_t$. Now there are two sub-cases: 

\begin{enumerate} 
\item For all $l \in [l_t,l_u)$, $a = min_{k-1}(l)  \notin P_3 \cup \{m_1\} $. In this case, we show that $\lr(l_i,k,\ell)$ is a level $\ell$ node in $[l_u,l_r]$. For this we will show that for all $l \in [l_t,l_u)$ and $a = min_{k-1}(l)$,  $min_{k-1}(a) \leq l_i$.  If $a \notin P_3 \cup \{m_1\}$, then either (i) $a \in P_1$, in which case  $a \leq l_1$~(otherwise $min_{k-1}(l)$ would be $l_1$ and not $a$), and hence obviously $min_{k-1}(a) \leq l_i$, or (ii) $a \in P_2$ but $a \neq m_1$, and then since $a \leq l_t$~(from above), $a \in [l_i,l_t]$. Hence $min_{k-1}(a) \leq l_i$. 

In other words, after $k-1$ iterations, the minimum for any node $l \in [l_t,l_u)$ is $a$, which in turn has a minimum $b \leq l_i$. Thus in one \Hm step, $b$ would become $min_k(l)$. Hence after $k$ iterations and for any local minimum $l$ between $l_i$ and $l_u$, we have $min_k(l) \leq l_i$.  This shows that $\lr(l_i,k,\ell) \in [l_u,l_r]$. Then either $l_u$ is $l_r$ or the following holds.
 
$$ \Delta(l_i,\lr(l_i,k,\ell))  \geq \Delta(l_i,l_t) + \Delta(l_t,l_u) \geq  T(k-2,\ell) + T(k-1,\ell-1)$$

\item  There exists $l \in [l_t,l_u)$, such that $a = min_{k-1}(l)  \in P_3$. Let $l_w = \lr(l_i,k,\ell)$. Then $l_w$ is the first node in $[l_i,l_r]$ for which $min_{k}(l_w) > l_i$. If $\Delta(l_v,l_w) \leq M(k,\ell)$ then by definition after $k$ iterations $min_k(l_w) \leq min_{k-1}(l_t) \leq l_i$. Thus $\Delta(l_w,l_i) \geq M(k,\ell)$. Thus 
$$ \Delta(l_i,\lr(l_i,k,\ell))  \geq \Delta(l_i,l_t) + \Delta(l_t,l_w) \geq  T(k-2,\ell) + M(k,\ell)$$
\end{enumerate}

\item $min_{k-1}(l_t) \in P_1 || P_2$ and $\Delta(l_i,l_t) \leq T(k-2,\ell)$. In this case, we argue that  $\Delta(l_t,l_s) > M(k-1,\ell-1)$. Assume the contrary: $\Delta(l_t,l_s) \leq M(k-2,\ell-1)$. Since, $\Delta(l_i,l_t) \leq T(k-2,\ell)$, we know that $min_{k-2}(l_t) \leq l_i$. Since $l_t$ and $l_s$ are consecutive level $\ell$ nodes, all the level $\ell-1$ nodes between them are ordered.  Hence by definition of $M(k-1,\ell-1)$, and the fact that $\Delta(l_t,l_s) \leq M(k-1,\ell-1)$, $min_{k-1}(l_s) \leq min_{k-2}(l_t) \leq l_i$ . This contradicts the assumption that $l_s = \lr(l_i,k-1,\ell)$. Thus $\Delta(l_t,l_s) > M(k-1,\ell-1)$.
$$ \Delta(l_i,\lr(l_i,k))  \geq \Delta(l_i,l_t) + \Delta(l_t,l_s) \geq M(k-1,\ell-1) $$
\end{enumerate}

Combining the above cases we complete the proof of claim 1.\\

{\bf Proof of claim 2:}  If $\Delta(l_1,l_r) \leq T(k-2,\ell)$, then $min_{k-2}(l_r) \leq l_1$.  Also if $\Delta(l_r,m_1) \leq M(k-1,\ell-1)$, then $min_{k-1}(m_1) \leq min_{k-2}(l_r) \leq l_1$. If both  $\Delta(l1,l_r) \leq T(k-2,\ell)$ and $\Delta(l_r,m_1) \leq M(k-1,\ell-1)$, then $min_{k-1}(m_1) \leq  l_1$. Hence by claim 1, $min_{k}(m) \leq min_{k-1}(l_1)$. This completes the proof.

\end{proof}

\begin{lemma} \label{lemma:path:cost} Let $T(k,\ell)$ be the quantity as defined in Lemma~\ref{lemma:path:levels}. Then $T(k,\ell) \geq 2^{k/2-\ell}$.
\end{lemma}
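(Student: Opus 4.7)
The plan is to prove the bound by induction on $k$, simultaneously carrying along a matching lower bound for $M(k,\ell)$ so that the two recurrences from Lemma~\ref{lemma:path:levels} can reinforce each other. Concretely, I would strengthen the statement to prove, by induction on $k$ (for all $\ell \geq 0$), the pair of bounds
\[
T(k,\ell) \;\geq\; 2^{k/2-\ell} \qquad\text{and}\qquad M(k,\ell) \;\geq\; 2^{k/2-\ell-1}.
\]
The base cases are immediate once we adopt the convention that $T(k,\ell) = M(k,\ell) = +\infty$ whenever the configurations in their definitions are vacuous (for instance, when $\ell$ exceeds the number of levels of local minima in any candidate path, or when $k$ is too small for the defining minima to be taken over a nonempty set); the nontrivial base cases at small $k$ follow directly from the behaviour of \Hm in the first one or two rounds, since after one round every node already knows its two neighbours along the path.

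For the inductive step, I would substitute the inductive hypotheses into Claim~2 of Lemma~\ref{lemma:path:levels}:
\[
M(k,\ell) \;\geq\; \min\!\bigl(T(k-2,\ell),\, M(k-1,\ell-1)\bigr)\;\geq\; \min\!\bigl(2^{k/2-1-\ell},\, 2^{(k-1)/2-\ell}\bigr)\;=\;2^{k/2-1-\ell},
\]
which re-establishes the auxiliary bound on $M$. Then, reading the recurrence in Claim~1 as
\[
T(k,\ell) \;\geq\; \min\!\Bigl(T(k-2,\ell)+\min\bigl(T(k-1,\ell-1),\,M(k,\ell)\bigr),\; M(k-1,\ell-1)\Bigr)
\]
(interpreting the "$T(k-1,\ell)$" in the statement as $T(k-1,\ell-1)$, consistently with case~2(a) of that proof), the term $T(k-2,\ell) + M(k,\ell)$ contributes at least $2^{k/2-1-\ell}+2^{k/2-1-\ell}=2^{k/2-\ell}$, the term $T(k-2,\ell)+T(k-1,\ell-1)$ is larger still, and the residual branch $M(k-1,\ell-1) \geq 2^{(k-1)/2-\ell}$ is also at least $2^{k/2-\ell}$ up to the required scaling. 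Taking the minimum closes the induction at the desired $2^{k/2-\ell}$.

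The main obstacle will be getting the constants to close cleanly, because the two recurrences interact through half-integer exponents: a slightly loose choice of the auxiliary bound on $M$ produces a recurrence that loses a factor of $\sqrt{2}$ per step. The way I expect to handle this is either (i) to prove the strengthened paired bounds above, in which the exponents on $T$ and $M$ differ by exactly one so that the $+$ in Claim~1 yields a doubling, or (ii) to absorb a fixed multiplicative constant into the stated bound, which is harmless for the downstream application since Theorem~\ref{thm:Hm:rounds} only needs $T(k,\ell) = \Omega(2^{k/2-\ell})$ for the choice $k = 4\log n$ to force convergence at every level $\ell \leq \log n$ (using Proposition~\ref{prop:levels:number}). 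A secondary technical point is well-foundedness of the induction: the recurrences drop $k$ by $1$ or $2$ and $\ell$ by at most $1$, so I would induct on the lexicographic pair $(k, -\ell)$, together with the vacuity convention above to handle boundary cases where $\ell$ exceeds the available depth.
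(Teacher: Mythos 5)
Your overall strategy --- a double induction on $(k,\ell)$ that feeds the two recurrences of Lemma~\ref{lemma:path:levels} into each other --- is the same one the paper uses, and your base cases and your treatment of the additive branch are fine. The genuine problem is the standalone branch $T(k,\ell)\geq M(k-1,\ell-1)$ of Claim~1, which you flag but do not actually resolve. With your paired ansatz $T(k,\ell)\geq 2^{k/2-\ell}$ and $M(k,\ell)\geq 2^{k/2-\ell-1}$, that branch yields only $M(k-1,\ell-1)\geq 2^{(k-1)/2-(\ell-1)-1}=2^{k/2-\ell-1/2}$, a factor $\sqrt{2}$ short of the target; ``up to the required scaling'' is hiding a real loss. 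Neither of your proposed remedies closes it: fix (i) repairs only the additive branch, and fix (ii) provably cannot work, since writing $T(k,\ell)\geq c_T\,2^{k/2-\ell}$ and $M(k,\ell)\geq c_M\,2^{k/2-\ell}$, Claim~2 forces $c_M\leq c_T/2$ (via its $T(k-2,\ell)$ term) while the standalone branch of Claim~1 forces $c_M\geq c_T/\sqrt{2}$, and these are incompatible. Taken literally, the stated recurrences only support a weaker exponent such as $2^{k/2-3\ell/2}$, which would degrade the constant in Theorem~\ref{thm:Hm:rounds} from $4\log n$ to $5\log n$.

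The paper sidesteps this by first eliminating $M$ altogether and inducting on the single recurrence $T(k,\ell)\geq\min\bigl(T(k-1,\ell)+T(k-2,\ell),\,T(k-2,\ell-1)\bigr)$: in the level-dropping branch the round index decreases by only $2$ while $\ell$ decreases by $1$, so that branch evaluates to exactly $2^{(k-2)/2-(\ell-1)}=2^{k/2-\ell}$, and the additive branch closes because $2^{-1/2}+2^{-1}>1$. In fairness, the recurrence displayed in Lemma~\ref{lemma:path:levels} is garbled (unbalanced parentheses, and the statement's $T(k-1,\ell)$ versus the proof's $T(k-1,\ell-1)$), so the $T$-only form the paper actually inducts on is not literally what that lemma asserts; but to make your version go through you must either justify replacing the bare $M(k-1,\ell-1)$ branch by a quantity of the form $T(k-2,\ell-1)$, or accept the weaker exponent and a correspondingly larger constant in the round bound.
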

\begin{proof} We prove the lemma using induction.

{\em Base Cases: (i) $\ell=0$ and $k\geq1$. Then by  Lemma~\ref{lemma:path}, $T(k,0) \geq 2^{k} \geq 2^{k/2 - 0}$. (ii) $k=1$ and $\ell \geq 1$}. For any level $\ell \geq 1$, $T(1,\ell) \geq 1 \geq 2^{1/2-\ell}$.

{\em Induction Hypothesis (IH)} For all $k_0 \leq k-1$ and $\ell_0 \leq \ell-1$,  $T(k_0,\ell_0) \geq 2^{k_0/2-\ell_0}$

{\em Induction Step:} By Lemma~\ref{lemma:path:levels}, we know that:
\begin{eqnarray*}
  T(k,\ell) &\geq& min \left(T(k-1,\ell) + T(k-2,\ell), T(k-2,\ell-1)\right) \\
                 &\geq& min \left(2^{(k-1)/2-\ell} + 2^{(k-2)-\ell}, 2^{(k-2)/2-\ell+1}\right)~~~~\mbox{(using IH)}\\
                 &\geq& min \left(2^{k/2-\ell}(1/2 + 1/\sqrt{2}), 2^{k/2-\ell}\right) \geq 2^{k/2-\ell}\\
\end{eqnarray*}
\end{proof}

Finally, we can complete the proof of Theorem~\ref{thm:Hm:rounds}.  Since $T(k,l)$ is less than the length of the path, we know that $T(k,l) < n$. Now from Prop.~\ref{prop:levels:number},  the number of levels having more than $1$ node is at most $\log{n}$. Hence $\ell \leq \log{n}$. Finally, from Lemma~\ref{lemma:path:cost}, we know that $T(k,\ell) \geq 2^{k/2-\ell}$. Thus $2^{k/2 - \log{n}} \leq 2^{k/2-\ell} \leq n$. Thus $k \leq 4\log{n}$. This completes the proof of Theorem~\ref{thm:Hm:rounds}.

\eat{
\begin{lemma}\label{lemma:path2}
Consider a path $l_1, l_2, \ldots, l_n, l_0$, where $l_1 < l_2 < \ldots < l_n > l_0$. Then after $k$ iterations of the Hash-to-min algorithm,
\begin{itemize}
\item For every pair of nodes $i, j$ that are a distance $2^k$ apart, $i$ knows $j$ and $j$ knows $k$.
\item Node $j$ is not known to and does not know any node $i$ that is at a distance $>2^k$.
\end{itemize}
\end{lemma}
\begin{proof}
The proof is by induction. \\
{\em Base Case:} After 1 iterations, each node knows about its 1-hop and 2-hop neighbors (on either side).  \\
{\em Induction Hypothesis:} Suppose the claim holds after $k-1$ iterations. \\
{\em Induction Step:} \\
In the $k^{th}$ iteration, consider nodes $x, y, z$ such that each pair of nodes is $2^{k-1}$ distance apart.  By induction hypothesis, both $x$ and $z$ are in $y$'s cluster. In the next step one of two cases occur:
\begin{itemize}
 \item $z  = 0 < x < y$: In this case, $y$ sends its cluster to $z$ and $z$ to every node in the cluster.
 \item $x < y, z$: In this case, $y$ sends its cluster to $x$ and $x$ to every node in the cluster.
\end{itemize}
In either case, both $x$ and $z$ know each other.
\end{proof}

\begin{theorem}\label{thm:path}
Consider a path of length $\ell$ with the first $\ell-1$ node ids increasing from the left to right. Then, at the end of $\log \ell$ Hash-to-min iterations the min node has a cluster containing all the nodes in the graph.
\end{theorem}
\begin{proof}
The proof follows from the Lemmas~\ref{lemma:path} and \ref{lemma:path2} -- $m$ is at a distance $\leq 2^{\ell}$ from all  nodes in the path.
\end{proof}

\begin{lemma}
Consider a path that has three parts $P_1, P_2$ and $P_3$, where $P_1$ and $P_3$ are arbitrary. $P_2$ has the form $l_1, l_2, \ldots, l_n, l_0$, where for all $1 \leq i < n$, $l_i  < l_{i+1}$. Also, $l_0 < l_n$. Then after $k$ iterations,
\begin{enumerate}
\item\label{item:middle} If $x, y \in P_2 - \{l_1, l_0\}$ and $distance(x,y) = 2^k$, then $x \in cluster(y)$ and $y \in ckuster(x)$.
\item\label{item:left} If $x \in P_2$ and $distance(l_1, x) \leq 2^k$, then there exists $l$ to the left and smaller that (or equal to) $l_1$ that is in $cluster(x)$.
\item\label{item:right} Similarly, if $y \in P_2$ and $distance(y, l_0) \leq 2^k$, then there exists $l$ to the right and smaller than (or equal to) $l_0$ that is in $cluster(y)$.
\item \label{item:exc} A node $x$ at a distance $> 2^k$ from $l_1$ ($l_0$) does not know any node $y$ such that $distance(x,y) > 2^k$.
\end{enumerate}
\end{lemma}
\begin{proof}
The proof is by induction. Base case is trivial -- everyone knows their 1-hop neighbors so everything is true. {\em IH:} Now assume that all three items are true after step $k-1$.

\underline{Item~\ref{item:left}:}  In step $k-1$, suppose node $z$ was $\leq 2^{k-1}$ distance from $l_1$; hence, either $l = l_1$ is in $cluster(z)$ , or some other $l$ to the left and smaller than $l_1$ is in $cluster(z)$. Also, we know for $x>z$ such that $distance(x,z) = 2^{k-1}$, $x$ and $z$ have each other in their clusters. Therefore, $z$ contains both $x$ and $l$. Clearly $l < z < x$, and l has to be $z's$ minimum node (since $x$ is in $P_2$, $z$ can't know $l_0$ or a node in $P_3$ in round $k-1$). So \Hm sends $cluster(z)$ to $l$ and $l$ to $cluster(z)$. Hence, $l \in cluster(x)$ after $k$ steps.

\underline{Item~\ref{item:right}:}  In step $k-1$, suppose node $z$ was $\leq 2^{k-1}$ distance from $l_0$; hence, either $l = l_0$ is in $cluster(z)$ , or some other $l$ to the right and smaller than $l_0$ is in $cluster(z)$. Also, we know for $x<z$ such that $distance(x,z) = 2^{k-1}$, $x$ and $z$ have each other in their clusters. Therefore, $z$ contains both $x$ and $l$. Two cases may arise:
\begin{itemize}
\item If $l < x < z$, so \Hm sends $cluster(z)$ to $l$ and $l$ to $cluster(z)$. Hence, $l \in cluster(x)$ after $k$ steps.
\item If $x < z < l$, so \Hm sends $cluster(z)$ to $x$ and $x$ to $cluster(z)$. Hence, $l \in cluster(x)$ after $k$ steps.
\end{itemize}

\underline{Item~\ref{item:middle}:}  We will now show this using the {\em IH} and Items~\ref{item:left} and \ref{item:right}. Consider $x, z, y \in P2 - \{l_1, l_0\}$ such that $distance(x,z) = 2^{k-1}$ and $distance(z,y) = 2^{k-1}$. We know that $x \in cluster(z)$ and $y \in cluster(z)$  from  {\em IH}. Also, since $x, y, z \in P2 - \{l_1, l_0\}$, $x < z < y$. Therefore, the minimum at cluster $z$ is $x$. Hence, $x$ is sent to $y$ and vice versa.

\underline{Item~\ref{item:exc}:}  Consider a node $x$ such that $distance(x,l_1) > 2^k$. In the previous step, $x$ did not know any node at a distance more than $2^{k-1}$. Therefore, min at $cluster(x)$, say $y$, was such that $distance(x,y) = 2^{k-1}$ and $distance(y,l_1) > 2^{k-1}$. Therefore, min at $cluster(y)$ is at a distance  $2^{k-1}$ away and is greater than $l_1$. Hence, the smallest node $x$ knows after $k$ rounds is $2^k$ distance away.
(Proof for $l_0$ is analogous).
\end{proof}
}
\subsection{Proof of Theorem~\ref{thm:hgm}}

We first restate Theorem~\ref{thm:hgm} below.

\begin{theorem}[\ref{thm:hgm}]
 Algorithm \Hcm correctly computes the connected components of  $G = (V,E)$ in expected $3\log{n}$ map-reduce rounds (expectation is over the random choices of the node ordering) with  $2(\nnodes + \nedges)$ communication per round in the worst case.
 \end{theorem}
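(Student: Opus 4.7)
The plan is to establish the theorem's three claims — correctness, per-round communication, and the $3\log n$ round bound — separately, leveraging Lemma~\ref{lemma:hgm} throughout. First I would record the simple invariant that clusters are \emph{reflexive} after every round, i.e., $u \in \TC_v \iff v \in \TC_u$; this follows inductively because both \Hma and \Hcm emit symmetric key--value pairs. Reflexivity combined with Lemma~\ref{lemma:hgm} immediately yields correctness: at convergence the sets $\{\core(m) : m \text{ a surviving minimum}\}$ partition $V$, and a hypothetical additional \Hma round would propagate a smaller minimum across any $G$-edge crossing two such sets, so each set must be exactly a connected component indexed by its smallest node.

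For the per-round communication, the \Hma rounds contribute $\sum_v |\TC_v|$ from the mapper-to-self transmissions plus at most $2|E|$ from the singleton minima sent to neighbors, while the \Hcm round costs $2\sum_v |\core(v)|$. I would show via reflexivity that at every point in the execution $\sum_v |\core(v)| \le |V|$, by reading each pair $(u,v)$ with $u \in \core(v)$ as a directed edge from $u$ to its current minimum $v$; together the $(u,v)$ pairs form a rooted forest whose roots are the surviving minima, giving exactly $|V|$ such pairs. The same accounting bounds $\sum_v |\TC_v|$ by $2|V|$, giving a worst-case total of $2(|V| + |E|)$ per round.

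The heart of the proof, and the main obstacle, is the expected $3\log n$ round bound. I would adapt the contracted-graph strategy of \cite{gazit91:cc,kargerNP99:cc}. After every three rounds, let $M_k$ denote the current surviving minima and build the quotient graph $G_{M_k}$ on vertex set $M_k$, with $m$ and $m'$ adjacent whenever $G$ has an edge between $\core(m)$ and $\core(m')$. Isolated vertices of $G_{M_k}$ correspond to fully resolved components and can be set aside; let $MC_k$ be the non-isolated part. The key structural claim I would prove is that if $m \in MC_k$ has a neighbor $m' < m$ in $G_{M_k}$, then three further rounds eliminate $m$ as a minimum: the first \Hma round carries $m'$ across the bridging $G$-edge into some $v \in \core(m)$, the second \Hma round carries it on to $m$ itself, and the subsequent \Hcm round, with $m' = v_{\min}(\TC_m)$, sends $\core(m)$ to reducer $m'$, absorbing it.

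The probabilistic finish uses the random node labelling. By the symmetry of the algorithm (its trajectory depends only on the relative order of labels), conditional on $MC_k$ the labels of $MC_k$ are still a uniformly random permutation. Splitting them into the bottom and top halves, a minimum $m$ survives only if $m$ itself lies in the top half \emph{and} every $G_{M_k}$-neighbor of $m$ lies in the top half; since $m$ has at least one neighbor, the two events give a survival probability of at most $1/4$, so $\mathbb{E}[|MC_{k+1}|] \le \tfrac{3}{4}\,|MC_k|$. Iterating yields $O(\log n)$ three-round groups in expectation, with the stated constant $3\log n$ obtained by sharpening the accounting (or by Chernoff-style concentration if one wants a high-probability statement). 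The most delicate step will be justifying that the relative-order uniformity of the random permutation is preserved from one group to the next — intuitively clear from the fact that \Hma and \Hcm are oblivious to absolute labels, but requiring a careful inductive measure-theoretic statement to run the halving argument across all $k$.
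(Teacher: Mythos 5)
Your proposal follows essentially the same route as the paper's proof: the quotient graph $G_{M_k}$ on surviving minima, the three-round elimination of any minimum $m$ having a smaller quotient-neighbor $m'$ (two \Hma rounds to carry $m'$ across the bridging edge and up to $m$, then one \Hcm round shipping $\core{m}$ to $m'$), and the expected $3/4$ contraction from the top-half/bottom-half split of a uniformly random labelling, together with the same $\sum_v |\core{v}| \le 2\nnodes$ accounting for communication. One small imprecision: a minimum in the bottom half can certainly survive (the global minimum always does), so "survives only if $m$ and all its neighbors lie in the top half" is not literally true; the statement you actually need, and the one the paper proves, is that a set's \emph{new} minimum lands in the top half only if $m$ and all its quotient-neighbors do, which bounds the expected number of surviving minima by $l/2 + l/4 = 3l/4$ — exactly the inequality your argument ends up invoking.
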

\begin{proof} After $3k$ rounds, denote $M_k =  \{ min(\TC_v) : v \in V \}$ the set of nodes that appear as minimum on some node. For a minimum node $m \in M_k$, denote $\set_k(m)$ the set of all nodes $v$ for which $m = min(\TC_v)$. Then by Lemma~\ref{lemma:hgm}, we know that $\set_k(m) = \core(m)$ after $3k$ rounds. Obviously $\cup_{m \in M_k} \set_k(m) = V$ and for any $m,m' \in M_k$, $\set_k(m) \cap \set_k(m') = \emptyset$.

Consider the graph $G_{M_k}$ with nodes as $M_k$ and an edge between $m \in M_k$ to $m' \in M_k$ if there exists $v \in \set_k(m)$ and $v' \in  \set_k(m')$ such that $v,v'$ are neighbors in the input graph $G$.  If a node $m$ has no outgoing edges in $G_{M_k}$, then $\set_k(m)$ forms a connected component in $G$ disconnected from  other components, this is because, then for all $v' \notin \set_k(m)$,  there exists no edge to $v \in \set_k(m)$.

We can safely ignore such sets $\set_k(m)$. Let $MC_k$ be the set of nodes in $G_{M_k}$ that have at least one outgoing edge. Also if $m \in MC_k$ has an edge to $m' < m$ in $G_{M_k}$, then $m$ will no longer be the minimum of nodes $v \in \set _k(m)$ after $3$ additional rounds. This is because there exist nodes $v \in \set_k(m)$ and $v' \in \set_k(m')$, such that $v$ and $v'$ are neighbors in $G$. Hence in the first round of \Hma, $v'$ will send $m'$ to $v$. In the second round of \Hma, $v$ will send $m'$ to $m$. Hence finally $m$ will get $m'$, and in the round of \Hcm, $m$ will send $\set_k(m)$ to $m'$.

If $|MC_k| = l$, W.L.O.G, we can assume that they are labeled $1$, $2$, \ldots, $l$ (since only relative ordering between them matters anyway).  For any set, $\set_k(m)$, the probability that it its min $m' \in (l/2,l]$ after 3 more rounds is $1/4$. This is because  that happens only when $m \in [(l/2,l)$ and all its neighbors $m' \in G_{M_k}$ are also in $(l/2,l]$. Since there exist at least one neighbor $m'$, the probability of $m' \in (l/2,l]$ is at most $1/2$. Hence the probability of any node $v$ having a min $m' \in (l/2,l]$ after 3 more rounds is 1/4.

Now since no set, $\set_k(m)$, ever get splits in subsequent rounds, the expected number of cores is 3l/4 after 3 more rounds. Hence in three rounds of \Hcm, the expected number of cores reduces  from $l$ to $3l/4$, and therefore it will terminate in expected $3\log{n}$ time.

The communication complexity is  $2(\nnodes + \nedges)$ per round in the worst-case since the total size of clusters is $\sum_{v} \core{v} = 2 (\nnodes)$.
\end{proof}

\section{Clustering Proofs}
\label{sec:proofs:clustering}
\subsection{Proof of Theorem~\ref{thm:clustering:correctness}}

We first restate Theorem~\ref{thm:clustering:correctness} below.

\begin{theorem}[\ref{thm:clustering:correctness}] The distributed Algorithm~\ref{algo:distributed:clustering} simulates the centralized Algorithm~\ref{algo:central}, i.e., it outputs the same clustering as Algorithm~\ref{algo:central}. 
\end{theorem}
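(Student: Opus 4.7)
My proof plan would proceed by induction on the iterations of Algorithm~\ref{algo:distributed:clustering}, using the notion of ``core'' (Definition~\ref{def:core}) as the bridge between the centralized and distributed executions. The unique-edge-weight assumption ensures that nearest-neighbor relations are deterministic, so every cluster has a well-defined MCD.

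First, I would establish an invariant: after every map-reduce iteration of Algorithm~\ref{algo:distributed:clustering}, for every cluster $\TC_v$ currently maintained at a reducer, every element of $MCD(\TC_v)$ is a valid core in the sense of Lemma~\ref{lemma:cores}, i.e., a cluster that the centralized Algorithm~\ref{algo:central} constructs at some point. The base case is immediate because $\TC_v = \{v\} \cup \nbrs(v)$ decomposes into singleton cores (and possibly pairwise cores formed by mutually-nearest neighbors), all of which are trivially valid. For the inductive step, when \Ha or \Hm merges incoming clusters at reducer $v$, the new cluster is a union of previous clusters; its MCD is either a union of previous MCDs, or contains new cores formed by combining previous cores whose splits are mutually closest. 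Any such new core satisfies the recursive definition of a core, and by Lemma~\ref{lemma:cores} (applied to the splits, which were themselves cores by induction and whose $\Stop_{local}$ values must be false for merging to have been sensible) is valid.

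Second, I would handle termination and splitting. When the stopping algorithm (Algorithm~\ref{algo:stop}) returns true, each node's largest received core $C_v$ satisfies $\Stop_{local}(C_v) = 1$. By monotonicity of $\Stop$, any valid refinement of the current clustering also satisfies $\Stop$, so the centralized algorithm would have halted at a clustering refining the current one in terms of its cores. The splitting step (Algorithm~\ref{algo:split}) then takes each (possibly over-merged) $\TC_v$ and recursively descends its MCD, retaining a core $C_i$ precisely when both its splits $C_l, C_r$ have $\Stop_{local} = 0$. By Lemma~\ref{lemma:cores}, such a retained $C_i$ is a subset of some centralized output cluster; whereas if either split already satisfies $\Stop_{local}$, monotonicity guarantees the centralized algorithm would not have merged across that split, so recursion into smaller cores is correct. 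Termination of Algorithm~\ref{algo:split} follows because MCD strictly refines the cluster at every recursive call.

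Third, I would prove the two inclusions between the distributed output $\C_{dist}$ and the centralized output $\C_{central}$. The inclusion ``every $C \in \C_{dist}$ is contained in some $C' \in \C_{central}$'' follows directly from the invariant and the splitting argument above. For the reverse inclusion, I would argue that the distributed algorithm cannot stop too early: by the connected-components propagation analysis (Theorems~\ref{thm:hma} and the \Hm / \Ha analyses in Section~\ref{sec:algos}) applied to the subgraph of edges used by the centralized algorithm, after enough iterations each centralized cluster $C' \in \C_{central}$ appears as a core in some $\TC_v$; since $\Stop_{local}(C')$ is necessarily true (otherwise the centralized algorithm would have merged further), the stopping check cannot falsely pass before $C'$ has formed.

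The main obstacle will be making the invariant airtight for \Hm, where the cluster sent to a non-minimum reducer is only the singleton $\{v_{\min}\}$ rather than the full cluster. One must distinguish the authoritative cluster held at $v_{\min}$ from placeholder clusters at other reducers, and show that MCD-validity is preserved across both types of reducer states and across the interleaving of merges driven by different minimum nodes. A second subtlety is that the centralized algorithm merges one pair at a time whereas a single distributed reducer may merge many clusters in one step; one must verify that the induced merge order always respects the cluster-closeness condition required by Definition~\ref{def:core}, which is where the uniqueness of edge weights and the nearest-neighbor characterization of cores become essential.
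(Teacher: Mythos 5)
There is a genuine gap, in both halves. For the forward inclusion, your invariant --- that after every iteration every element of $MCD(\TC_v)$ is a cluster the centralized Algorithm~\ref{algo:central} actually constructs --- is false in general, and the justification ``whose $\Stop_{local}$ values must be false for merging to have been sensible'' assumes what needs proving: the reducers in Algorithm~\ref{algo:distributed:clustering} merge every overlapping cluster that arrives without ever consulting $\Stop_{local}$, which is exactly why clusters can be over-merged and why Algorithm~\ref{algo:split} exists. An over-merged $\TC_v$ can have a core $C_i$ in its MCD whose splits already satisfy $\Stop_{local}$, and Lemma~\ref{lemma:cores} does not certify such a $C_i$ as valid. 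The paper sidesteps this entirely: it never proves an invariant over iterations, but applies Lemma~\ref{lemma:cores} only to the cores \emph{retained by the splitting algorithm}, which by construction have both splits with $\Stop_{local}$ false. Your machinery about \Hm placeholder clusters and merge orderings is therefore unnecessary for this half.

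The reverse inclusion is where the paper's real work lies, and your sketch is missing its key idea. You argue that every centralized cluster $C'$ eventually ``appears as a core'' and that ``the stopping check cannot falsely pass before $C'$ has formed,'' citing monotonicity --- but monotonicity runs the wrong way for that: it says a clustering that is stopped remains stopped under further merging, so it cannot by itself rule out Algorithm~\ref{algo:stop} returning true while the current cores are still strictly finer than $\C_{central}$. The paper instead argues by contradiction on the final output: if $C_{distributed} \subsetneq C_{central}$, then since $C_{distributed}$ is valid the centralized algorithm also built it and merged it with some $C'_{central}$; a three-way case analysis on why the distributed run left $C_{distributed}$ alone reduces to the case where it was blocked by a stopped cluster $C'_{distributed}$, and a separate argument (using the shared nearest-neighbor singleton $\{v\} \in \nbrs(C_{distributed})$ and disjointness of the centralized output) shows $C'_{distributed} \subseteq C'_{central}$, whence monotonicity of $\Stop_{local}$ yields the contradiction. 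That containment-of-the-blocking-cluster step is the crux of the theorem and is absent from your plan; without it, or a genuinely worked-out substitute for your ``cannot stop too early'' claim, the proof does not close.
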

\begin{proof} Let $\C_{central}$ be the clustering output by  Algorithm~\ref{algo:central}. Let $\C_{distributed}$ be the clustering output by Algorithm~\ref{algo:distributed:clustering}.  We show the result in two parts.

First, for any cluster $C_{distributed} \in \C_{distributed}$, there exists a cluster $C_{central} \in \C_{central}$ such that $C_{distributed} \subseteq C_{central}$.  Since Algorithm~\ref{algo:distributed:clustering} uses the splitting algorithm~\ref{algo:split}, it outputs only cores having cluster splits $C_l,C_r$ for which $\Stop_{local}(C_l)$ and $\Stop_{local}(C_r)$ equal false. Thus we can invoke Lemma~\ref{lemma:cores} on $C_{distributed}$ to prove that $C_{distributed}$ is valid, and the existence of $C_{central}$ such that $C_{distributed} \subseteq C_{central}$. 
 
Having shown that $C_{distributed} \subseteq C_{central}$, we now show that, in fact, $C_{distributed} = C_{central}$. Assume the contrary, i.e. $C_{distributed} \subset C_{central}$. Since $C_{distributed}$ is valid, even the centralized algorithm constructed $C_{distributed}$ some time during its execution, and then merged it with some other cluster, say $C'_{central}$. 

Since $C_{distributed}$ is in the output of Algorithm~\ref{algo:distributed:clustering}, then three cases are pospsible: (i) $C_{distributed}$ forms a connected component by itself, disconnected from the rest of the graph, or (ii) $\Stop_{local}(C_{distributed})$ is true, and the algorithm stops because of the stopping condition,  or (iii) $\Stop_{local}(C_{distributed})$ is false, but it merges with some cluster $C'_{distributed}$ for which $\Stop_{local}(C'_{distributed})$ is true. In the first two cases, even the centralized algorithm can not merge $C_{distributed}$ with any other cluster, contradicting that $C_{distributed} \subset C_{central}$. 

For case (iii), we show below that in fact $C'_{distributed} \subseteq C'_{central}$. Since the central algorithm merges $C'_{central}$ with $C_{central}$, $\Stop_{local}(C'_{central})$ has to be false. Since $\Stop_{local}$ is monotonic, and $C'_{distributed} \subseteq C'_{central}$, $\Stop_{local}(C'_{distributed})$ has to be false as well, contradicting the assumption made in case (iii). Thus we proved all three cases are impossible, contradicting our assumption of $C_{distributed} \subset C_{central}$. Hence $C_{distributed} = C_{central}$. 

Now we show that  $C'_{distributed} \subseteq C'_{central}$. Both $C'_{distributed}$ and $C'_{central}$ have to be closest to $C_{distributed}$, i.e. in $\nbrs(C_{distributed})$, in order to get merged with it in either the central or distributed algorithms. Denote $v$ to be the node, such that the singleton cluster $\{v\}$ is in $\nbrs(C_{distributed})$. Hence, by the property of \slc, both $C'_{distributed}$ and $C'_{central}$ must contain the node $v$. Since $C'_{distributed}$ is valid, there must be a cluster in the central algorithm's output containing it. Finally, since clusters in the output have to be disjoint, the cluster containing $C'_{distributed}$ has to be $C'_{central}$, and thus $C'_{distributed} \subseteq C'_{central}$.
\end{proof}

}
\end{document}